\numberwithin{equation}{section}
\newcommand{\R}{\mathbb{R}}
\DeclareMathOperator{\sn}{sn}
\DeclareMathOperator{\cn}{cn}
\DeclareMathOperator{\dn}{dn}
\newtheorem{remark}{Remark}[section]
\newtheorem{example}{Example}[section]
\newtheorem{proposition}{Proposition}[section]
\newtheorem{theorem}{Theorem}
\newtheorem{lemma}{Lemma}[section]
\newtheorem{corollary}{Corollary}
\begin{document}
	
\title[Kink breathers in the defocusing mKdV equation]{\bf Kink breathers on a traveling wave background \\in the defocusing modified Korteweg--de Vries equation}

\author{Lynnyngs Kelly Arruda}
\address[L. K. Arruda]{Departamento de Matem\'{a}tica, 
	Universidade Federal de S\~ao Carlos, S\~ao Carlos, S\~ao Paulo, 13 565 905, Brazil}
\email{lynnyngs@ufscar.br} 
	
\author{Dmitry E. Pelinovsky}
\address[D. E. Pelinovsky]{Department of Mathematics and Statistics, McMaster University, Hamilton, Ontario, Canada, L8S 4K1}
\email{pelinod@mcmaster.ca}

\begin{abstract}
	We characterize a general traveling periodic wave of the defocusing mKdV (modified Korteweg--de Vries) equation by using a quotient of products of Jacobi's elliptic theta functions. Compared to the standing periodic wave of the defocusing NLS (nonlinear Schr\"{o}dinger) equation, these solutions are special cases of Riemann's theta function of genus two. Based on our characterization, we derive a new two-parameter solution form which defines a general three-parameter solution form with the scaling transformation. Eigenfunctions of the Lax system for the general traveling periodic wave are also characterized as quotients of products of Jacobi's theta functions. As the main outcome of our analytical computations, we derive a new solution of the defocusing mKdV equation which describes the kink breather propagating on a general traveling wave background. 
\end{abstract}

\maketitle

\section{Introduction}

Dispersive shock waves (DSWs) arise in the wave dynamics on the infinite line from initial data with different constant boundary conditions at different infinities \cite{El,Hoefer}. Dynamics of DSWs is affected by the interaction with solitary waves and other localized perturbations \cite{Maiden,Mao,Sande,Spenger}.
Since DSWs are modeled as modulations of the traveling periodic waves, dynamics of solitary waves on the traveling periodic wave background have been recently studied for many integrable equations arising in the physics of fluids, optics, and plasmas \cite{ACEHL,SPP23}. 

Dynamics of the Korteweg--de Vries (KdV) equation has been considered in 
\cite{Bertola,Congy,Girotti,HMP23}, where it was found that the two basic propagations of solitary waves on the elliptic (cnoidal wave) background are represented by the bright (elevation) and dark (depression) profiles. Such time-periodic interactions of a spatially decaying wave and a spatially periodic wave are termed as {\em breathers} of the KdV equation \cite{HMP23}. Similar bright and dark breathers appear on the traveling wave background in the Benjamin--Ono equation \cite{ChenPel24}, where exact solutions are expressed in elementary functions compared to the elliptic functions. Extension of the breathers for the more general equations of the KP hierarchy can be found in \cite{Kakei,LiZhang}.

Propagation of solitary waves on the unstable elliptic background was studied for the focusing nonlinear Schr\"{o}dinger (NLS) equation in \cite{Biondini,Biondini2}, where rogue waves were shown to arise due to the modulational instability of both dnoidal and cnoidal waves \cite{CPW,Feng}. Similar propagations of solitary waves for the focusing modified KdV (mKdV) equation were studied in \cite{GravaSIMA,Grava,LingSAPM}. However, compared to the focusing NLS equation, the dnoidal wave is modulationally stable and supports stable propagation of bright breathers, whereas the cnoidal wave is modulationally unstable and supports rogue waves \cite{CP2018,CP2019}. 

For the defocusing NLS equation, the elliptic (snoidal wave) background is modulationally stable and dark breathers have been constructed in \cite{Ling,Shin,Takahashi}. Similarly, the snoidal wave is modulationally stable in the defocusing mKdV equation and dark breathers have been constructed in \cite{MP24}. It is rather interesting that the exact expressions of the dark breathers are different between the defocusing NLS and mKdV equations.  

{\em The purpose of this manuscript is to address the open problem arising in the construction of breathers in the defocusing mKdV equation posed in \cite{MP24}.}

Although the NLS and mKdV equations share the same spectral problem in the Lax system \cite{Ablowitz}, their traveling wave solutions are different. A general family of traveling wave solutions of the defocusing NLS equation is given by the elliptic functions which correspond to Riemann's theta function of genus one. These complex-valued solutions only give the snoidal wave of the defocusing mKdV equation since solutions of the mKdV equation are real-valued. However, the snoidal wave is not the most general traveling wave solution of the mKdV equation. The general solutions arise as the elliptic degeneration of Riemann's theta function of genus two. If the Lax spectrum of the snoidal wave contains only two bandgaps symmetrically relative to the origin, the Lax spectrum of the general traveling wave contains three bandgaps symmetrically relative to the origin. Dark breathers constructed in \cite{MP24} correspond to eigenvalues placed in the two bandgaps associated with the snoidal wave. It remained open in \cite{MP24} how to construct the kink breathers which correspond to eigenvalues placed in the central bandgap of the three bandgaps. Such kink breathers arise naturally in the previous numerical simulations of the defocusing mKdV equation \cite{Hoefer,Sande}. Kink breathers can be thought to represent heteroclinic connections between the traveling periodic wave of different polarities similar to the solutions constructed in \cite{Sprenger1,Sprenger2} in  non-integrable models.

Our approach to construct kink breathers is to express a general traveling wave as a quotient of products of Jacobi's elliptic theta functions, after which eigenfunctions of the Lax system are also expressed as quotients of products of Jacobi's theta functions. We show that zeros and poles of the first factorization are found uniquely in terms of parameters of the traveling wave solutions. However, we also show that zeros and poles of the second factorization (for eigenfunctions of the Lax system) are not found uniquely for a general value of the spectral parameter. Nevertheless, they are found uniquely if the spectral parameter is at the origin and we use the explicit factorization to construct the particular (symmetric) kink breather in a closed form with all parameters explicitly expressed in terms of parameters of the traveling wave solutions. This construction gives a novel solution of the mKdV equation in the form which is useful for interpretation of the numerical and laboratory experiments modeled by the defocusing mKdV equation. 

The paper is organized as follows. The main mathematical results on the characterization of a general traveling elliptic wave, its eigenfunctions for the spectral parameter at the origin, and the kink breathers are presented in Section \ref{sec-2}. Technical details of the proof of the main results can be found in Section \ref{sec-3}, where we also review other expressions for the general traveling elliptic wave, and in Section \ref{sec-4}. The concluding Section \ref{sec-5} poses an open question on 
the general characterization of eigenfunctions for nonzero values of the spectral parameter. Appendix \ref{app-A} reviews the unique parameterization of eigenfunctions of the Lax system for the snoidal wave of genus one which was used in \cite{B,MP24,Shin,Takahashi}.

\section{Main results}
\label{sec-2}

We consider the defocusing mKdV equation in the normalized form
\begin{equation}
\label{mkdv}
u_t-6u^2u_x+u_{xxx}=0,
\end{equation}
where $(x,t)\in \mathbb{R}\times\mathbb{R}$ and $u=u(x,t) \in \R$. As is well-known since the pioneering work \cite{Ablowitz}, classical solutions of the mKdV equation (\ref{mkdv}) arise as a compatibility condition $\partial_x \partial_t \varphi = \partial_t \partial_x \varphi$ of the following Lax system of linear equations for the eigenfunction $\varphi \in C^2(\mathbb{R} \times \mathbb{R},\mathbb{C}^2)$,
\begin{equation}
\label{LS}
\partial_x \varphi = U(\zeta,u) \varphi, \quad  
\partial_t \varphi = V(\zeta,u) \varphi, 
\end{equation}
where 
\begin{align*}
U(\zeta,u) &=\left(\begin{array}{ll} i\zeta & u\\ u & -i\zeta\end{array} \right), \\
V(\zeta,u)  &= \left( \begin{array}{ll} 4i\zeta^3+2i\zeta u^2 & 4\zeta^2u-2i\zeta u_x+2u^3-u_{xx}\\ 4\zeta^2u+2i\zeta u_x+2u^3-u_{xx} & -4i\zeta^3-2i\zeta u^2 \end{array}\right),
\end{align*}
and $\zeta \in \mathbb{C}$ is the spectral parameter. In what follows, we present the main results on the characterization of the general traveling periodic wave of the mKdV equation (\ref{mkdv}), eigenfunctions 
of the Lax system (\ref{LS}) for $\zeta = 0$, and the kink breathers. We shall use the normalized Jacobi's elliptic theta functions:
 \begin{align*}
\left\{ \begin{array}{l} 
\theta_1(y) = 2 \sum\limits_{n=1}^{\infty} (-1)^{n-1} q^{(n-\frac{1}{2})^2} \sin(2n-1) y, \\
\theta_4(y) = 1 + 2 \sum\limits_{n=1}^{\infty} (-1)^{n} q^{n^2} \cos 2n y,
\end{array} \right.
\end{align*}
where $q := e^{-\frac{\pi K'(k)}{K(k)}}$ with $K(k)$ being the complete elliptic integral and $K'(k) = K(k')$ with $k' = \sqrt{1-k^2}$. See \cite{A1990,B,BF1971,GR,Lawden,W} for review of elliptic functions. 
We use notations
\begin{equation}
\label{Jacobi-theta}
H(x) = \theta_1\left(\frac{\pi x}{2 K(k)} \right), \quad \Theta(x) = \theta_4\left(\frac{\pi x}{2 K(k)}\right), 
\end{equation}
and drop the dependence of the elliptic functions on $k \in (0,1)$ if it does not cause a confusion.

\subsection{Characterization of the general traveling periodic wave}

Travelling waves of the mKdV equation (\ref{mkdv}) are written in the form $u(x,t)=\phi(x+ct)$, where the real-valued profile $\phi$ satisfies the third-order equation 
\begin{equation}
\label{first}
    \phi'''-6\phi^2\phi'+c\phi'=0.
\end{equation}
Integration of (\ref{first}) yields the second-order equation 
\begin{equation}
\phi'' - 2 \phi^3 + c \phi = b, 
\label{second}
\end{equation}
where $b$ is the integration constant. Multiplying (\ref{second}) by $\phi'$ and integrating, we obtain the first-order invariant in the form: 
\begin{equation}
\label{third}
( \phi')^2 = Q(\phi), \qquad Q(\phi) := \phi^4 - c\phi^2 + 2b \phi + 2 d,
\end{equation}
where $d$ is another integration constant. We note that  equations (\ref{first}), (\ref{second}), and (\ref{third}) are written for $\phi = \phi(x)$, where $x$ stands for the traveling wave coordinate $x + ct$.

\begin{remark}
	\label{rem-scaling}
The following scaling transformation $(\tilde{\phi};\tilde{b},\tilde{c},\tilde{d}) \mapsto (\phi;b,c,d)$ given by 
\begin{equation}
\label{scal-transf}
\phi(x) = a \tilde{\phi}(a x), \quad c = a^2 \tilde{c}, \quad 
b = a^3 \tilde{b}, \quad d = a^4 \tilde{d},
\end{equation}
leaves solutions of the system (\ref{first}), (\ref{second}), and (\ref{third}) invariant with an arbitrary parameter $a \in \mathbb{R}$. This transformation can be used to reduce the number of independent parameters $(b,c,d)$ by one. 
\end{remark}

\begin{example}
	\label{ex-snoidal}
If $b = 0$, there exists the following one-parameter family of the periodic solutions of the system (\ref{first}), (\ref{second}), and (\ref{third})
\begin{equation}
\label{sn_potential}
\phi(x) = k\sn(x,k),\quad c = 1+k^2, \quad b = 0, \quad d = \frac{1}{2} k^2,
\end{equation}	
generated by the elliptic modulus $k \in (0,1)$. A two-parameter family of solutions of the system (\ref{first}), (\ref{second}), and (\ref{third}) for $b = 0$ is obtained with the scaling transformation (\ref{scal-transf}).
\end{example}

The first main result of this work is to represent the general periodic solution of the system (\ref{first}), (\ref{second}), and (\ref{third}) for $b \neq 0$ in the analytical form which involves the quotient of the product of Jacobi elliptic functions (\ref{Jacobi-theta}). 

\begin{theorem}
	\label{theorem-wave}
	Bounded periodic (nonconstant) solutions of the second-order equation (\ref{second}) exist if and only if $(b,c) \in \Omega$, where $\Omega$ is given by 
\begin{equation}
\label{domain}
\Omega = \{ (b,c) : \quad b \in (-b_c,b_c), \;\; c > 0\}, \qquad 
b_c := \frac{\sqrt{2 c^3}}{\sqrt{27}}.
\end{equation}	
These solutions for $b \in (0,b_c)$ and $c > 0$ can be uniquely parameterized by the real parameters $(\zeta_1,\zeta_2,\zeta_3)$ satisfying $0 < \zeta_3 < \zeta_2 < \zeta_1$ with 
\begin{equation}
\label{parameterization}
\begin{cases}
b = 4 \zeta_1 \zeta_2 \zeta_3, \\
c = 2(\zeta_1^2 + \zeta_2^2 + \zeta_3^2), \\
d = \frac{1}{2}(\zeta_1^4 + \zeta_2^4 + \zeta_3^4) - \zeta_1^2 \zeta_2^2 - \zeta_1^2 \zeta_3^2 - \zeta_2^2 \zeta_3^2.
\end{cases}
\end{equation}
If $\zeta_1 \neq \zeta_2 + \zeta_3$, the periodic profile $\phi$ is given explicitly by 
\begin{equation}
\label{gen-theta}
\phi(x) = (\zeta_1 - \zeta_2 - \zeta_3)  \frac{H(\nu x-\beta) H(\nu x + \beta)}{\Theta(\nu x-\alpha) \Theta(\nu x+\alpha)} \frac{\Theta^2(\alpha)}{H^2(\beta)}, 
\end{equation}
where $\nu > 0$, $k \in (0,1)$, $\alpha \in (0,K)$, and $\beta \in (0,K) \times i (0,K')$ are uniquely expressed by 
\begin{equation}
\label{parameters-nu-k-again}
\nu = \sqrt{\zeta_1^2 - \zeta_3^2},  \quad
k = \sqrt{\dfrac{\zeta_1^2 - \zeta_2^2}{\zeta_1^2 - \zeta_3^2}}, 
\end{equation}
and 
\begin{equation}
\label{parameters-nu-k-alpha}
\sn(\alpha) = \sqrt{\dfrac{\zeta_1 - \zeta_3}{\zeta_1 + \zeta_2}}, \quad 
\sn(\beta) = \sqrt{\frac{(\zeta_1 + \zeta_3)(\zeta_1 - \zeta_2 - \zeta_3)}{
		(\zeta_1 - \zeta_2) (\zeta_1 + \zeta_2 + \zeta_3)}}, 
\end{equation}
such that $\beta$ is real for $\zeta_1 > \zeta_2 + \zeta_3$ and purely imaginary for $\zeta_1 < \zeta_2 + \zeta_3$. If $\zeta_1 = \zeta_2 + \zeta_3$, the periodic profile $\phi$ is given by 
\begin{equation}
\label{wave-deg}
\zeta_1 = \zeta_2 + \zeta_3: \qquad 
\phi(x) = \frac{2 (\zeta_2 + \zeta_3) \zeta_3 \sn^2(\nu x) \Theta^2(\nu x) \Theta^2(\alpha)}{(\zeta_2 + 2 \zeta_3) \Theta(\nu x- \alpha) \Theta(\nu x + \alpha) \Theta^2(0)},	
\end{equation}
with the same $\nu > 0$, $k \in (0,1)$, and $\alpha \in (0,K)$.
\end{theorem}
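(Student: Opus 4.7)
The plan is to handle Theorem \ref{theorem-wave} in three steps. Step (i) derives the existence domain $\Omega$: I would read (\ref{second}) as the Newtonian system $\phi'' = -V'(\phi)$ with potential $V(\phi) = -\tfrac12\phi^4 + \tfrac{c}{2}\phi^2 - b\phi$, so that (\ref{third}) is its energy conservation law. A bounded nonconstant periodic orbit exists if and only if $V$ has a local minimum bracketed by two local maxima, which happens iff the cubic $V'(\phi) = -2\phi^3 + c\phi - b$ has three distinct real roots. A direct discriminant computation yields the condition $c > 0$ and $b^2 < 2c^3/27$, i.e.\ $(b,c) \in \Omega$.

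Step (ii) builds the bijective parameterization (\ref{parameterization}). A bounded orbit forces the quartic $Q(\phi) = \phi^4 - c\phi^2 + 2b\phi + 2d$ to have four real simple roots summing to zero (the missing cubic term). This sum-zero condition allows the factorization $Q = (\phi^2 + p\phi + q)(\phi^2 - p\phi + r)$, and matching Vieta coefficients produces the resolvent cubic $P^3 - 2cP^2 + (c^2 - 8d)P - 4b^2 = 0$ in $P = p^2$. Substituting $P_i = 4\zeta_i^2$ and expanding symmetric polynomials shows that the relations (\ref{parameterization}) are equivalent to $(4\zeta_1^2, 4\zeta_2^2, 4\zeta_3^2)$ being its three roots, so that the four roots of $Q$ emerge as the combinations $\varepsilon_1 \zeta_1 + \varepsilon_2 \zeta_2 + \varepsilon_3 \zeta_3$ with $\varepsilon_i \in \{\pm 1\}$ and $\varepsilon_1\varepsilon_2\varepsilon_3 = -1$. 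The ordering $0 < \zeta_3 < \zeta_2 < \zeta_1$ corresponds to $b \in (0, b_c)$, with the bounded orbit running between $R_2 = -\zeta_1 + \zeta_2 + \zeta_3$ and $R_3 = \zeta_1 - \zeta_2 + \zeta_3$; the remaining case $b \in (-b_c, 0)$ is handled by the symmetry $\phi \mapsto -\phi$.

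Step (iii) establishes the theta-function formula and the degenerate limit. I integrate $(\phi')^2 = Q(\phi)$ by a standard Möbius substitution reducing the quartic under the radical to Legendre form, expressing $\phi$ as a rational function of $\sn(\nu x, k)$; reading $\nu$ and $k$ from the cross-ratio of the four roots reproduces (\ref{parameters-nu-k-again}). Viewed on the complex $x$-plane, $\phi$ is an order-two elliptic function with two simple zeros at $\nu x = \pm\beta$ and two simple poles at $\nu x = \pm\alpha$ inside one fundamental parallelogram; reading these locations from the rational-in-$\sn$ form produces (\ref{parameters-nu-k-alpha}). The identity (\ref{gen-theta}) then follows because any order-two elliptic function with prescribed simple zeros at $\pm\beta$ and simple poles at $\pm\alpha$ equals, up to a multiplicative constant, the Jacobi product $H(\cdot-\beta)H(\cdot+\beta)/[\Theta(\cdot-\alpha)\Theta(\cdot+\alpha)]$, with the overall prefactor pinned down by matching $\phi$ at one convenient point. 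Whether $\beta$ is real or purely imaginary is dictated by whether $\phi$ vanishes on the real axis, which happens iff $\zeta_1 > \zeta_2 + \zeta_3$. In the degenerate case $\zeta_1 = \zeta_2 + \zeta_3$, one has $\beta \to 0$ and (\ref{gen-theta}) becomes a $0/0$ form, which I resolve via the Taylor expansions $H(\nu x \pm \beta) = H(\nu x) \pm \beta H'(\nu x) + O(\beta^2)$ and $H(\beta) = H'(0)\beta + O(\beta^3)$, together with the identity $\sqrt{k}\,\sn(\nu x) = H(\nu x)/\Theta(\nu x)$, to land on (\ref{wave-deg}).

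The main obstacle is in step (iii): even after $\phi$ is written as a rational function of $\sn(\nu x, k)$, its zeros and poles are naturally identified only modulo the period lattice, and pinning them down inside a single fundamental parallelogram while checking that the proposed theta-function ratio transforms correctly under both period shifts --- so that it actually equals $\phi$ rather than differing by a quasi-periodic phase --- requires careful bookkeeping. This is what ultimately fixes the explicit formulas (\ref{parameters-nu-k-alpha}) and produces the real-versus-imaginary dichotomy for $\beta$.
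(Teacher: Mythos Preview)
Your proposal is essentially correct and follows the same overall architecture as the paper: existence via the discriminant of the cubic $V'$, explicit solution via reduction to $\sn$, then the theta-function representation via zero/pole factorization of an order-two elliptic function (the paper organizes this as Lemmas~\ref{lem-1}, \ref{lem-3}, \ref{lem-5}, and \ref{lem-6}).

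The one genuine difference is your Step~(ii). The paper introduces the $\zeta_i$ as roots of the characteristic polynomial $P(\zeta)$ of the Lax pair (equations (\ref{char-poly})--(\ref{rP})) and then verifies the relations (\ref{parameterization}) by explicitly matching (\ref{rP}) against the Vieta relations (\ref{rG}) for $Q$ through the substitution (\ref{relation-roots}), computing a Jacobian to confirm invertibility. Your resolvent-cubic route is more elementary and self-contained: it never touches the Lax machinery and produces the same parameterization directly, with bijectivity essentially for free since the roots of a cubic are determined up to ordering. The paper's approach has the advantage of explaining \emph{why} the $\zeta_i$ are natural (they are the band edges of the Lax spectrum), which matters downstream for the breather construction, but for the purpose of Theorem~\ref{theorem-wave} alone your argument is cleaner.

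One point to tighten in Step~(iii): you write that $\phi$ has simple poles at $\nu x = \pm\alpha$, but $\phi$ is bounded on the real axis, so its poles must lie off it. In fact they sit at $\nu x = \pm(iK'+\alpha)$ --- this is exactly why the denominator in (\ref{gen-theta}) uses $\Theta$ (whose zero is at $iK'$) rather than $H$ (whose zero is at $0$). The paper pins this down in Lemma~\ref{lem-5} by solving $\sn^2(z) = (\zeta_1+\zeta_3)/(\zeta_1-\zeta_2) > 1$ via the shift $z \mapsto z + iK'$. Your identification of the quasi-periodicity bookkeeping as the main obstacle is accurate, and tracking this $iK'$ shift is precisely where it enters; once that is fixed the rest of your outline goes through.
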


Figure \ref{fig-0} displays the existence domain $\Omega$ in (\ref{domain}) by the green shaded area between the two boundaries (blue solid lines). The periodic solution in Example \ref{ex-snoidal} corresponds to the line $b = 0$, $c > 0$ (red dotted line). The periodic solution in Theorem \ref{theorem-wave} corresponds to the region in the upper half-plane between the blue solid line and the red dotted line. 

\begin{figure}[htb!]
	\includegraphics[width=10cm,height=8cm]{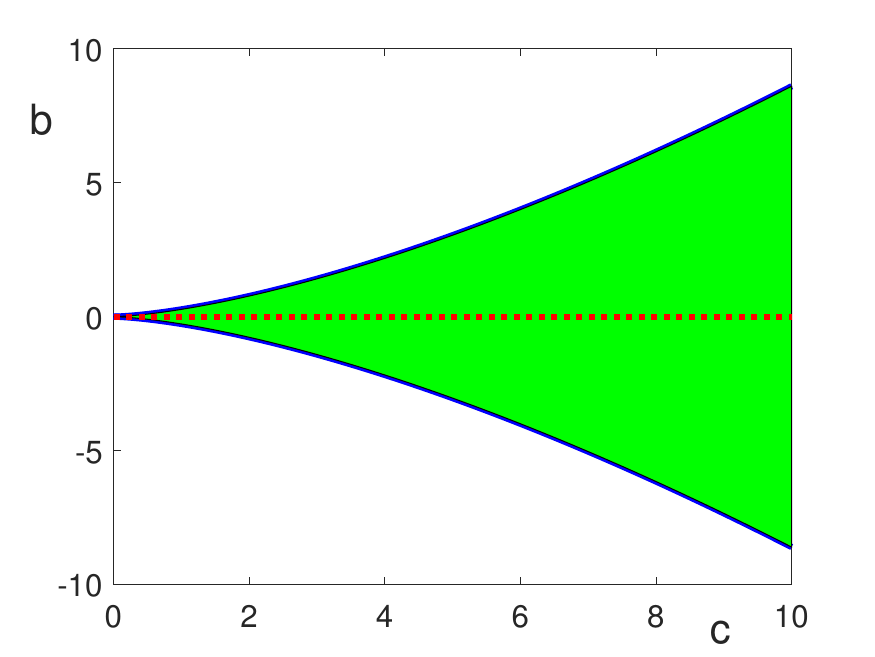}
	\caption{Domain $\Omega$ (green shaded area) between the two boundaries  (blue solid lines). Example \ref{ex-snoidal} corresponds to the red dotted line. }
	\label{fig-0}
\end{figure}

Figure \ref{fig-3} shows profiles $\phi$ of the periodic solutions of Theorem \ref{theorem-wave} for two choices of parameters $(\zeta_1,\zeta_2,\zeta_3)$ with  $\zeta_1 > \zeta_2 + \zeta_3$ (left) and $\zeta_1 < \zeta_2 + \zeta_3$ (right). Red dots show zeros of $\phi$ at $x = \pm \nu^{-1} \beta$ if $\beta \in (0,K)$ is real. The solution is positive-definite for $b \in (0,b_c)$ if $\beta = i (0,K')$ is purely imaginary.

\begin{remark}
	Since $H(x) = \sqrt{k} {\rm sn}(x) \Theta(x)$, see  \cite[(2.1.1)]{Lawden}, the snoidal wave (\ref{sn_potential}) can be rewritten as the following quotient:
	\begin{equation}
	\label{sn-theta}
	\phi(x) = \sqrt{k} \frac{H(x)}{\Theta(x)}.
	\end{equation}
	This solution is not expressed as a limiting case of the general solution (\ref{gen-theta}) as $\zeta_3 \to 0$ but can be derived from (\ref{parameterization}), (\ref{gen-theta}), (\ref{parameters-nu-k-again}), and (\ref{parameters-nu-k-alpha}) with $\zeta_3 = 0$ by using the Landen transformation \cite{Lawden} which shows that the definition of elliptic modulus $k$ is different between (\ref{gen-theta}) and (\ref{sn-theta}), see Example \ref{ex-particular-sol}. This illustrates that the snoidal wave (\ref{sn_potential}) is a particular case of Riemann's theta function of genus one, whereas the general elliptic wave (\ref{gen-theta}) with $0 < \zeta_3 < \zeta_2 < \zeta_1$ is a particular case of Riemann's theta function of genus two. 
	\label{rem-genus-two}
\end{remark}

\begin{figure}[htb!]
	\includegraphics[width=7.5cm,height=5cm]{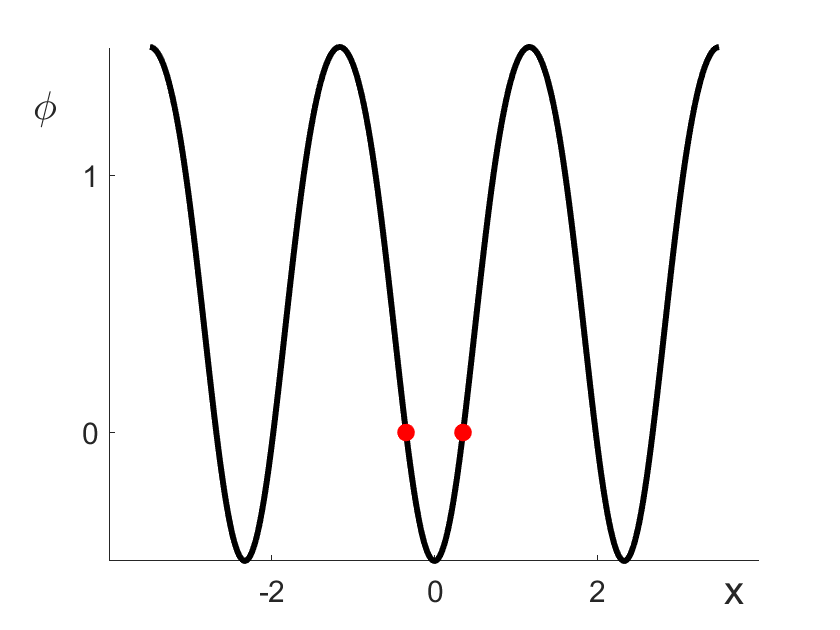}
	\includegraphics[width=7.5cm,height=5cm]{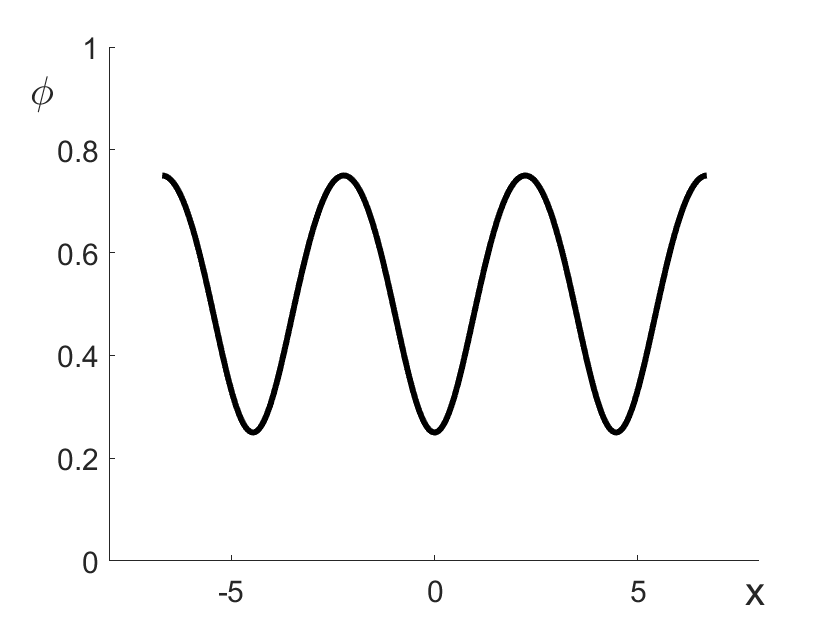}
	\caption{Profiles of $\phi$ versus $x$ for $\zeta_1 = 2$, $\zeta_2 = 1$,  $\zeta_3 = 0.5$ (left) and  $\zeta_1 = 1$, $\zeta_2 = 0.75$,  $\zeta_3 = 0.5$ (right).}
	\label{fig-3}
\end{figure}

Figure \ref{fig-1} displays the Lax spectrum for the snoidal wave (\ref{sn_potential}) (left) and for the general elliptic wave (\ref{gen-theta}) (right). The Lax spectrum is defined as the admissible set of the spectral parameter $\zeta$ for the Lax system (\ref{LS}) for which the eigenfunction $\varphi = \varphi(x,t)$ are bounded functions of $x$ on $\mathbb{R}$ for every $t \in \mathbb{R}$. The Lax spectrum for the defocusing mKdV equation (\ref{mkdv}) is a subset of $\mathbb{R}$. The two-gap spectrum on the left panel versus the three-gap spectrum on the right panel illustrates Remark \ref{rem-genus-two} on the difference between (\ref{sn_potential}) and (\ref{gen-theta}) as the genus-one and genus-two elliptic potentials, respectively.

\begin{figure}[htb!]
\includegraphics[width=7.5cm,height=6cm]{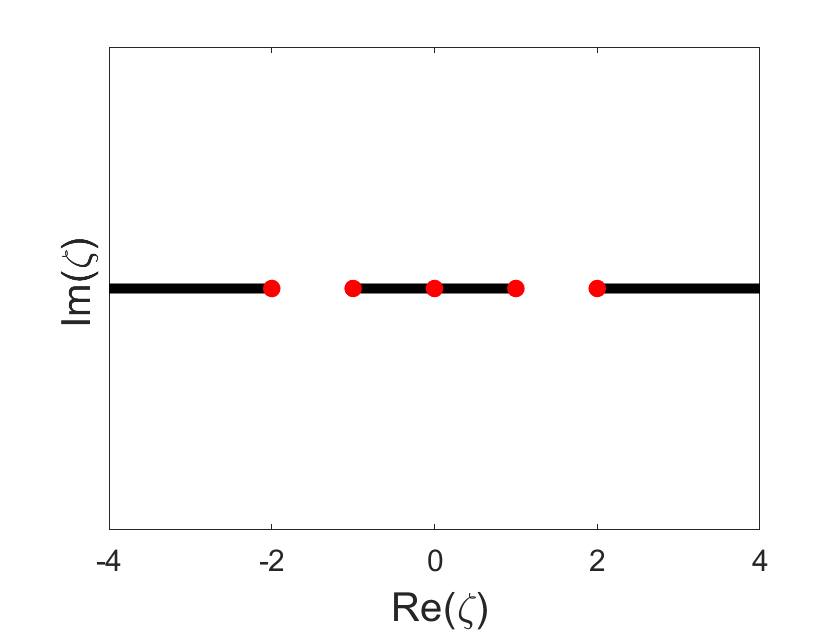}
\includegraphics[width=7.5cm,height=6cm]{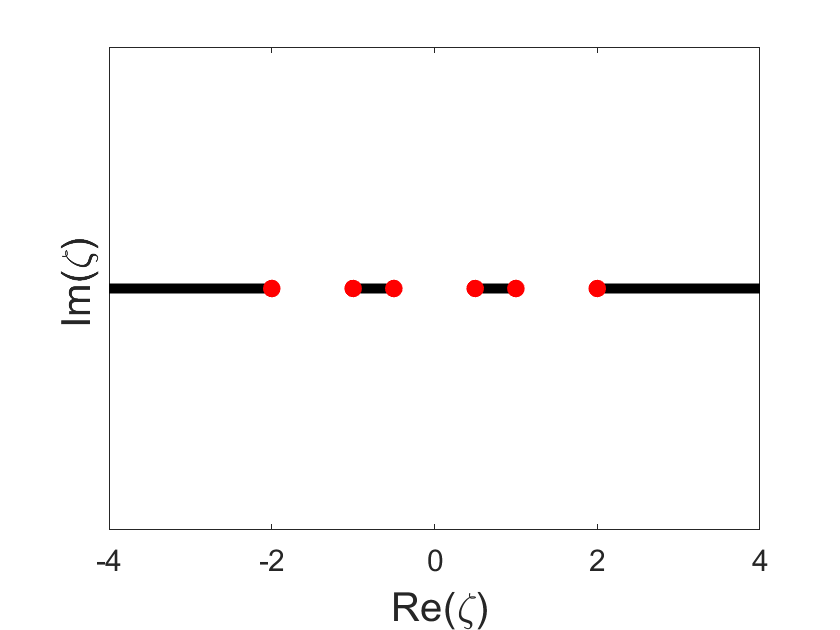}
\caption{Lax spectrum for $\zeta_1 = 2$, $\zeta_2 = 1$, and $\zeta_3 = 0$ (left) and for $\zeta_1 = 2$, $\zeta_2 = 1$, and $\zeta_3 = 0.5$ (right). Red dots show location of $\pm \zeta_1$, $\pm \zeta_2$, and $\pm \zeta_3$. }
\label{fig-1}
\end{figure}

\begin{example}
	\label{ex-hyperbolic-theta}
	Bounded periodic (nonconstant) solutions of Theorem \ref{theorem-wave} are defined for a fixed $\zeta_3 > 0$ and $\zeta_2 \in (\zeta_3,\zeta_1)$. If $\zeta_1 = \zeta_2$, then $k = 0$ and the solution becomes constant: $\phi(x) = \zeta_3$. If $\zeta_2 = \zeta_3$, then $k = 1$ and the solution becomes non-periodic (homoclinic) on the infinite line since $\lim\limits_{k \to 1} K(k) = \infty$. 
	The solution form (\ref{gen-theta}) for $k = 1$ is expressed in terms of the hyperbolic functions as 
\begin{equation}
\label{gen-theta-hyperbolic}
\phi(x) = (\zeta_1 - 2\zeta_2)  \frac{\sinh(\nu x-\beta) \sinh(\nu x + \beta)}{\cosh(\nu x-\alpha) \cosh(\nu x+\alpha)} \frac{\cosh^2(\alpha)}{\sinh^2(\beta)}, 
\end{equation}
where $\nu > 0$, $\alpha > 0$, and $\beta \in (0,\infty) \times i \left(0,\frac{\pi}{2}\right)$ are uniquely expressed by 
\begin{equation}
\label{parameters-nu-k-alpha-hyperbolic}
\nu = \sqrt{\zeta_1^2 - \zeta_2^2},  \quad
\tanh(\alpha) = \sqrt{\dfrac{\zeta_1 - \zeta_2}{\zeta_1 + \zeta_2}},  \quad 
\tanh(\beta) = \sqrt{\dfrac{(\zeta_1 + \zeta_2)(\zeta_1 - 2 \zeta_2)}{(\zeta_1 - \zeta_2)(\zeta_1 + 2 \zeta_2)}},
\end{equation}	
such that $\beta$ is real for $\zeta_1 > 2 \zeta_2$ and purely imaginary for $\zeta_1 < 2 \zeta_2$. The Lax specrum for the hyperbolic solution (\ref{gen-theta-hyperbolic}) corresponds to $(-\infty,-\zeta_1] \cup \{ -\zeta_2 \} \cup \{ \zeta_2\} \cup [\zeta_1,\infty)$.
\end{example}

Various solution forms of the traveling periodic waves have been used in the literature on the defocusing mKdV equation (\ref{mkdv}). One popular parameterizaton of solutions, e.g., used in \cite{Hoefer,Kam4}, is given by 
\begin{equation}
\label{form-2-intro}
\phi(x) = \dfrac{2 (\zeta_1 + \zeta_3)(\zeta_2 + \zeta_3)}{(\zeta_1 + \zeta_3) -(\zeta_1 - \zeta_2){\rm sn}^2(\nu x)} - \zeta_1 - \zeta_2 - \zeta_3,
\end{equation}
where $(\zeta_1,\zeta_2,\zeta_3)$ are the same as in (\ref{parameterization}) and $(\nu,k)$ are the same as in (\ref{parameters-nu-k-again}). However, as we show in Example \ref{ex-best-choice}, the constraint $\nu = 1$ suggested by the scaling transformation (\ref{scal-transf}) leads to non-unique choice of $(\zeta_1,\zeta_2,\zeta_3)$ for a given point $(b,c)$ in the existence region $\Omega$.

As a corollary of Theorem \ref{theorem-wave}, we present a novel  parameterization of the periodic profile $\phi$ which is independent on whether $\zeta_1 > \zeta_2 + \zeta_3$ or $\zeta_1 < \zeta_2 + \zeta_3$. This solution form is generated by two arbitrary parameters $\alpha \in (0,K)$ and $k \in (0,1)$, whereas the third arbitrary parameter $\zeta_1 \in (0,\infty)$ is generated by the scaling transformation (\ref{scal-transf}). The novel two-parameter representation gives all periodic solutions of the system (\ref{first}), (\ref{second}), and (\ref{third}) for $b \in (0,b_c)$ and $c > 0$ in the same way as the one-parameter solution form (\ref{sn_potential}) in Example \ref{ex-snoidal} gives all periodic solutions of the system (\ref{first}), (\ref{second}), and (\ref{third}) with $b = 0$. 

\begin{corollary}
	\label{cor-wave}
	For every $(\zeta_1,\zeta_2,\zeta_3) \in \R^3$ satisfying $\zeta_3 < \zeta_2 < \zeta_1$ in Theorem \ref{theorem-wave}, there exists a unique choice for $(\zeta_1,\alpha,k)$ in $(0,\infty) \times (0,K) \times (0,1)$ in the transformation 
		\begin{align}
		\label{transformation-nu}
\left\{ 
\begin{array}{l} 
\zeta_1 = \zeta_1, \\
\zeta_2 = \zeta_1 \dn(2 \alpha), \\ 
\zeta_3 = \zeta_1 \cn(2\alpha), \end{array}
\right.
\end{align}	
which also implies $\nu = \zeta_1 \sn(2\alpha)$ in (\ref{parameters-nu-k-again}). For $\zeta_1 = 1$, the elliptic profile $\phi$ in Theorem \ref{theorem-wave} can be uniquely expressed by 
	\begin{equation}
	\label{form-9}
	\phi(x) = (\dn(2\alpha) + \cn(2\alpha)) \frac{1 + k^2 \sn^2(\alpha) \sn^2(\sn(2\alpha) x)}{1 - k^2 \sn^2(\alpha) \sn^2(\sn(2\alpha) x)} - 1,
	\end{equation}	
	where $\alpha \in (0,K)$ and $k \in (0,1)$ are two arbirary parameters. 
\end{corollary}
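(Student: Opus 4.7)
The statement splits into two independent tasks: verifying that (\ref{transformation-nu}) implements a bijective change of parameters, and algebraically collapsing (\ref{gen-theta}) to (\ref{form-9}) after this reparameterization.

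For the first task, starting from a triple $0 < \zeta_3 < \zeta_2 < \zeta_1$ produced by Theorem \ref{theorem-wave}, I would define $k \in (0,1)$ directly by $k^2 = (\zeta_1^2-\zeta_2^2)/(\zeta_1^2-\zeta_3^2)$ in agreement with (\ref{parameters-nu-k-again}), and $\alpha$ as the unique solution in $(0,K)$ of $\cn(2\alpha) = \zeta_3/\zeta_1$. Uniqueness is immediate because $\cn(2\alpha)$ is positive only when $2\alpha \in (0,K)$ and is strictly decreasing on that interval. The second relation $\zeta_1 \dn(2\alpha) = \zeta_2$ is then automatic from the modulus identity $\dn^2(2\alpha) = 1 - k^2 + k^2 \cn^2(2\alpha)$ once the defining formula for $k^2$ is substituted, and $\sn^2(2\alpha) = 1 - \cn^2(2\alpha) = (\zeta_1^2 - \zeta_3^2)/\zeta_1^2$ yields $\nu = \zeta_1 \sn(2\alpha)$ via (\ref{parameters-nu-k-again}). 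The converse direction is immediate from the monotonicity $0 < \cn(2\alpha) < \dn(2\alpha) < 1$ on $\alpha \in (0,K/2)$.

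For the second task, I would use the scaling (\ref{scal-transf}) to reduce to $\zeta_1 = 1$ and apply the classical theta-product identities
\begin{align*}
H(u-v)H(u+v)\Theta^2(0) &= H^2(u)\Theta^2(v) - \Theta^2(u)H^2(v), \\
\Theta(u-v)\Theta(u+v)\Theta^2(0) &= \Theta^2(u)\Theta^2(v) - H^2(u)H^2(v),
\end{align*}
to both products in (\ref{gen-theta}). Dividing numerator and denominator by $\Theta^2(\nu x)\Theta^2(\alpha)$ and using $H(\cdot)/\Theta(\cdot) = \sqrt{k}\sn(\cdot)$ would collapse (\ref{gen-theta}) into the rational form
\begin{equation*}
\phi(x) = (1-\zeta_2-\zeta_3)\frac{\sn^2(\nu x) - \sn^2(\beta)}{\sn^2(\beta)\bigl[1 - k^2\sn^2(\alpha)\sn^2(\nu x)\bigr]}.
\end{equation*}
Substituting the formulas of (\ref{parameters-nu-k-alpha}) with $\zeta_1 = 1$ then yields the two telescoping identities $k^2\sn^2(\alpha) = (1-\zeta_2)/(1+\zeta_3)$ and $(1-\zeta_2-\zeta_3)/\sn^2(\beta) = k^2\sn^2(\alpha)(1+\zeta_2+\zeta_3)$. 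Inserting these and collecting constant and $\sn^2(\nu x)$ contributions reduces $\phi$ to
\begin{equation*}
\frac{k^2\sn^2(\alpha)(1+\zeta_2+\zeta_3)\sn^2(\nu x) - (1-\zeta_2-\zeta_3)}{1 - k^2\sn^2(\alpha)\sn^2(\nu x)},
\end{equation*}
and an elementary rearrangement using $\zeta_2+\zeta_3 = \dn(2\alpha)+\cn(2\alpha)$ recovers the right-hand side of (\ref{form-9}). The borderline case $\zeta_1 = \zeta_2+\zeta_3$ can then be handled either by a continuity argument in $(\alpha,k)$ or by a parallel direct computation starting from (\ref{wave-deg}), in which the theta-product identity converts the denominator of (\ref{wave-deg}) into the same factor $1 - k^2\sn^2(\alpha)\sn^2(\nu x)$.

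The main obstacle will be the algebraic collapse above. It requires using both telescoping identities at once and checking that they remain valid uniformly in the sign of $\sn^2(\beta)$, which is positive when $\zeta_1 > \zeta_2+\zeta_3$ but negative when $\zeta_1 < \zeta_2+\zeta_3$ (where $\beta$ is purely imaginary). Because the ratio $(1-\zeta_2-\zeta_3)/\sn^2(\beta)$ equals the manifestly real quantity $(1-\zeta_2)(1+\zeta_2+\zeta_3)/(1+\zeta_3)$ as a purely algebraic identity, the sign ambiguity cancels between numerator and denominator, and the unified expression (\ref{form-9}) applies to both subcases of Theorem \ref{theorem-wave}.
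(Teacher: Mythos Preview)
Your proposal is correct but takes a somewhat different route from the paper on both halves.

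For the bijection, the paper does not argue monotonicity: instead it first derives $\sn(\alpha),\cn(\alpha),\dn(\alpha)$ in terms of $(\zeta_1,\zeta_2,\zeta_3)$ from (\ref{parameters-nu-k-alpha}), uses the double-angle formula for $\sn(2\alpha)$ to obtain (\ref{transformation-nu}), and then computes the Jacobian $\partial(\zeta_1,\zeta_2,\zeta_3)/\partial(\zeta_1,\alpha,k)=-2k\zeta_1^2\sn^3(2\alpha)$ to conclude invertibility. Your monotonicity argument is more direct and in fact cleaner globally, since a nonvanishing Jacobian only gives local diffeomorphism. For the formula (\ref{form-9}), the paper bypasses the theta form (\ref{gen-theta}) entirely and instead substitutes (\ref{transformation-nu}) into the rational representation (\ref{form-2-intro}); after noting that $k^2\sn^2(\alpha)=(1-\dn(2\alpha))/(1+\cn(2\alpha))$, the algebra collapses in two lines. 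Your route through the theta-product identities reaches the same intermediate rational expression but with more work, and has the compensating advantage that it proceeds directly from the representation actually stated in Theorem~\ref{theorem-wave} rather than from the auxiliary form (\ref{form-2-intro}) established later in Section~\ref{sec-3}. Your observation that the ratio $(1-\zeta_2-\zeta_3)/\sn^2(\beta)$ is a sign-insensitive algebraic quantity is the right way to unify the two subcases, and is a point the paper's shortcut avoids having to make explicit.
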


\begin{proof}
By using the fundamental relations for the elliptic functions, 
\begin{equation}
\label{fund-elliptic}
\sn^2(z) + \cn^2(z) = 1, \qquad \dn^2(z) + k^2 \sn^2(z) = 1,
\end{equation} 
we obtain from the first formula in (\ref{parameters-nu-k-alpha}) that 
\begin{equation}
\label{elliptic-alpha}
\sn(\alpha) = \frac{\sqrt{\zeta_1 - \zeta_3}}{\sqrt{\zeta_1 + \zeta_2}}, \quad 
\cn(\alpha) = \frac{\sqrt{\zeta_2 + \zeta_3}}{\sqrt{\zeta_1 + \zeta_2}}, \quad 
\dn(\alpha) = \frac{\sqrt{\zeta_2 + \zeta_3}}{\sqrt{\zeta_1 + \zeta_3}}. 
\end{equation}	
	It follows from the addition formulas
\cite[(2.4.1)]{Lawden}:
\begin{equation}
\label{sn-addition}
\sn(u \pm v) = \frac{\sn(u) \cn(v) \dn(v) \pm \sn(v) \cn(u) \dn(u)}{1 - k^2 \sn^2(u) \sn^2(v)}
\end{equation}
that 
\begin{align}
\label{double-1}
\sn(2\alpha) = \frac{2 \sn(\alpha) \cn(\alpha) \dn(\alpha)}{1-k^2 \sn^4(\alpha)} = \frac{\sqrt{\zeta_1^2-\zeta_3^2}}{\zeta_1},
\end{align}
where we have used (\ref{elliptic-alpha}). By the fundamental relations (\ref{fund-elliptic}), we also obtain 
\begin{align}
\label{double-2}
\cn(2 \alpha) = \frac{\zeta_3}{\zeta_1}, \qquad 
\dn(2 \alpha) = \frac{\zeta_2}{\zeta_1},
\end{align}
which yield the transformation (\ref{transformation-nu}). The Jacobian of the transformation $(\zeta_1,\alpha,k) \mapsto (\zeta_1,\zeta_2,\zeta_3)$  is given by 
\begin{align*}
\frac{\partial (\zeta_1,\zeta_2,\zeta_3)}{\partial (\zeta_1,\alpha,k)} &=  \left| 
\begin{array}{ccc} 1 & 0 & 0 \\ 
\dn(2\alpha) & - 2k^2 \zeta_1 \sn(2\alpha) \cn(2\alpha) & \zeta_1 \partial_k \dn(2\alpha) \\ 
\cn(2\alpha) & - 2 \zeta_1 \sn(2\alpha) \dn(2\alpha) & \zeta_1 \partial_k \cn(2\alpha)
\end{array} \right|, \\
&= \zeta_1^2 \sn(2\alpha) \partial_k \left[ \dn^2(2\alpha) - k^2 \cn^2(2\alpha) \right] + 2k \zeta_1^2 \sn(2\alpha) \cn^2(2\alpha) \\
& 2k \zeta_1^2 \sn(2\alpha) [\cn^2(2 \alpha) - 1] \\
&= -2k \zeta_1^2 \sn^3(2\alpha),
\end{align*}
where we have used (\ref{fund-elliptic}). Since the Jacobian is nonzero for $\zeta_1 \in (0,\infty)$, $\alpha \in (0,K)$, and $k \in (0,1)$, the transformation is invertible and 
for every $(\zeta_1,\zeta_2,\zeta_3) \in \R^3$ satisfying $\zeta_3 < \zeta_2 < \zeta_1$, there exists a unique choice for $(\zeta_1,\alpha,k)$ in $(0,\infty) \times (0,K) \times (0,1)$.

Let us now set $\zeta_1 = 1$ and rewrite (\ref{form-2-intro}) in the equivalent form 
\begin{align*}
\phi(x) &= \frac{2 (1 + \cn(2\alpha))(\dn(2\alpha) + \cn(2\alpha))}{(1+\cn(2\alpha)) - (1-\dn(2\alpha)) \sn^2(\sn(2\alpha) x)} - 1 - \cn(2\alpha) - \dn(2\alpha) \\
&= (\dn(2\alpha) + \cn(2\alpha)) \frac{(1 + \cn(2\alpha)) + (1-\dn(2\alpha)) \sn^2(\sn(2\alpha) x)}{(1+\cn(2\alpha)) - (1-\dn(2\alpha)) \sn^2(\sn(2\alpha) x)} - 1.
\end{align*}
Due to definition of $\alpha$ in (\ref{parameters-nu-k-alpha}), we get 
$$
k^2 \sn^2(\alpha) = \frac{1 - \dn(2\alpha)}{1 + \cn(2\alpha)},
$$
from which we obtain (\ref{form-9}).
\end{proof}

\begin{example}
	\label{kink-hyperbolic}
For the hyperbolic solutions in Example \ref{ex-hyperbolic-theta}, we set $k = 1$ in Corollary \ref{cor-wave} and obtain $\zeta_1 = 1$, $\zeta_2 = \zeta_3 = {\rm sech}(2\alpha)$, and $\nu = \tanh(2\alpha)$. The solution form (\ref{form-9}) can be rewritten in the form 
	\begin{align}
	\phi(x) &= 2 {\rm sech}(2\alpha) \frac{1 + \tanh^2(\alpha) \tanh^2(z)}{1 - \tanh^2(\alpha) \tanh^2(z)} - 1 \notag \\
	&=  2 {\rm sech}(2\alpha) \frac{\cosh(2\alpha) \cosh(2z) + 1}{\cosh(2\alpha) + \cosh(2z)} - 1 \notag \\
	&= \frac{e^{2z} + e^{-2z} + 2 [1 - \sinh^2(2\alpha)] {\rm sech}(2\alpha)}{e^{2z} + e^{-2z} + 2 \cosh(2\alpha)},
	\label{soliton}
	\end{align}
	where $z = \tanh(2\alpha) x$.
\end{example}

\begin{remark}
	If $\zeta_3 = 0$, then it follows from (\ref{transformation-nu}) that $\alpha = \frac{1}{2} K$ and $\zeta_2 = \zeta_1 \sqrt{1-k^2}$. The expression (\ref{form-9}) is not similar to the snoidal solution (\ref{sn_potential}) because the elliptic modulus is different between the two expressions and the Landen transformation needs to be used, see Example \ref{ex-particular-sol}. 
\end{remark}

\subsection{Kink breathers}

If $\zeta = 0$, the Lax system of linear equations (\ref{LS}) with $u(x,t) = \phi(x+ct)$ admits two linearly independent solutions $\varphi = (p_0,q_0)^T$ and $\varphi = (p_0^*,q_0^*)^T$ given by 
\begin{equation}
\label{p-g-gen}
p_0(x,t) = q_0(x,t) = e^{\eta} 
\frac{\Theta(\nu \xi - \alpha)}{\Theta(\nu \xi + \alpha)}
\end{equation}
and 
\begin{equation}
\label{p-g-gen-second}
p_0^*(x,t) = - q_0^*(x,t) = e^{-\eta} 
\frac{\Theta(\nu \xi  + \alpha)}{\Theta(\nu \xi  - \alpha)},
\end{equation}
where $\xi = x + ct$ and $\eta = s_0 \xi - b t$ defined with 
$s_0 = \frac{\nu H'(2\alpha)}{H(2 \alpha)}$. 

The second main result of this work is to obtain the analytical solution of the mKdV equation (\ref{mkdv}) for the kink breather, which corresponds to a superposition of the kink soliton and the general traveling periodic wave of Theorem \ref{theorem-wave}.  The speed of the kink breather is defined in the coordinate $\eta = s_0 \xi - b t = s_0 (x + c_b t)$ with 
$$
c_b := c - \frac{b}{s_0}.
$$
If we choose $0 < \zeta_3 < \zeta_2 < \zeta_1$ as in Theorem \ref{theorem-wave}, 
then $b > 0$ and $s_0 > 0$ since it follows from (\ref{transformation-nu}) that $\alpha \in (0, \frac{1}{2}K)$ for $\zeta_3 > 0$. Hence, we have $c_b < c$, so that the kink moves to the right in the reference frame moving with the traveling wave in the coordinate $\xi = x + ct$.

By using one solution $u$ of the mKdV equation (\ref{mkdv}) and the eigenfunction $\varphi = (p,q)^T$ of the Lax system (\ref{LS}) with spectral parameter $\zeta$, we can construct another solution $\hat{u}$ of the same mKdV equation (\ref{mkdv}) from the one-fold Darboux transformation \cite{CP2018,CP2019,LingSAPM,MP24}: 
\begin{equation}
\label{DT}
\hat{u} = u - \frac{4 i \zeta p q}{p^2-q^2}.
\end{equation}
The kink breathers arises in the singular limit $\zeta \to 0$ of the Darboux transformation (\ref{DT}), for which $(p,q)$ is close to one of the two linearly independent solutions (\ref{p-g-gen}) and (\ref{p-g-gen-second}). By analyzing the solution in the limit $\zeta \to 0$, we have obtained the following theorem. 

\begin{theorem}
	\label{theorem-breather}
	Consider the traveling wave with the elliptic profile $\phi$ in Theorem \ref{theorem-wave} for $0 < \zeta_3 < \zeta_2 < \zeta_1$. A bounded kink breather solution of the mKdV equation (\ref{mkdv}) is given by 
\begin{equation}
\label{kink-breather}
u(x,t) = \frac{4 \phi(\xi) \Theta^2(\nu \xi + \alpha) + e^{2(\eta + \eta_0)} \Theta^2(\nu \xi - \alpha) (2 \phi(\xi) \phi'(\xi) - \phi''(\xi) - b)}{4 \Theta^2(\nu \xi + \alpha) + e^{2(\eta + \eta_0)} \Theta^2(\nu \xi - \alpha)  (c + 2 \phi'(\xi) - 2 \phi(\xi)^2)},
\end{equation}
where $\xi = x + ct$, $\eta = s_0(x + c_b t)$, and 
$\eta_0 \in \mathbb{R}$ is the arbitrary translational parameter. The kink breather is characterized by the breather speed
\begin{align}
\label{speed-breather}
c_b = 2(\zeta_1^2 + \zeta_2^2 +\zeta_3^2) - \frac{4 \zeta_1 \zeta_2\zeta_3}{\zeta_2 + \zeta_3 -\zeta_1 + 2 \sqrt{\zeta_1^2 - \zeta_3^2} Z(\alpha)} < c
\end{align}	
and the breather localization (inverse half-width) 
\begin{align}
\label{localization}
s_0 = \zeta_2 + \zeta_3 -\zeta_1 + 2 \sqrt{\zeta_1^2 - \zeta_3^2} Z(\alpha) > 0,
\end{align}
where $Z(x) := \frac{\Theta'(x)}{\Theta(x)}$ is Jacobi's zeta function. 
The asymptotic behavior of the kink breathers is given by the limits 
\begin{equation}
\label{limits-breather}
u(x,t) \to \left\{ \begin{array}{ll} 
\phi(\xi) \quad &\mbox{\rm as} \;\; \eta \to -\infty, \\
-\phi(\xi - 2 \nu^{-1} \alpha) \quad &\mbox{\rm as} \;\; \eta \to +\infty, 
\end{array} \right.
\end{equation}
so that $2 \nu^{-1} \alpha$ is the phase shift impaired by the kink breather 
in addition to the sign flip.
\end{theorem}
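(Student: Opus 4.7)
The plan is to derive \eqref{kink-breather} as the singular limit $\zeta \to 0$ of the one-fold Darboux transformation \eqref{DT} applied to $u = \phi(\xi)$. At $\zeta = 0$, the two explicit eigenfunctions \eqref{p-g-gen} and \eqref{p-g-gen-second} satisfy $p_0 = q_0$, $p_0^* = -q_0^*$, and $p_0 p_0^* = 1$, so either pure branch gives $p^2 - q^2 = 0$ and \eqref{DT} takes the indeterminate form $0/0$. To extract a nontrivial limit, I would take an eigenfunction of \eqref{LS} at $\zeta \ne 0$ of the form $(p,q) = \varphi^{(1)}(\zeta) + \zeta \mu\, \varphi^{(2)}(\zeta)$ with $\mu \in \R$ a free parameter and $\varphi^{(1)}(\zeta), \varphi^{(2)}(\zeta)$ analytic continuations in $\zeta$ of the two zero-$\zeta$ solutions. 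Taylor expansion then yields $p - q = \zeta\bigl[(p_1 - q_1) + 2 \mu p_0^*\bigr] + O(\zeta^2)$, $p + q = 2 p_0 + O(\zeta)$, and $p q = p_0^2 + O(\zeta)$, so the explicit factor of $\zeta$ in \eqref{DT} cancels against $(p - q)$ and a finite limit exists.

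The first-order correction $r := p_1 - q_1$ satisfies the inhomogeneous linear ODE $r_x + \phi\, r = 2 i p_0$ obtained from the $O(\zeta)$ part of the $x$-equation in \eqref{LS}. Using $p_0$ as integrating factor (since $p_{0,x} = \phi p_0$) gives $(p_0 r)_x = 2 i p_0^2$, so $p_0 r = 2 i \int p_0^2\, d\xi + C$. Substituting into \eqref{DT} and letting $\zeta \to 0$, after absorbing $\mu$ and $C$ into a single real parameter $K$ (reality of $\hat u$ forces $C + 2\mu \in i \R$), I obtain
\[
\hat u = \phi - \frac{p_0^2}{K + \int p_0^2\, d\xi}.
\]
The decisive algebraic step is the closed-form evaluation of the antiderivative. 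Differentiating $(c + 2 \phi' - 2 \phi^2) p_0^2$ and using $(p_0^2)_x = 2 \phi p_0^2$ together with \eqref{second} yields exactly $2 b p_0^2$, so $(c + 2\phi' - 2\phi^2) p_0^2 - 2 b \int p_0^2\, d\xi$ is a $\xi$-constant, which I parameterize as $4 e^{-2 \eta_0}$ (requiring $b \neq 0$, ensured by $\zeta_3 > 0$). Substituting back, using $p_0^2 = e^{2\eta}\, \Theta^2(\nu\xi - \alpha)/\Theta^2(\nu\xi + \alpha)$, and simplifying the numerator via the ODE identity $\phi(c + 2\phi' - 2\phi^2) - 2 b = 2 \phi \phi' - \phi'' - b$ produces precisely \eqref{kink-breather}.

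For the asymptotic limits \eqref{limits-breather}, the case $\eta \to -\infty$ is immediate in \eqref{kink-breather}; the case $\eta \to +\infty$ reduces $u$ to $(2\phi\phi' - \phi'' - b)/(c + 2\phi' - 2\phi^2)$, which must be identified with $-\phi(\xi - 2\nu^{-1}\alpha)$. I would verify this by showing that both sides solve \eqref{second} with $b$ replaced by $-b$, share the period $2K/\nu$, and agree at one point; alternatively the identity can be checked directly from the $\pm 2\alpha$ shift relations for the theta arguments in \eqref{gen-theta}. The speed \eqref{speed-breather} follows from $c_b = c - b/s_0$, enforced by $\xi = x + c t$ and $\eta = s_0(x + c_b t)$. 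For \eqref{localization}, I would start from $s_0 = \nu H'(2\alpha)/H(2\alpha)$, use $H'(y)/H(y) = Z(y) + \cn(y)\dn(y)/\sn(y)$, and combine with the duplication formula $Z(2\alpha) = 2 Z(\alpha) - k^2 \sn^2(\alpha)\sn(2\alpha)$ and the parameterization from Corollary \ref{cor-wave}. The main obstacle is the closed-form evaluation of $\int p_0^2\, d\xi$ as a rational expression in $\phi$ and $\phi'$: this algebraic identity is what collapses the Darboux-limit formula into the compact shape \eqref{kink-breather}, and without it the whole construction would stay implicit; a secondary technical challenge is the theta-function identification of the $\eta \to +\infty$ limit with $-\phi(\xi - 2\nu^{-1}\alpha)$ via elliptic manipulations.
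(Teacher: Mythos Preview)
Your approach is essentially the paper's: take the $\zeta\to 0$ limit of the Darboux transformation, expand the eigenfunction to first order, and reduce everything to the antiderivative of $p_0^2$. The paper writes the expansion as $p=p_0+i\zeta p_1$, $q=p_0-i\zeta p_1$ and works with the product $p_0p_1$ (your $p_0 r$ up to a factor of $2i$), deriving both $\partial_\xi(p_0p_1)=p_0^2$ and $\partial_t(p_0p_1)=(2\phi^2-2\phi'-c)p_0^2$; the time equation is what guarantees that your ``$\xi$-constant'' $4e^{-2\eta_0}$ is a genuine constant, not a function of $t$. Your key identity $\partial_\xi\bigl[(c+2\phi'-2\phi^2)p_0^2\bigr]=2b\,p_0^2$ and the route to \eqref{localization} via $H'/H=Z+\cn\,\dn/\sn$ and the $Z$-duplication formula are exactly what the paper does.

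Two places where the paper goes further and you have gaps. First, boundedness: you never check that the denominator in \eqref{kink-breather} stays away from zero. The paper proves this by the identity $c+2\phi'-2\phi^2 = 4\zeta_1^2\cn^2(\nu\xi-\alpha)+4\zeta_2^2\sn^2(\nu\xi-\alpha)>0$, obtained from the Weierstrass form of $\phi$ and $\phi'$. Second, for the $\eta\to+\infty$ limit, your plan ``both sides solve \eqref{second} with $-b$, share the period, and agree at one point'' is not enough: \eqref{second} is second order, so you would need agreement of value and derivative, and sharing the period is automatic for translates. The paper handles this cleanly by rewriting both $c+2\phi'-2\phi^2$ and $2\phi\phi'-\phi''-b$ in terms of $\wp$ and $\wp'$ at the shifted argument $\xi+\tfrac{v}{2}$, after which the ratio collapses directly to $-\phi(\xi+v)=-\phi(\xi-2\nu^{-1}\alpha)$ via the Weierstrass addition formula; your theta-shift alternative could also work but would need to be made concrete.
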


Figure \ref{fig-breathers} gives an illustrative example of the kink breather in Theorem \ref{theorem-breather} for $\zeta_1 = 1$, $k = 0.9$, and $\alpha = 0.25 K$. It follows from (\ref{transformation-nu}) and (\ref{localization}) that 
$\zeta_2 \approx 0.66$, $\zeta_3 \approx 0.55$, and $s_0 \approx 0.43$. Since $\xi = x + ct$ and $c_b < c$ as in (\ref{speed-breather}), the kink moves to the right in the reference frame moving with the periodic wave. 

\begin{figure}[htb!]
	\centering
	\includegraphics[width=7.5cm,height=5cm]{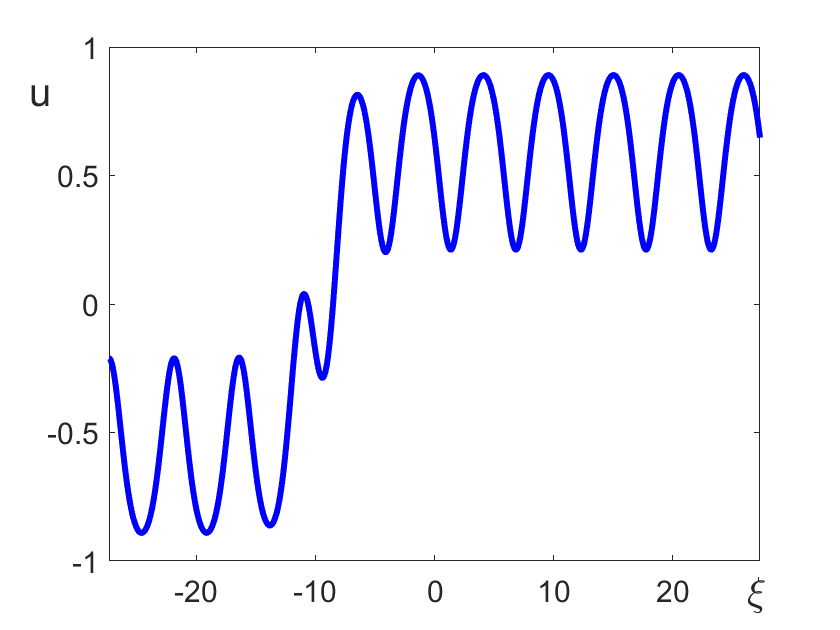}
	\includegraphics[width=7.5cm,height=5cm]{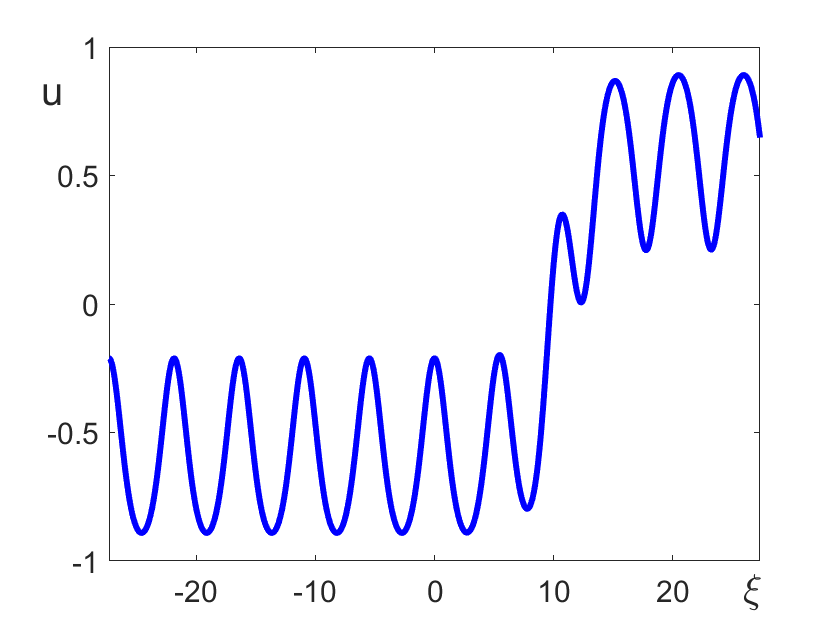}
	\caption{Plots of $u(x,t)$ for the kink breather solution (\ref{kink-breather}) 
		with $\zeta_1 = 1$, $k = 0.9$, and $\alpha = 0.25 K$ versus $\xi = x + ct$ for $t = -3$ (left) and $t = 3$ (right). The kink moves to the right relative to the periodic wave and flips its sign. The periodic wave impares the phase shift (\ref{limits-breather}) due to the interaction with the kink.}
	\label{fig-breathers}
\end{figure}

Next we show that the kink breather (\ref{kink-breather}) of Theorem \ref{theorem-breather} recovers as $k \to 1$ the two-soliton solution constructed in \cite{MP24}, where one soliton is the kink and the other soliton is the solution with the hyperbolic profile in Example \ref{kink-hyperbolic}. The result is given by the following corollary.

\begin{corollary}
	\label{cor-two-solitons} Consider  $\zeta_1 = 1$, $\zeta_2 = \zeta_3 = {\rm sech}(2\alpha)$, and $\nu = \tanh(2\alpha)$, where $\alpha \in (0,\infty)$ is the only parameter for the hyperbolic profile (\ref{soliton}). The kink breather of Theorem \ref{theorem-breather} for $k = 1$ is equivalent up to the translational parameters in $\xi$ and $\eta$ to the two-soliton solution 
	\begin{equation}
	\label{two-solitons}
	u(x,t) = 
	\frac{\sinh(\eta+2\alpha)  e^{-2z}  +  \sinh(\eta-2\alpha) e^{2z} + 2 
		\sinh(\eta) (1 - \sinh^2(2\alpha)) {\rm sech}(2\alpha)}{\cosh(\eta+2\alpha)  e^{-2z}  +  \cosh(\eta-2\alpha) e^{2z} + 2 
		\cosh(\eta) \cosh(2\alpha)},
	\end{equation}
	where $\xi = x + 2t + 4 {\rm sech}^2(2 \alpha) t$, $\eta = x + 2t$, and $z = \tanh(2\alpha) \xi$.
\end{corollary}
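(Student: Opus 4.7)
The plan is to take the $k\to 1$ limit of the kink breather formula (\ref{kink-breather}) and match it algebraically to (\ref{two-solitons}). I apply Corollary \ref{cor-wave} at $k=1$ with $\zeta_1 = 1$: the transformation (\ref{transformation-nu}) gives $\zeta_2 = \zeta_3 = {\rm sech}(2\alpha)$ and $\nu = \tanh(2\alpha)$, so the elliptic profile $\phi$ reduces to the hyperbolic soliton (\ref{soliton}), and (\ref{parameterization}) yields $b = 4\,{\rm sech}^2(2\alpha)$ and $c = 2 + 4\,{\rm sech}^2(2\alpha)$. For the breather parameters, I use the standard degeneration $Z(\alpha,k) \to \tanh(\alpha)$ as $k\to 1$; substituting into (\ref{localization}) and using $\cosh(2\alpha) = 1 + 2\sinh^2\alpha$ gives
\begin{equation*}
s_0 = 2\,{\rm sech}(2\alpha) - 1 + 2\tanh(2\alpha)\tanh(\alpha) = \frac{2 + 4\sinh^2\alpha - \cosh(2\alpha)}{\cosh(2\alpha)} = 1,
\end{equation*}
and then $c_b = c - b/s_0 = 2$, so $\eta = x + 2t$ and $\xi = x + (2 + 4\,{\rm sech}^2(2\alpha))\,t$, matching the statement of the corollary.

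Second, I take the $k\to 1$ limit of the theta ratios appearing in (\ref{kink-breather}). The same correspondence that takes (\ref{gen-theta}) into (\ref{gen-theta-hyperbolic}) in Example \ref{ex-hyperbolic-theta}---rigorous via the Jacobi modular transformation $\tau \mapsto -1/\tau$ which converts the singular $q \to 1$ limit into a uniformly convergent $q' \to 0$ one---yields $\Theta(x)/\Theta(0) \to \cosh(x)$, and hence $\Theta^2(\nu\xi-\alpha)/\Theta^2(\nu\xi+\alpha) \to \cosh^2(z-\alpha)/\cosh^2(z+\alpha)$ with $z = \tanh(2\alpha)\,\xi$. I then compute $\phi'(\xi)$ and $\phi''(\xi)$ from (\ref{soliton}) and assemble the combinations $2\phi\phi'-\phi''-b$ and $c+2\phi'-2\phi^2$ that appear in the numerator and denominator of (\ref{kink-breather}); at this stage the constant $\eta_0$ in $e^{2(\eta+\eta_0)}$ is absorbed into a harmless shift of $\eta$.

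Finally, I multiply numerator and denominator of the resulting expression by $4\cosh^2(z+\alpha)$, expand $\cosh^2(z\pm\alpha)$ via $\cosh(2z)\pm 1 = 2\cosh^2 z$ or $2\sinh^2 z$, and use the addition formulas $\sinh(\eta\pm 2\alpha) = \sinh\eta \cosh 2\alpha \pm \cosh\eta \sinh 2\alpha$ (and its $\cosh$ analogue) to bring the result into the shape of (\ref{two-solitons}). The identity $(1-\sinh^2 2\alpha)\,{\rm sech}(2\alpha) = 2\,{\rm sech}(2\alpha) - \cosh(2\alpha)$ is the key bookkeeping tool that produces the constant term in the numerator of (\ref{two-solitons}). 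The main obstacle is exactly this last algebraic step: the numerator and denominator have to be brought simultaneously into the $\sinh/\cosh(\eta\pm 2\alpha)$ decomposition of (\ref{two-solitons}), which requires careful sign-tracking and fixing the overall normalization (equivalent to a free $\xi$-translation of $\phi$) so that the two representations coincide term by term.
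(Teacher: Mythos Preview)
Your approach is essentially the same as the paper's: compute $s_0=1$ and $c_b=2$ from (\ref{localization})--(\ref{speed-breather}) in the $k\to 1$ limit, degenerate the theta ratios to hyperbolic ones, and then match by algebra after fixing the free translations. Two points of comparison are worth noting.

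First, the paper shortens the algebra considerably by invoking the identity (\ref{expression-1}), which gives $c+2\phi'-2\phi^2 = 4\zeta_1^2\cn^2(\nu\xi-\alpha)+4\zeta_2^2\sn^2(\nu\xi-\alpha)$ \emph{before} passing to $k=1$; this produces the compact intermediate form
\[
u = \phi(\xi) - \frac{2\zeta_1\zeta_2\zeta_3\,e^{2(\eta+\eta_0)}\Theta^2(\nu\xi-\alpha)}{\Theta^2(\nu\xi+\alpha) + e^{2(\eta+\eta_0)}\Theta^2(\nu\xi-\alpha)\bigl(\zeta_1^2\cn^2(\nu\xi-\alpha)+\zeta_2^2\sn^2(\nu\xi-\alpha)\bigr)},
\]
whose $k\to 1$ limit is immediate. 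Your plan to compute $\phi',\phi''$ directly from (\ref{soliton}) and assemble $2\phi\phi'-\phi''-b$ and $c+2\phi'-2\phi^2$ by brute force will work, but is noticeably heavier.

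Second, and more importantly, the paper's matching requires not only translations in $\xi$ and $\eta$ but also a global sign flip $u\mapsto -u$: comparing the $\eta\to\pm\infty$ limits of (\ref{kink-breather}) (namely (\ref{limits-breather})) with those of (\ref{two-solitons}) forces $z\mapsto z+\alpha$ together with an overall sign change, after which the $z\to\pm\infty$ limits pin down $e^{2\eta_0}=\cosh^2(2\alpha)$. Your sketch treats $\eta_0$ as a free shift and mentions a $\xi$-translation, but does not anticipate the sign flip; without it the two expressions will not line up term by term. This is not a flaw in strategy---the mKdV symmetry $u\mapsto -u$ is available---but you should flag it explicitly, since the corollary's phrase ``up to the translational parameters'' understates what is actually used.
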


\begin{proof}
We compute from (\ref{speed-breather}) and (\ref{localization}) with $\zeta_1 = 1$, $\zeta_2 = \zeta_3 = {\rm sech}(2\alpha)$, and $\nu = \tanh(2\alpha)$ that  
\begin{align*}
s_0 &= -1 + 2 {\rm sech}^2(2\alpha) + 2 \tanh(2\alpha) \tanh(\alpha) \\
&= 1 - 2 {\rm sech}(2\alpha) {\rm sech}(\alpha) \left[ \cosh(2\alpha) \cosh(\alpha) - \cosh(\alpha) - \sinh(\alpha) \sinh(2\alpha) \right] \\
&= 1
\end{align*}
and 
\begin{align*}
c_b &= 2 (1 + 2 {\rm sech}^2(2\alpha)) - 4 {\rm sech}^2(2\alpha) = 2.
\end{align*}
This yields $\xi = x + 2t + 4 {\rm sech}^2(2 \alpha) t$ and $\eta = x + 2t$.
The solution form (\ref{two-solitons}) must be identical to the solution form  (\ref{kink-breather}) for $k = 1$ up to the choice of translations in $\xi$ and $\eta$ since both solutions correspond to the kink for a simple eigenvalue at $0$ and a soliton with the hyperbolic profile (\ref{soliton})  for a pair of simple eigenvalues $\pm {\rm sech}(2\alpha)$ in the gap of the continuous spectrum $(-\infty,-1] \cup [1,\infty)$ of the Lax system (\ref{LS}).
In other words, the Lax spectrum for the two-soliton solution (\ref{two-solitons}) is 
$$
(-\infty,-\zeta_1] \cup \{ -{\rm sech}(2\alpha)\} \cup \{ 0 \}
\cup \{ {\rm sech}(2\alpha)\} \cup [\zeta_1,\infty).
$$ 
To show the correspondence between (\ref{kink-breather}) with $k = 1$ and (\ref{two-solitons}), we first note that the expression (\ref{two-solitons}) converges to the shifted kink $\tanh(\eta \mp 2 \alpha)$ as $\xi \to \pm \infty$ and to the shifted hyperbolic profile (\ref{soliton}) with phase shifts $\pm \alpha \coth(2\alpha)$ along $x$ as $\eta \to \pm \infty$ and with the sign flip as $\eta \to -\infty$. 

The solution form (\ref{kink-breather}) can be rewritten in the equivalent form:
\begin{equation*}
u(x,t) = \phi(\xi) - \frac{2 \zeta_1 \zeta_2 \zeta_3  e^{2(\eta + \eta_0)} \Theta^2(\nu \xi - \alpha)}{\Theta^2(\nu \xi + \alpha) + e^{2(\eta + \eta_0)} \Theta^2(\nu \xi - \alpha)  (\zeta_1^2 \cn^2(\nu \xi - \alpha) + \zeta_2^2 \sn^2(\nu \xi - \alpha))},
\end{equation*}
where we have used equation (\ref{expression-1}) below. In the limit $k \to 1$, this expression 
with the normalization $\zeta_1 = 1$ yields
\begin{align}
u(x,t)  &=  2 {\rm sech}(2\alpha) \frac{\cosh(2\alpha) \cosh(2z) + 1}{\cosh(2\alpha) + \cosh(2z)} - 1  \notag \\
& \qquad - \frac{2 e^{2 (\eta + \eta_0)} \cosh^2(z - \alpha)}{\cosh^2(2\alpha) \cosh^2(z + \alpha) + e^{2 (\eta + \eta_0)}[\cosh^2(2\alpha) + \sinh^2(z - \alpha)]},
\label{two-soliton-new}
\end{align}
where we have used the second equality in (\ref{soliton}) with $z = \tanh(2\alpha) \xi$.
In order to reduce (\ref{two-soliton-new}) to (\ref{two-solitons}), 
we need to fix the translational parameters in the definitions of $\xi$ and $\eta$. It follows from comparison between (\ref{limits-breather}) as $\eta \to \pm \infty$ and (\ref{two-solitons}) as $\eta \to \pm \infty$, which is the shifted hyperbolic profile (\ref{soliton}) with phase shifts $\pm \alpha$ along $z$ as $\eta \to \pm \infty$ and with the sign flip as $\eta \to -\infty$, 
that $z$ in (\ref{two-soliton-new}) should be replaced by $z + \alpha$ 
and the sign should be flipped. This yields the new expression 
instead of (\ref{two-soliton-new}):
\begin{align}
u(x,t) &= 1 - 2 {\rm sech}(2\alpha) \frac{\cosh(2\alpha) \cosh(2z + 2 \alpha) + 1}{\cosh(2\alpha) + \cosh(2z + 2 \alpha)}  \notag \\
& \qquad + \frac{2 e^{2 (\eta + \eta_0)} \cosh^2(z)}{\cosh^2(2\alpha) \cosh^2(z + 2\alpha) + e^{2 (\eta + \eta_0)}[\cosh^2(2\alpha) + \sinh^2(z)]},
\label{two-soliton-shifted}
\end{align}
In the limit $z \to \pm \infty$, the expression (\ref{two-soliton-shifted}) yields
\begin{align*}
u(x,t) \to  -1 + \frac{2 e^{2 (\eta + \eta_0)}}{\cosh^2(2\alpha) e^{\pm 4 \alpha} + e^{2 (\eta + \eta_0)}},
\end{align*}
which must recover $\tanh(\eta \mp 2 \alpha)$, which follows from (\ref{two-solitons}) as $z \to \pm \infty$. This yields the definition of $\eta_0$ as $e^{2 \eta_0} = \cosh^2(2 \alpha)$. With this definition of $\eta_0$, we rewrite the expression  (\ref{two-soliton-shifted}) in the form 
\begin{align}
u(x,t) &=  1 - 2 {\rm sech}(2\alpha) \frac{\cosh(2\alpha) \cosh(2z + 2 \alpha) + 1}{\cosh(2\alpha) + \cosh(2z + 2 \alpha)} \notag \\
& \qquad + \frac{2 e^{2 \eta} \cosh^2(z)}{\cosh^2(z + 2\alpha) + e^{2 \eta}[\cosh^2(2\alpha) + \sinh^2(z)]}. \label{two-soliton-final}
\end{align}
To show the direct equivalence of (\ref{two-solitons}) and (\ref{two-soliton-final}), 
we note that 
\begin{align*}
\cosh(2\alpha) + \cosh(2z + 2 \alpha) = 2 \cosh(z + 2 \alpha) \cosh(z), 
\end{align*}
\begin{align*}
& \quad \cosh^2(z + 2\alpha) + e^{2 \eta}[\cosh^2(2\alpha) + \sinh^2(z)] \\
&= 
\cosh^2(z + 2\alpha) + \frac{1}{2} e^{2 \eta}[\cosh(4\alpha) + \cosh(2 z)] \\
&= \cosh(z + 2\alpha) [\cosh(z+2\alpha) + e^{2\eta} \cosh(z-2\alpha)], 
\end{align*}
and 
\begin{align*}
& \cosh(\eta+2\alpha)  e^{-2z}  +  \cosh(\eta-2\alpha) e^{2z} + 2 
\cosh(\eta) \cosh(2\alpha) \\
& \quad = e^{-\eta} [ \cosh(2z + 2 \alpha) + \cosh(2\alpha)] + e^{\eta} [\cosh(2z - 2 \alpha) + \cosh(2\alpha)] \\
& \quad = 2 \cosh(z) [e^{-\eta} \cosh(z+2\alpha) + e^{\eta} \cosh(z-2\alpha)].
\end{align*}
Subtracting (\ref{two-solitons}) from $1$ yields 
\begin{align*}
1 - u(x,t) &= \frac{e^{-\eta} [\cosh(2\alpha) \cosh(2z+2\alpha) + 1] + e^{\eta} \sinh^2(2\alpha)}{\cosh(2\alpha) \cosh(z) [e^{-\eta} \cosh(z+2\alpha) + e^{\eta} \cosh(z-2\alpha)]}.
\end{align*}
This expression is compared with the last two terms in (\ref{two-soliton-final}): 
\begin{align*}
\quad & 2 {\rm sech}(2\alpha) \frac{\cosh(2\alpha) \cosh(2z + 2 \alpha) + 1}{\cosh(2\alpha) + \cosh(2z + 2 \alpha)} - \frac{2 e^{2 \eta} \cosh^2(z)}{\cosh^2(z + 2\alpha) + e^{2 \eta}[\cosh^2(2\alpha) + \sinh^2(z)]} \\
&= \frac{\cosh(2\alpha) \cosh(2z + 2 \alpha) + 1}{\cosh(2\alpha) \cosh(z) \cosh(z+2\alpha)} - \frac{2 e^{\eta} \cosh^2(z)}{\cosh(z + 2\alpha) [e^{-\eta}  \cosh(z+2\alpha) + e^{\eta} \cosh(z-2\alpha)]} \\
&= \frac{e^{-\eta} [\cosh(2\alpha) \cosh(2z+2\alpha) + 1] + e^{\eta} \sinh^2(2\alpha)}{\cosh(2\alpha) \cosh(z) [e^{-\eta} \cosh(z+2\alpha) + e^{\eta} \cosh(z-2\alpha)]},
\end{align*}
where the simple expression for $e^{\eta} \sinh^2(2\alpha)$ in the numerator is obtained from  
\begin{align*}
& \quad \cosh(z-2\alpha) [ \cosh(2\alpha) \cosh(2z + 2 \alpha) + 1 ] - 2 \cosh(2\alpha) \cosh^3(z) \\
&= \cosh(z-2\alpha) [ \cosh(2\alpha) \cosh(2z + 2 \alpha) + 1 ] - \cosh^2(z) [\cosh(z + 2 \alpha) + \cosh(z-2\alpha)]\\
&= \cosh(z-2\alpha) [ \cosh(2\alpha) \cosh(2z + 2 \alpha) - \sinh^2(z) ] \\
& \qquad \qquad - \cosh(z + 2 \alpha) [\cosh^2(z) + \sinh^2(2\alpha) - \sinh^2(2\alpha)] \\
&= \cosh(z-2\alpha) [ \cosh(2\alpha) \cosh(2z + 2 \alpha) - \sinh^2(z) ] \\
& \qquad \qquad - \cosh(z + 2 \alpha) [\cosh(z+2\alpha) \cosh(z - 2\alpha) - \sinh^2(2\alpha)] \\
&= \cosh(z-2\alpha) [ \cosh(2\alpha) \cosh(2z + 2 \alpha) - \sinh^2(z) - \cosh^2(z + 2 \alpha)] + \cosh(z + 2 \alpha) \sinh^2(2\alpha) \\
&= \cosh(z + 2 \alpha) \sinh^2(2\alpha).
\end{align*}
This completes the proof of equivalence of (\ref{two-solitons}) and (\ref{two-soliton-final}).
\end{proof}

Figure \ref{fig-solitons} gives an illustrative example of the two-soliton solution in Corollary \ref{cor-two-solitons} for $\zeta_1 = 1$ and $\alpha = 0.5$. Since $\eta = x + 2 t$ and $2 = c_b < c = 2 + 4 {\rm sech}^2(2\alpha)$, the soliton with the hyperbolic profile (\ref{soliton}) moves to the left in the reference frame moving with the kink. The exact solution (\ref{two-solitons}) illustrated in Figure \ref{fig-solitons} matches well the numerical experiment  in Figure 11 in \cite{Sande} (with the opposite direction of the time variable used in \cite{Sande}), which displays the flip of the soliton across the kink.

\begin{figure}[htb!]
	\centering
	\includegraphics[width=5cm,height=5cm]{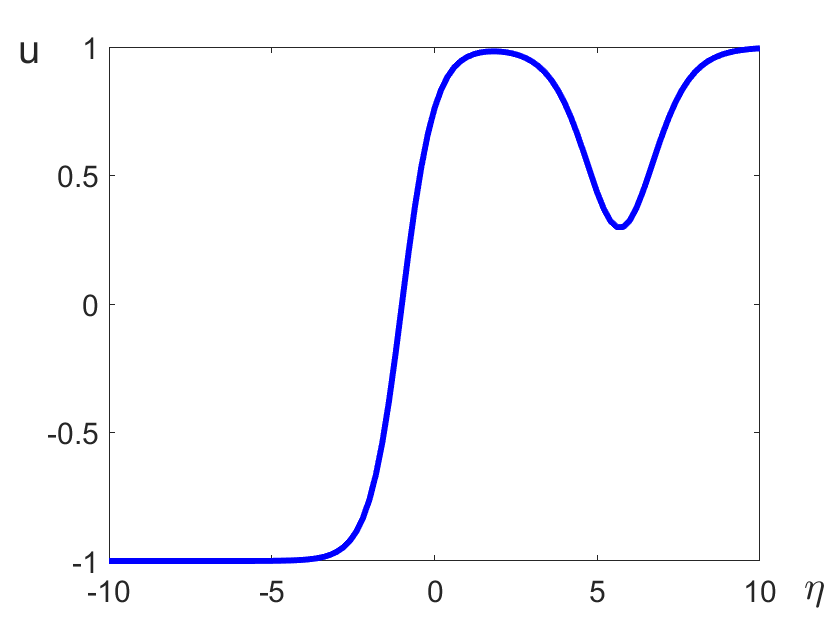}
	\includegraphics[width=5cm,height=5cm]{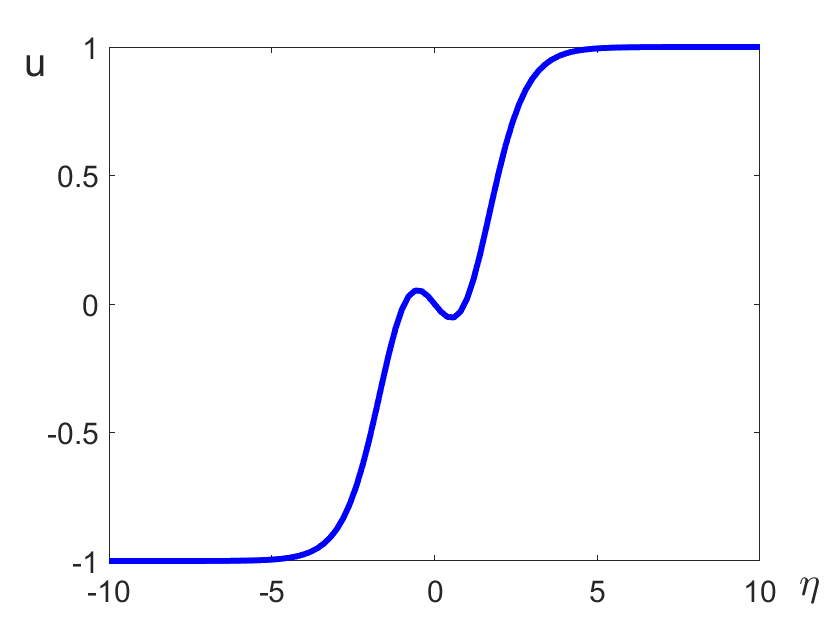}
	\includegraphics[width=5cm,height=5cm]{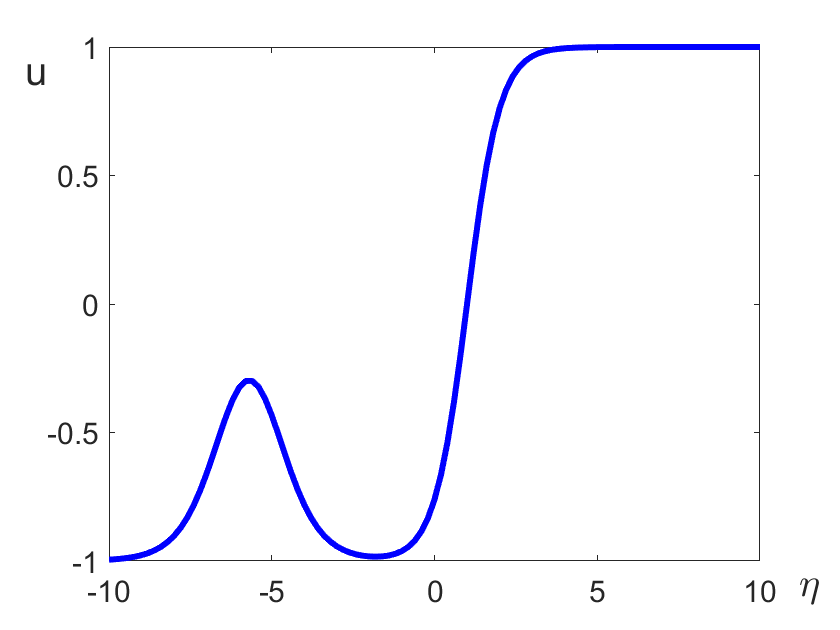}
	\caption{Plots of $u(x,t)$ for the two-soliton solution (\ref{two-solitons}) 
		with $\alpha = 0.5$ versus $\eta = x + 2t$ for $t = -3$ (left), $t = 0$ (middle), and $t = 3$ (right). The soliton with the hyperbolic profile (\ref{soliton}) moves to the left relative to the kink and flips the sign after the interaction with the kink. Both the soliton and the kink impares the phase shifts due to the interaction.}
	\label{fig-solitons}
\end{figure}

\section{Proof of Theorem \ref{theorem-wave}}
\label{sec-3}

In addition to the proof of Theorem \ref{theorem-wave}, we will review other solution forms for the general traveling periodic wave of the mKdV equation (\ref{mkdv}) used in the literature. 

\subsection{Solution form parameterized by roots of $Q$}

Let us factorize the polynomial $Q$ in (\ref{third}) by its four roots
\begin{equation}
\label{Prealzeros}
Q(\phi)=(\phi-u_1)(\phi-u_2)(\phi-u_3)(\phi-u_4) = \phi^4 - c \phi^2 + 2 b \phi + 2 d.
\end{equation}
Expanding (\ref{Prealzeros}) gives us the following relations between parameters $(b,c,d)$ and roots $(u_1,u_2,u_3,u_4)$ of $Q$: 
\begin{align}
\label{rG}
\left. \begin{array}{rl} 
u_1+u_2+u_3+u_4 &= 0, \\
u_1u_2+u_1u_3+u_1u_4+u_2u_3+u_2u_4+u_3u_4 &= -c, \\
u_1u_2u_3+u_1u_2u_4+u_1u_3u_4+u_2u_3u_4 &=-2b, \\
u_1u_2u_3u_4 &= 2d. \end{array} \right\} 
\end{align}
The following lemma identifies the periodic solutions 
of the system (\ref{first}), (\ref{second}), and (\ref{third}). 

\begin{lemma}
\label{lem-1}
There exist bounded periodic (nonconstant) solutions of the system (\ref{first}), (\ref{second}), and (\ref{third}) if and only if 
the four roots of $Q$ are real and distinct, which is equivalent to $(b,c) \in \Omega$ with $\Omega$ given by (\ref{domain}). By ordering the four roots of $Q$ as $u_4 < u_3 < u_2 < u_1$, the solution form is given by 
\begin{equation}
\label{form-1}
\phi(x) = u_4 + \dfrac{(u_2-u_4)(u_3-u_4)}{(u_2-u_4)-(u_2-u_3){\rm sn}^2(\nu x)},
\end{equation}
where 
\begin{equation}
\label{parameters-1}
\nu = \frac{1}{2} \sqrt{(u_1-u_3)(u_2-u_4)} \quad \mbox{\rm and} \quad 
k = \sqrt{\dfrac{(u_1-u_4)(u_2-u_3)}{(u_1-u_3)(u_2-u_4)}}.
\end{equation}
The fundamental period of the elliptic profile $\phi$ in $x$ is $2 \nu^{-1} K$ 
and its minimal and maximal values on $[-\nu^{-1}K,\nu^{-1}K]$ are attained at $\phi(0) = u_3$ and $\phi(\pm \nu^{-1} K) = u_2$, respectively. 
\end{lemma}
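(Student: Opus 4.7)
The plan is to split the argument into a phase-plane part, which characterizes when bounded nonconstant periodic solutions exist, and an explicit-integration part, which produces the formula (\ref{form-1}). From the invariant $(\phi')^2 = Q(\phi)$ any orbit must live in $\{Q \geq 0\}$. Since $Q$ is a monic quartic with $Q(\phi) \to +\infty$ as $|\phi| \to \infty$, a bounded nonconstant orbit requires a bounded connected component of $\{Q \geq 0\}$ whose closure contains two distinct simple zeros of $Q$. A case analysis on the root structure of $Q$ (four real simple, two real simple plus a complex-conjugate pair, no real roots, or repeated real roots) shows that this occurs precisely when $Q$ has four distinct real roots $u_4 < u_3 < u_2 < u_1$, in which case $\phi$ oscillates on the middle pocket $[u_3,u_2]$ and the orbit is a closed curve in the $(\phi,\phi')$-plane around the local minimum of $Q$ in $(u_3,u_2)$.

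To relate this root structure to $\Omega$, I would observe that the critical points of $Q$ are the roots of the depressed cubic $\phi^3 - (c/2)\phi + b/2$ (obtained from $Q'/4$), whose discriminant equals $c^3/2 - 27 b^2/4$. Three distinct real critical points, which are required for the ``W'' shape of $Q$ needed for four simple real zeros, are therefore equivalent to $c > 0$ together with $b^2 < 2c^3/27$, i.e., $(b,c) \in \Omega$. Conversely, for any $(b,c) \in \Omega$ the values of $Q$ at its three critical points are affine functions of the free parameter $d$, so $d$ can be tuned to push the local maximum above zero and both local minima below zero, giving four distinct real roots; the relations (\ref{rG}) then recover $(b,c,d)$ from the ordered roots.

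For the closed-form solution I would apply the Byrd--Friedman reduction of $\int d\phi/\sqrt{(u_1-\phi)(u_2-\phi)(\phi-u_3)(\phi-u_4)}$ via the rational substitution
\[
\phi = u_4 + \frac{(u_2-u_4)(u_3-u_4)}{(u_2-u_4) - (u_2-u_3)\,\sigma}, \qquad \sigma \in [0,1],
\]
which sends $\sigma = 0,1$ to $\phi = u_3, u_2$ respectively. Differentiating, squaring, and matching the four linear factors $u_i - \phi$ against the Jacobi identities $\cn^2 = 1 - \sn^2$ and $\dn^2 = 1 - k^2 \sn^2$ forces $\sigma(x) = \sn^2(\nu x, k)$ with $\nu$ and $k$ exactly as in (\ref{parameters-1}). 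The identities $\sn(0)=0$ and $\sn(K)=1$ then yield $\phi(0) = u_3$ and $\phi(\pm \nu^{-1} K) = u_2$, while the period $2 \nu^{-1} K$ is inherited from that of $\sn^2(\cdot,k)$. The main obstacle is precisely this last verification: three independent cross-ratios of $\{u_i\}$ must pin down $\nu^2$, $k^2$ and the overall normalization consistently with the ODE. I would organize it by matching coefficients of $1,\sigma,\sigma^2$ in the rational identity obtained after squaring $d\phi/d\sigma$, which turns one opaque verification into three short algebraic checks.
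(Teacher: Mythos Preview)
Your proposal is correct and follows essentially the same route as the paper: a case analysis on the real/complex root structure of $Q$ via the phase-plane invariant $(\phi')^2=Q(\phi)$, a discriminant condition on $Q'$ to identify $\Omega$ (the paper phrases this as the local maximum of $x\mapsto 2x^3-cx$ rather than the cubic discriminant, but it is the same computation), and the Byrd--Friedman reduction of the quartic radical integral to obtain (\ref{form-1}) with parameters (\ref{parameters-1}). The only cosmetic difference is that the paper simply quotes formula~254.00 of \cite{BF1971} for the integral, whereas you outline the underlying rational substitution and coefficient-matching; your slightly more explicit converse (tuning $d$ once $(b,c)\in\Omega$) is also a welcome clarification.
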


\begin{proof}
We consider separate cases of four, two, and no real roots of $Q$. 

\vspace{0.2cm}

{\em Four real simple roots of $Q$.} The first-order invariant 
$(\phi')^2 = Q(\phi)$ admits real solutions for $\phi$ if and only if $Q(\phi) \geq 0$, which is true in either of the following three intervals: $(-\infty,u_4]$, $[u_3,u_2]$, and $[u_1,\infty)$, provided that the roots of $Q$ have been ordered as $u_4 < u_3 < u_2 < u_1$. Solutions to (\ref{third}) are orbits on the phase plane $(\phi,\phi')$ given by the $(2d)$-level curve of the function 
$$
F(\phi,\phi') := (\phi')^2-\phi^4 + c\phi^2-2b\phi.
$$ 
Roots of $Q$ give the turning points, where the level curve of $F(\phi,\phi') = 2 d$ intersects with the $\phi$-line. Roots of $Q'$ give the equilibrium points 
$(\phi,0)$ on the phase plane $(\phi,\phi')$. It follows from the phase portrait on the $(\phi,\phi')$ plane, see Figure \ref{fig-phase}, that bounded periodic (nonconstant) solutions of the second-order equation (\ref{second}) exist if and only if the curve on the $(\phi,\phi')$ plane is located between two turning points and these turning points are not the equilibrium points. Hence, the intervals $(-\infty,u_4]$ and $[u_1,\infty)$ correspond to the unbounded solutions, whereas the bounded periodic (nonconstant) solutions exist in the only interval $[u_3,u_2]$. The four real roots of $Q$ exist for some levels of $F(\phi,\phi') = 2d$ if and only if 
$$
Q'(\phi) = 2 (2 \phi^3 - c \phi + b) 
$$
admits three real roots. If $c \leq 0$, there is only one real root of $Q'(\phi)$, whereas if $c > 0$, there exist three real roots of $Q'(\phi)$ if and only if $b \in (-b_c,b_c)$, where $b_c = \frac{\sqrt{2 c^3}}{\sqrt{27}}$ corresponds to the local maximum of the mapping $x \mapsto 2x^3 - c x$. This completes the proof of the existence region $\Omega$ for $(b,c)$ given by (\ref{domain}). To get the solution form (\ref{form-1}), we use 
formula 254.00 in \cite{BF1971} and obtain 
\begin{align*}
x &= \int_{u_3}^{\phi} \frac{dt}{\sqrt{(t-u_1)(t-u_2)(t-u_3)(t-u_4)}} \\
&= \dfrac{2}{\sqrt{(u_1-u_3)(u_2-u_4)}} F\left( \sin^{-1}\sqrt{\frac{(u_2-u_4)(\phi-u_3)}{(u_2-u_3)(\phi-u_4)}}, 
\sqrt{\dfrac{(u_2-u_3)(u_1-u_4)}{(u_1-u_3)(u_2-u_4)}} \right),
\end{align*}
where $F(\varphi,k)$ is the incomplete elliptic integral of the first kind:
\begin{equation*}
F(\varphi,k)=\int_0^{\varphi}\frac{d\alpha}{\sqrt{1-k^2\sin^2\alpha}}=\int_0^{\sin \varphi}\frac{dt}{(1-t^2)(1-k^2t^2)}.
\end{equation*}
Inverting $x = F(\varphi,k)$ by using Jacobi elliptic function 
${\rm sn}(x,k) = \sin(\varphi)$, we rewrite the explicit solution 
as (\ref{form-1}), where the elliptic modulus $k$ is dropped according to our convention. 

\begin{figure}[htb!]
	\centering
	\includegraphics[width=12cm,height=8cm]{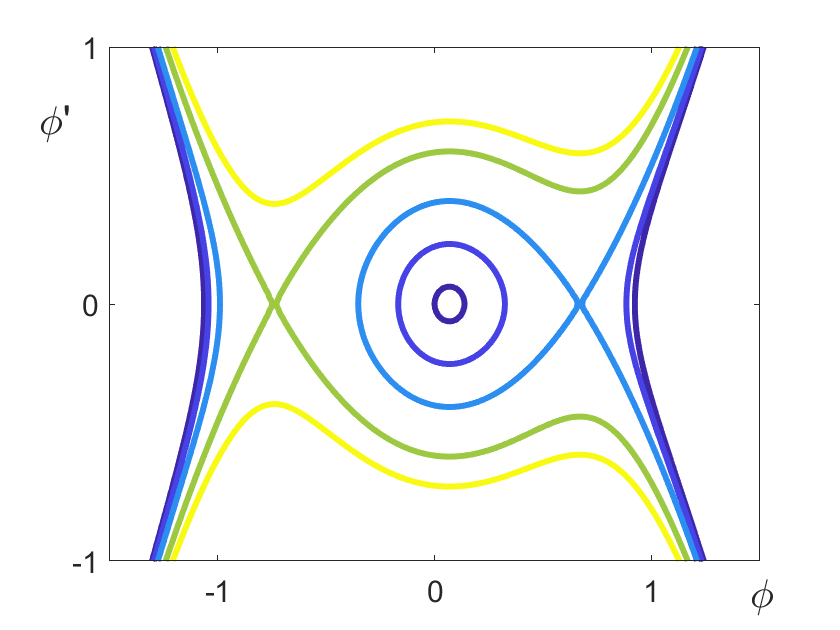}
	\caption{Phase portrait of the second-order equation (\ref{second}) from the level curves of $F(\phi,\phi')$ for $c = 1$ and $b = 0.25 b_c$.}
	\label{fig-phase}
\end{figure}

\vspace{0.2cm}

{\em Two real and two complex simple roots of $Q$.} Let us denote the real roots of $Q$ by $u_1$ and $u_2$ such that $u_2 < u_1$ and the complex-conjugate roots by $u_3$ and $u_4$. Since $Q(\phi) \sim \phi^4$ for large $\phi$ and the first-order invariant $(\phi')^2 = Q(\phi)$ admits real solutions for $\phi$ if and only if $Q(\phi) \geq 0$, the solutions only exist in the intervals $(-\infty,u_2]$ and $[u_1,\infty)$. Since each interval includes only one turning point, all solutions are unbounded in this case. 

\vspace{0.2cm}

{\em No real roots of $Q$.} We have $Q(\phi) > 0$ so that no turning points exist on $(-\infty,\infty)$. All solutions are unbounded in this case.

\vspace{0.2cm}

{\em Double real roots of $Q$.} In the case of four real roots (counting their multiplicities), if either $u_1 = u_2$ or $u_3 = u_4$, then $k = 1$ and the solution (\ref{form-1}) is non-periodic since $\lim\limits_{k\to 1} K(k) = \infty$ connecting the constant solutions $\phi(x) = u_1$ or $\phi(x) = u_4$ respectively, whereas if $u_2 = u_3$, then $k = 0$ and the solution (\ref{form-1}) is constant as $\phi(x) = u_2 = u_3$. In the case of two real roots and two complex roots (counting their multiplicities), if $u_1 = u_2$, then the only bounded solution is constant as $\phi(x) = u_1$. 

\vspace{0.2cm}

Thus, the bounded periodic (nonconstant) solution exists if and only if $Q$ admits four real simple roots.
\end{proof}

\subsection{Solution form parameterized by roots of the characteristic polynomial}

Let us consider the Lax system (\ref{LS}) for the traveling wave $u(x,t) = \phi(x+ct)$. Separating variables as $\varphi(x,t) = \psi(x+ct) e^{4 \mu t}$, we rewrite (\ref{LS}) in the form:
\begin{equation}
\label{spectral}
\frac{d}{dx} \psi = \left(\begin{array}{ll} i\zeta & \phi \\ \phi  & -i\zeta\end{array} \right) \psi
\end{equation}
and 
\begin{align}
\label{time}
4 \mu \psi = \left( \begin{array}{ll} 4i\zeta^3+2i\zeta \phi^2 - i c \zeta & 4\zeta^2 \phi - 2i\zeta \phi' +2\phi^3 - \phi'' - c \phi \\ 4\zeta^2 \phi + 2i \zeta \phi' + 2 \phi^3 - \phi'' - c \phi & -4i\zeta^3-2i\zeta \phi^2 + i c \zeta \end{array}\right) \psi,
\end{align}
where derivatives and the potentials depend on only one variable $x$ which stands for the traveling wave coordinate $x+ct$. Since (\ref{time}) is the algebraic system, admissible values of $\mu$ are found from the characteristic equation 
\begin{equation}
\label{char-eq}
\left| 
\begin{array}{ll} 4i\zeta^3+2i\zeta \phi^2 - ic \zeta - 4 \mu & 4\zeta^2 \phi - 2i\zeta \phi' + 2\phi^3 - \phi'' - c \phi \\ 4\zeta^2 \phi + 2i \zeta \phi' + 2 \phi^3 - \phi'' - c\phi & -4i\zeta^3-2i\zeta \phi^2 + i c \zeta - 4 \mu \end{array} 
\right| = 0.
\end{equation}
Expanding the determinant in (\ref{char-eq}) and using (\ref{second}) and (\ref{third}), we obtain $\mu^2 + P(\zeta) = 0$, where $P(\zeta)$ is the characteristic polynomial of the traveling waves in the form:
\begin{equation}
\label{char-poly}
P(\zeta) = \zeta^6 - \frac{c}{2} \zeta^4 + \frac{1}{16} (c^2 - 8d) \zeta^2 - \frac{b^2}{16}.
\end{equation}
It is customary to introduce roots $\{ \pm \zeta_1,\pm \zeta_2, \pm \zeta_3\}$ of the polynomial $P$ in the factorization 
\begin{equation}
\label{factor-P}
P(\zeta) = (\zeta^2 - \zeta_1^2) (\zeta^2 - \zeta_2^2) (\zeta^2 - \zeta_3^2),
\end{equation}
Expanding (\ref{factor-P}) and comparing with (\ref{char-poly}) yields the following relations between parameters $(b,c,d)$ and roots $(\zeta_1,\zeta_2,\zeta_3)$ of $P$:
\begin{align}
\label{rP}
\left. \begin{array}{rl} 
\displaystyle \zeta_1^2 + \zeta_2^2 + \zeta_3^2 &= \displaystyle  \frac{c}{2}, \vspace{0.2cm} \\ 
\displaystyle \zeta_1^2 \zeta_2^2 + \zeta_1^2 \zeta_3^2 + 
\zeta_2^2 \zeta_3^2 &= \displaystyle \frac{1}{16} (c^2 - 8d), \vspace{0.2cm} \\
\displaystyle \zeta_1^2 \zeta_2^2 \zeta_3^2 &= \displaystyle  \frac{b^2}{16}. \end{array} \right\}.
\end{align}
The following lemma gives a relation between roots of $Q$ and roots of $P$ in 
(\ref{rG}) and (\ref{rP}). The relations are well established in the literature, see \cite{Kam1,Kam2,Kam3,Kam4} but we still give the proof for the sake of completeness.

\begin{lemma}
	\label{lem-2}
Let $(u_1,u_2,u_3,u_4)$ be roots of $Q$ ordered as $u_4 < u_3 < u_2 < u_1$ in Lemma \ref{lem-1}. Roots $(\zeta_1,\zeta_2,\zeta_3)$ of $P$ are real and 
can be expressed from $(u_1,u_2,u_3,u_4)$ as 
\begin{equation}
\label{relation-roots}
\zeta_1 = \frac{1}{2} (u_1 + u_2), \qquad \zeta_2 = \frac{1}{2} (u_1 + u_3), \qquad \zeta_3 = \frac{1}{2} (u_2 + u_3),
\end{equation}
which satisfies ordering $\zeta_3 < \zeta_2 < \zeta_1$.
\end{lemma}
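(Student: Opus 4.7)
My plan is to verify the three Vieta-type relations in \eqref{rP} directly by substituting the expressions \eqref{relation-roots} and invoking \eqref{rG}. Reality of $\zeta_1,\zeta_2,\zeta_3$ and the ordering $\zeta_3 < \zeta_2 < \zeta_1$ will then be immediate from the assumed ordering $u_4<u_3<u_2<u_1$: indeed $\zeta_1-\zeta_2 = (u_2-u_3)/2 > 0$ and $\zeta_2-\zeta_3 = (u_1-u_2)/2 > 0$, and each $\zeta_i$ is visibly real.

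To streamline the computation I would use the first identity in \eqref{rG} to eliminate $u_4 = -(u_1+u_2+u_3)$, and introduce the elementary symmetric polynomials of the first three roots,
\begin{equation*}
\sigma_1 := u_1+u_2+u_3, \quad \sigma_2 := u_1u_2+u_1u_3+u_2u_3, \quad \sigma_3 := u_1u_2u_3.
\end{equation*}
A short substitution converts the remaining three identities in \eqref{rG} into
\begin{equation*}
\sigma_1^2-\sigma_2 = c, \qquad \sigma_1\sigma_2-\sigma_3 = 2b, \qquad -\sigma_1\sigma_3 = 2d.
\end{equation*}

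Next I would compute the elementary symmetric polynomials $e_1,e_2,e_3$ of the triple $(\zeta_1,\zeta_2,\zeta_3)$ defined by \eqref{relation-roots}. Direct expansion yields $e_1 = \sigma_1$ and $e_2 = \tfrac{1}{4}(\sigma_1^2+\sigma_2)$, while the standard symmetric-function identity $(u_1+u_2)(u_1+u_3)(u_2+u_3) = \sigma_1\sigma_2-\sigma_3$ gives $e_3 = \tfrac{1}{8}(\sigma_1\sigma_2-\sigma_3) = b/4$. The Vieta relations in \eqref{rP} for $(\zeta_1^2,\zeta_2^2,\zeta_3^2)$ are then Newton-type identities in $e_1,e_2,e_3$: the first becomes $\zeta_1^2+\zeta_2^2+\zeta_3^2 = e_1^2-2e_2 = (\sigma_1^2-\sigma_2)/2 = c/2$, and the third reads $\zeta_1^2\zeta_2^2\zeta_3^2 = e_3^2 = b^2/16$, both of which now hold on inspection.

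The remaining step, matching the middle relation $\zeta_1^2\zeta_2^2+\zeta_1^2\zeta_3^2+\zeta_2^2\zeta_3^2 = e_2^2-2e_1 e_3 = (\sigma_1^2+\sigma_2)^2/16 - \sigma_1 b/2$ against $(c^2-8d)/16$, is the only point that requires a small amount of bookkeeping: substituting $\sigma_3 = \sigma_1\sigma_2-2b$ into $-\sigma_1\sigma_3 = 2d$ gives $8d = -4\sigma_1^2\sigma_2+8\sigma_1 b$, whence $c^2-8d = (\sigma_1^2-\sigma_2)^2+4\sigma_1^2\sigma_2-8\sigma_1 b = (\sigma_1^2+\sigma_2)^2-8\sigma_1 b$. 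Dividing by $16$ yields exactly $e_2^2-2e_1e_3$, completing the verification. No genuine obstacle arises; the argument is a symmetric-function identity, with the middle Vieta relation being the only place where $b,c,d$ must all be tracked simultaneously.
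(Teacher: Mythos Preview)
Your proposal is correct and follows essentially the same approach as the paper: direct verification that the expressions \eqref{relation-roots} satisfy the three Vieta relations \eqref{rP}, using the constraints \eqref{rG}. The only difference is organizational---you package the computation via the elementary symmetric polynomials $\sigma_1,\sigma_2,\sigma_3$ of $u_1,u_2,u_3$ (after eliminating $u_4$) and the Newton-type identities $e_1^2-2e_2$ and $e_2^2-2e_1e_3$, whereas the paper expands each relation case by case; your bookkeeping is arguably cleaner, particularly for the middle relation, but the underlying idea is the same.
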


\begin{proof}
We will show that the parameterization (\ref{rP}) is equivalent to the parameterization (\ref{rG}) by using (\ref{relation-roots}). This verifies the validity of (\ref{relation-roots}). Since $(u_1,u_2,u_3,u_4)$ are real by Lemma \ref{lem-1}, then $(\zeta_1,\zeta_2,\zeta_3)$ are real. If $(u_1,u_2,u_3,u_4)$  satisfies ordering $u_4 < u_3 < u_2 < u_1$, then $(\zeta_1,\zeta_2,\zeta_3)$ satisfies ordering $\zeta_3 < \zeta_2 < \zeta_1$.

\vspace{0.2cm}

\underline{For the first equation of system (\ref{rP}),} we obtain with 
(\ref{relation-roots}) and $u_1 + u_2 + u_3 + u_4 = 0$ that 
\begin{align*}
2c &= 4(\zeta_1^2 + \zeta_2^2 + \zeta_3^2) \\
&= (u_1+u_2)^2 + (u_1+u_3)^2 + (u_2 + u_3)^2 \\
&= -(u_1+u_2)(u_3+u_4) - (u_1+u_3)(u_2+u_4) - (u_2 + u_3) (u_1+u_4) \\
&= -2u_1 u_2 - 2 u_1 u_3 - 2 u_1 u_4 - 2 u_2 u_3 - 2 u_2 u_4 - 2 u_3 u_4,
\end{align*}
which recovers the second equation of system (\ref{rG}). 

\vspace{0.2cm}

\underline{For the third equation of system (\ref{rP}),} 
we extract the square root for $b = 4 \zeta_1 \zeta_2 \zeta_3$ 
without loss of generality. It follows from 
(\ref{relation-roots}) and $u_1 + u_2 + u_3 + u_4 = 0$ that
\begin{align*}
2b &= 8 \zeta_1 \zeta_2 \zeta_3 \\
&= (u_1 + u_2) (u_1 + u_3) (u_2 + u_3) \\
&= -u_1(u_2+u_4) (u_2 + u_3) - u_2 (u_1 + u_3) (u_1 + u_4) \\
&= - u_1 u_2 u_3 - u_1 u_2 u_4 - u_1 u_3 u_4 - u_2 u_3 u_4 - u_1 u_2 (u_1 + u_2 + u_3 + u_4) \\
&= - u_1 u_2 u_3 - u_1 u_2 u_4 - u_1 u_3 u_4 - u_2 u_3 u_4,
\end{align*}
which recovers the third equation of system (\ref{rG}). 

\vspace{0.2cm}

\underline{For the second equation of system (\ref{rP}),} we first perform some algebraic manipulations. It follows from (\ref{rP}) and (\ref{relation-roots}) that 
\begin{align*}
4 \zeta_1^2 - c &= 2(\zeta_1^2 - \zeta_2^2 - \zeta_3^2) \\
&= \frac{1}{2} [(u_1+u_2)^2 - (u_1+u_3)^2 - (u_2+u_3)^2] \\
&= \frac{1}{2} [-(u_1+u_2)(u_3+u_4) + (u_1 + u_3)(u_2+u_4) + (u_2+u_3)(u_1+u_4)] \\
&= u_1u_2+u_3u_4,
\end{align*}
and similarly, 
\begin{align*}
4 \zeta_2^2 - c &= u_1u_3+u_2u_4,\\
4 \zeta_3^2 - c &= u_1u_4+u_2u_3.
\end{align*}
By using these relations, we transform the second equation of system (\ref{rP}) to the form:
\begin{align*}
c^2 - 8d &= 16(\zeta_1^2 \zeta_2^2 + \zeta_1^2 \zeta_3^2 + 
\zeta_2^2 \zeta_3^2) \\
&= (c + u_1 u_2 + u_3 u_4) (c + u_1 u_3 + u_2 u_4) + (c + u_1 u_2 + u_3 u_4) (c + u_1 u_4 + u_2 u_3) \\
& \qquad + (c + u_1 u_3 + u_2 u_4) (c + u_1 u_4 + u_2 u_3)  \\ 
&= 3c^2 + 2c (u_1 u_2 + u_1 u_3 + u_1 u_4 + u_2 u_3 + u_2 u_4 + u_3 u_4) \\
& \qquad +(u_1 u_2 + u_3 u_4) (u_1 u_3 + u_2 u_4) + (u_1 u_2 + u_3 u_4) (u_1 u_4 + u_2 u_3) \\
& \qquad  + (u_1 u_3 + u_2 u_4) (u_1 u_4 + u_2 u_3) \\
&= c^2 + u_1 u_2 u_3 (u_1 + u_2 + u_3) + u_1 u_2 u_4 (u_1 + u_2 + u_4) \\
& \qquad + u_1 u_3 u_4 (u_1 + u_3 + u_4) + u_2 u_3 u_4 (u_2 + u_3 + u_4) \\
&= c^2-4u_1u_2u_3u_4,
\end{align*}
which yields $2d = u_1 u_2 u_3 u_4$, that is, the fourth equation of system (\ref{rG}).
\end{proof} 

By using Lemma \ref{lem-2}, we rewrite the solution form (\ref{form-1}) in a different representation parameterized by roots $(\zeta_1,\zeta_2,\zeta_3)$. This result is given by the following lemma.

\begin{lemma}
	\label{lem-3}
The mapping $(\zeta_1,\zeta_2,\zeta_3) \to (b,c,d)$ given by (\ref{rP}) with $b = 4 \zeta_1 \zeta_2 \zeta_3$ is a diffeomorphism if roots of $P$ satisfy $\zeta_3 < \zeta_2 < \zeta_1$. The bounded periodic (nonconstant) solutions of Lemma \ref{lem-1} can be written in the form 
\begin{equation}
\label{form-2}
\phi(x) = \dfrac{2 (\zeta_1 + \zeta_3)(\zeta_2 + \zeta_3)}{(\zeta_1 + \zeta_3) -(\zeta_1 - \zeta_2){\rm sn}^2(\nu x)} - \zeta_1 - \zeta_2 - \zeta_3,
\end{equation}
with
\begin{equation}
\label{parameters-nu-k}
\nu = \sqrt{\zeta_1^2 - \zeta_3^2},  \quad
k = \sqrt{\dfrac{\zeta_1^2 - \zeta_2^2}{\zeta_1^2 - \zeta_3^2}}.
\end{equation} 		
\end{lemma}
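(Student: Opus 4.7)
The proof rests entirely on Lemma \ref{lem-2}, which already identifies $(\zeta_1,\zeta_2,\zeta_3)$ as explicit symmetric combinations of $(u_1,u_2,u_3,u_4)$; once the linear correspondence is inverted, both assertions become elementary corollaries of Lemma \ref{lem-1}. I would first invert (\ref{relation-roots}) using $u_1+u_2+u_3+u_4=0$ to obtain
\begin{equation*}
u_1=\zeta_1+\zeta_2-\zeta_3,\;\; u_2=\zeta_1-\zeta_2+\zeta_3,\;\; u_3=-\zeta_1+\zeta_2+\zeta_3,\;\; u_4=-(\zeta_1+\zeta_2+\zeta_3),
\end{equation*}
and verify that under the sign convention implicit in the statement (namely $\zeta_i > 0$, equivalent to $b = 4\zeta_1\zeta_2\zeta_3 > 0$), the ordering $\zeta_3<\zeta_2<\zeta_1$ delivers exactly the ordering $u_4<u_3<u_2<u_1$ of Lemma \ref{lem-1}.

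For the diffeomorphism claim, the above inversion is a linear bijection whose matrix has determinant $-4$, and the Vieta map from ordered zero-sum quadruples $(u_1,u_2,u_3,u_4)$ to $(b,c,d)$ via (\ref{rG}) is a diffeomorphism on the distinct-roots stratum by the standard Vandermonde-type nonvanishing-Jacobian argument. Equivalently, one can view the map through (\ref{rP}): factor it as $(\zeta_1,\zeta_2,\zeta_3)\mapsto(\zeta_1^2,\zeta_2^2,\zeta_3^2)\mapsto(c/2,(c^2-8d)/16,b^2/16)\mapsto(b,c,d)$, where the first arrow has Jacobian $8\zeta_1\zeta_2\zeta_3\neq 0$, the middle arrow is the elementary-symmetric-polynomial map on a distinct-roots stratum (with Vandermonde Jacobian), and the last arrow involves only the smooth positive square root $b^2\mapsto b$ since the sign of $b$ is pinned by $\zeta_i>0$. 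Composition yields that $(\zeta_1,\zeta_2,\zeta_3)\mapsto(b,c,d)$ is a diffeomorphism on $\{0<\zeta_3<\zeta_2<\zeta_1\}$.

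For the explicit formula, I would substitute the above expressions for the $u_i$ into (\ref{form-1}) and (\ref{parameters-1}). The relevant differences simplify to $u_2-u_4=2(\zeta_1+\zeta_3)$, $u_3-u_4=2(\zeta_2+\zeta_3)$, $u_2-u_3=2(\zeta_1-\zeta_2)$, $u_1-u_3=2(\zeta_1-\zeta_3)$, and $u_1-u_4=2(\zeta_1+\zeta_2)$, so that $(u_1-u_3)(u_2-u_4)=4(\zeta_1^2-\zeta_3^2)$ and $(u_1-u_4)(u_2-u_3)=4(\zeta_1^2-\zeta_2^2)$. Inserting these in (\ref{parameters-1}) produces $\nu=\sqrt{\zeta_1^2-\zeta_3^2}$ and $k=\sqrt{(\zeta_1^2-\zeta_2^2)/(\zeta_1^2-\zeta_3^2)}$, matching (\ref{parameters-nu-k}). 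Inserting the same differences together with $u_4=-(\zeta_1+\zeta_2+\zeta_3)$ into (\ref{form-1}), the common factor of $2$ in numerator and denominator cancels and (\ref{form-2}) drops out verbatim. No serious obstacle arises — the content is already packaged inside Lemma \ref{lem-2}, and the only delicate point is verifying the ordering correspondence under the implicit sign convention on the $\zeta_i$.
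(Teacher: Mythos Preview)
Your proposal is correct and follows essentially the same approach as the paper: invert (\ref{relation-roots}) under the zero-sum constraint to obtain the $u_i$ in terms of the $\zeta_j$, then substitute directly into (\ref{form-1})--(\ref{parameters-1}). The only minor difference is in the diffeomorphism step: the paper computes the Jacobian of $(\zeta_1,\zeta_2,\zeta_3)\mapsto(b,c,c^2-8d)$ directly as a $3\times 3$ determinant, obtaining $2^9(\zeta_1^2-\zeta_2^2)(\zeta_1^2-\zeta_3^2)(\zeta_2^2-\zeta_3^2)$, whereas you factor the map through $(\zeta_1^2,\zeta_2^2,\zeta_3^2)$ and invoke the Vandermonde Jacobian --- the two arguments are equivalent and yield the same nonvanishing condition.
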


\begin{proof}
Inverting (\ref{relation-roots}) with the constraint $u_1 + u_2 + u_3 + u_4 = 0$, we obtain
\begin{align}
\label{relation-inverse}
\left\{  \begin{array}{l} 
u_1 = \zeta_1 + \zeta_2 - \zeta_3,  \\
u_2 = \zeta_1 - \zeta_2 + \zeta_3, \\
u_3 = -\zeta_1 + \zeta_2 + \zeta_3, \\
u_4 = -\zeta_1 - \zeta_2 - \zeta_3. \end{array} \right.
\end{align}
The solution form (\ref{form-1})--(\ref{parameters-1}) transforms to the 
solution form (\ref{form-2})--(\ref{parameters-nu-k}) by using  substitution (\ref{relation-inverse}) and direct computations. 

Next, we show that the mapping $(\zeta_1,\zeta_2,\zeta_3) \to (b,c,d)$ given by (\ref{rP}) with $b = 4 \zeta_1 \zeta_2 \zeta_3$ is invertible for $\zeta_3 < \zeta_2 < \zeta_1$. The Jacobian of the mapping $(\zeta_1,\zeta_2,\zeta_3) \to (b,c,c^2 - 8d)$ is given by 
\begin{align*}
\frac{\partial (b,c,c^2-8d)}{\partial (\zeta_1,\zeta_2,\zeta_3)} &= 2^9 \left| 
\begin{array}{ccc} \zeta_2 \zeta_3 & \zeta_1 \zeta_3 & \zeta_1 \zeta_2 \\ 
\zeta_1 & \zeta_2 & \zeta_3 \\
\zeta_1 (\zeta_2^2 + \zeta_3^2) & \zeta_2 (\zeta_1^2 + \zeta_3^2) & 
\zeta_3 (\zeta_1^2 + \zeta_2^2) 
\end{array} \right|, \\
&= 2^9 \left[ \zeta_2^2 \zeta_3^2 (\zeta_2^2 - \zeta_3^2) - \zeta_1^2 \zeta_3^2 (\zeta_1^2 - \zeta_3^2) + \zeta_1^2 \zeta_2^2 (\zeta_1^2-\zeta_2^2) \right] \\
&= 2^9 (\zeta_1^2 - \zeta_2^2) (\zeta_1^2 - \zeta_3^2) (\zeta_2^2 - \zeta_3^2) \neq 0.
\end{align*}
Since the mapping $(\zeta_1,\zeta_2,\zeta_3) \to (b,c,c^2 - 8d)$ is a diffeomorphism, so is the mapping $(\zeta_1,\zeta_2,\zeta_3) \to (b,c,d)$.
\end{proof}

\begin{example}
	\label{ex-best-choice}
	In view of the scaling transformation (\ref{scal-transf}), it is tempting to fix $\nu = 1$ in the solution form (\ref{form-2}). However, if $\nu$ is fixed and $\zeta_1 = \sqrt{\zeta_3^2 + \nu^2}$, $\zeta_2 = \sqrt{\zeta_3^2 + (1-k^2) \nu^2}$ are uniquely defined by $(\zeta_3^2,k^2)$, we obtain
			\begin{equation}
	\label{par-b-c}
	\left\{  \begin{array}{l}
	c = 2(3 \zeta_3^2 + \nu^2 (2-k^2)), \\
	b^2 = 16 \zeta_3^2 (\zeta_3^2 + \nu^2) (\zeta_3^2 + \nu^2(1-k^2)).
	\end{array} \right.
	\end{equation}
The Jacobian of the mapping $(\zeta_3^2,k^2) \to (c,b^2)$ is computed from (\ref{par-b-c}) as 
	\begin{align*}
	\frac{\partial (c,b^2)}{\partial (\zeta_3^2,k^2)} &= 32 \left| 
	\begin{array}{cc} 3 & -\nu^2 \\ 
	3 \zeta_3^4 + 2 \nu^2 (2-k^2) \zeta_3^2 + \nu^4 (1-k^2) & -\nu^2 \zeta_3^2 (\zeta_3^2 + \nu^2) 
	\end{array} \right|, \\
	&= 32 \nu^4  (\zeta_3^2 (1-2k^2) + \nu^2 (1-k^2)).
	\end{align*}
The mapping $(\zeta_3^2,k^2) \to (c,b^2)$ for fixed $\nu > 0$ is only invertible for every $k^2 \in [0,\frac{1}{2}]$, which does not cover the entire existence region $\Omega$ in (\ref{domain}). This suggests that the solution form (\ref{form-2}) is not convenient for the reduction to the two-parameter form by fixing $\nu = 1$ 
according to the scaling transformation (\ref{scal-transf}).
\end{example}

\begin{example}
	\label{ex-particular-sol}
	The periodic solution (\ref{sn_potential}) with $b = 0$ is obtained for the choice 
\begin{equation}
\label{choice-snoidal}
	\zeta_1 = \frac{1}{2} (1+\tilde{k}), \quad \zeta_2 = \frac{1}{2} (1-\tilde{k}), \quad \zeta_3 = 0,
\end{equation}
	where $\tilde{k} \in (0,1)$ is the elliptic modulus of the solution $\phi(x) = \tilde{k} \sn(x,\tilde{k})$. It is interesting that the general periodic solution (\ref{form-2}) with the elliptic modulus $k \in (0,1)$ does not reduce to the solution form (\ref{sn_potential}) directly but recovers it after Landen's transformation. Indeed, it follows from (\ref{parameters-nu-k}) with (\ref{choice-snoidal}) that 
\begin{equation}
\label{choice-modulus}
	\nu = \frac{1}{2} (1+\tilde{k}), \qquad 
	\tilde{k} = \frac{2 \sqrt{\tilde{k}}}{1 + \tilde{k}}.
\end{equation}
	The solution form (\ref{form-2}) implies with the help of (\ref{choice-snoidal}) and (\ref{choice-modulus}) that 
	\begin{align*}
	\phi(x) &= \frac{1 - \tilde{k}^2}{1 + \tilde{k} - 2 \tilde{k} \sn^2(\nu x,k)} - 1\\
	&= \frac{1 - \tilde{k}^2}{1 - \tilde{k}^2 \sn^2(x,\tilde{k}) + \tilde{k} \cn(x,\tilde{k}) \dn(x,\tilde{k})} - 1,
	\end{align*}
where we have used 
$$
\sn^2(\nu x, k) = \frac{1 - \cn(2 \nu x, k)}{1 + \dn(2 \nu x,k)} = \frac{1}{2} \left[ 1 + \tilde{k} \sn^2(x,\tilde{k}) - \cn(x,\tilde{k}) \dn(x,\tilde{k}) \right]
$$
with the last equality due to the Lander transformation
$$
\cn\left( (1+\tilde{k}) x, \frac{2\sqrt{\tilde{k}}}{1 + \tilde{k}} \right) = \frac{\cn(x,\tilde{k}) \dn(x,\tilde{k})}{1 + \tilde{k} \sn^2(x,\tilde{k})}, \quad 
\dn\left( (1+\tilde{k}) x, \frac{2\sqrt{\tilde{k}}}{1 + \tilde{k}} \right) = \frac{1- \tilde{k} \sn^2(x,\tilde{k})}{1 + \tilde{k} \sn^2(x,\tilde{k})}.
$$
By using the fundamental relations (\ref{fund-elliptic}), we finally obtain 
\begin{align*}
\phi(x) &= \frac{1 - \tilde{k}^2}{\dn(x,\tilde{k}) (\dn(x,\tilde{k}) + 
	\tilde{k} \cn(x,\tilde{k})} - 1 \\
&= \frac{\dn(x,\tilde{k}) - 
\tilde{k} \cn(x,\tilde{k})}{\dn(x,\tilde{k})} - 1 \\
&= - \frac{\tilde{k} \cn(x,\tilde{k})}{\dn(x,\tilde{k})} = - \tilde{k} \sn(x + K(\tilde{k}),\tilde{k}),
\end{align*}
where we have used 
\begin{equation*}
\sn(x + K) = \frac{\cn(x)}{\dn(x)}.
\end{equation*}
This recovers the solution form $\phi(x) = \tilde{k} \sn(x,\tilde{k})$ up to the spatial translation and the sign flip.
\end{example}

\subsection{Solution form expressed by Weierstrass' elliptic functions}

The Weierstrass' elliptic function $\wp(x)$ with two fundamental periods $2\omega$ and $2\omega'$ is defined by the differential equation
\begin{equation}\label{ed}
(\wp'(x))^2=4\wp(x)^3-g_2\wp(x)-g_3 = 4(\wp(x) - e_1) (\wp(x)-e_2) (\wp(x)-e_3),
\end{equation}
where the roots 
\begin{align*}
e_1 = \wp(\omega), \qquad 
e_2 = \wp(\omega+\omega'), \qquad 
e_3 = \wp(\omega')
\end{align*}
are sorted as $e_3 < e_2 < e_1$. Weierstrass' and Jacobi's elliptic functions are related by
\begin{equation}
\label{rel-Jac-Wei}
\wp(x) = e_3 + \frac{e_1-e_3}{\sn^2(\sqrt{e_1-e_3} x,k)}, \qquad k = \sqrt{\frac{e_2-e_3}{e_1-e_3}},
\end{equation}
which yields the following relations betwen the fundamental periods 
\begin{equation}
\label{periods}
\omega = \frac{K(k)}{\sqrt{e_1-e_3}} \quad \mbox{\rm and} \quad 
\omega' = \frac{iK'(k)}{\sqrt{e_1-e_3}}.
\end{equation}
It follows from (\ref{ed}) that 
	\begin{align}
	\label{rG2}
	\left. \begin{array}{r} 
	\displaystyle	e_1+e_2+e_3 = 0, \\
	\displaystyle	e_1e_2+e_2e_3+e_1e_3 = -\frac{1}{4}g_2 \\
	\displaystyle	e_1e_2e_3 = \frac{1}{4}g_3. \end{array} \right\}
	\end{align}
The following lemma gives the solution form for $\phi(x)$ based on $\wp(x)$.  Although this representation is old, see \cite[p.13]{A1990} and \cite[p.453]{W}, it is useful for many computations in Sections \ref{sec-4} and \ref{sec-5}.

\begin{lemma}
	\label{lem-4} 
	The bounded periodic (nonconstant) solutions of Lemma \ref{lem-3} can be written as the linear fractional transformation of Weierstrass' elliptic function:
	\begin{equation}
	\label{lin-frac-trans}
	\phi(x) = \frac{\alpha_1 \wp(x) + \beta_1}{\wp(x) + \gamma_1},
	\end{equation}
	where 
	\begin{align*}
	\alpha_1 &:= \zeta_2 + \zeta_3 - \zeta_1,\\
	\beta_1 &:= \frac{1}{3} \zeta_1 (\zeta_1^2 - 2 \zeta_2^2 - 2 \zeta_3^2 - 3 \zeta_2 \zeta_3) + \frac{1}{3} (\zeta_2 + \zeta_3) (2 \zeta_1^2 - \zeta_2^2 - \zeta_3^2 + 3 \zeta_2 \zeta_3), \\
	\gamma_1 &:= \zeta_1 (\zeta_2 + \zeta_3) - \zeta_2 \zeta_3 -\frac{1}{3} (\zeta_1^2 + \zeta_2^2 + \zeta_3^2).
	\end{align*}
There exists $v \in \mathbb{C}$ in the rectangle $[-\omega,\omega] \times [-\omega',\omega']$  such that   
	\begin{equation}\label{pcW}
	\frac{c}{6}=\wp(v), \quad \frac{b}{2}=\wp'(v),
	\end{equation}
which allows us to rewrite the solution (\ref{lin-frac-trans}) in another form:
	\begin{equation}
\label{phinova}
\phi(x) = \frac{1}{2}\frac{\wp'(x-\frac{v}{2})+\wp'(x+\frac{v}{2})}{\wp(x-\frac{v}{2})-\wp(x + \frac{v}{2})}
\end{equation}
and
	\begin{equation}
	\label{philinha}
	\phi' (x) = \wp\left(x-\frac{v}{2} \right) - \wp\left(x+\frac{v}{2}\right).
	\end{equation}
\end{lemma}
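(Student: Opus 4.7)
The plan is to derive (\ref{lin-frac-trans}) from the Jacobi form of Lemma~\ref{lem-3}, verify (\ref{pcW}) by a short algebraic identity, and obtain (\ref{phinova})--(\ref{philinha}) via the classical Weierstrass $\zeta$-addition formula.

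For (\ref{lin-frac-trans}), I would identify the Weierstrass roots from $\nu^2 = e_1 - e_3 = \zeta_1^2 - \zeta_3^2$ and $k^2(e_1 - e_3) = e_2 - e_3 = \zeta_1^2 - \zeta_2^2$ through (\ref{rel-Jac-Wei}), combined with the normalization $e_1 + e_2 + e_3 = 0$, which yields
$$e_1 = \tfrac{1}{3}(\zeta_1^2 + \zeta_2^2 - 2\zeta_3^2), \quad e_2 = \tfrac{1}{3}(\zeta_1^2 - 2\zeta_2^2 + \zeta_3^2), \quad e_3 = \tfrac{1}{3}(-2\zeta_1^2 + \zeta_2^2 + \zeta_3^2),$$
together with the handy identity $\{c/6 - e_1,\, c/6 - e_2,\, c/6 - e_3\} = \{\zeta_3^2,\, \zeta_2^2,\, \zeta_1^2\}$. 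Substituting $\sn^2(\nu x, k) = (e_1 - e_3)/(\wp(x) - e_3)$ into (\ref{form-2}) and simplifying via $e_1 - e_3 = (\zeta_1 - \zeta_3)(\zeta_1 + \zeta_3)$ reduces the expression to a Möbius transform of $\wp(x)$; collecting the numerator and denominator produces (\ref{lin-frac-trans}) with the stated coefficients $\alpha_1, \beta_1, \gamma_1$.

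For (\ref{pcW}), I would evaluate the factored Weierstrass cubic $4X^3 - g_2 X - g_3 = 4\prod_j (X - e_j)$ at $X = c/6$. Using the identity above and $b = 4\zeta_1\zeta_2\zeta_3$ from (\ref{parameterization}),
$$4\!\left(\tfrac{c}{6}\right)^{\!3} - g_2 \tfrac{c}{6} - g_3 \;=\; 4\prod_{j=1}^{3}\left(\tfrac{c}{6} - e_j\right) \;=\; 4\zeta_1^2\zeta_2^2\zeta_3^2 \;=\; \left(\tfrac{b}{2}\right)^{\!2}.$$
Thus the Weierstrass ODE $(\wp'(v))^2 = 4\wp(v)^3 - g_2\wp(v) - g_3$ is satisfied at $\wp(v) = c/6$ with $\wp'(v) = \pm b/2$, so $v$ exists in the fundamental rectangle; the sign $\wp'(v) = b/2$ is selected by replacing $v$ with $-v$ when necessary, exploiting the parity of $\wp$ and $\wp'$.

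For (\ref{phinova}) and (\ref{philinha}), I would apply the classical $\zeta$-addition formula
$$\zeta(a - b) = \zeta(a) - \zeta(b) + \tfrac{1}{2}\frac{\wp'(a) + \wp'(b)}{\wp(a) - \wp(b)}$$
with $a = x - v/2$, $b = x + v/2$. Since $a - b = -v$ and $\zeta$ is odd, the right-hand side of (\ref{phinova}) equals $\tilde\phi(x) := \zeta(x + v/2) - \zeta(x - v/2) - \zeta(v)$, and differentiating with $\zeta' = -\wp$ immediately produces (\ref{philinha}): $\tilde\phi'(x) = \wp(x-v/2) - \wp(x+v/2)$. Applying the same addition formula to split each of $\zeta(x \pm v/2)$ into contributions at $x$ and $\pm v/2$ collapses $\tilde\phi$ into the linear fractional expression
$$\tilde\phi(x) = 2\zeta(v/2) - \zeta(v) - \frac{\wp'(v/2)}{\wp(x) - \wp(v/2)},$$
which matches (\ref{lin-frac-trans}) under the identifications $\gamma_1 = -\wp(v/2)$, $\alpha_1 = 2\zeta(v/2) - \zeta(v)$, and $\beta_1 = \alpha_1\gamma_1 - \wp'(v/2)$. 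The main obstacle is verifying that the $v$ supplied by (\ref{pcW}) admits these secondary identifications: via the Weierstrass duplication formula $\wp(v) = -2\wp(v/2) + \tfrac{1}{4}(\wp''(v/2)/\wp'(v/2))^2$, this reduces to an algebraic consistency check between $\wp(v), \wp'(v)$ and $\wp(v/2), \wp'(v/2)$, guaranteed by the explicit polynomial expressions of $\alpha_1, \beta_1, \gamma_1$ in $(\zeta_1, \zeta_2, \zeta_3)$. Since $\wp$ attains every complex value in the fundamental parallelogram, a representative $v$ in the rectangle with $\wp(v/2) = -\gamma_1$ always exists, though it need not lie on the real axis.
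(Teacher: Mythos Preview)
Your derivations of (\ref{lin-frac-trans}) and (\ref{pcW}) match the paper's almost verbatim: identify the Weierstrass roots $e_j$ in terms of the $\zeta_j$, substitute (\ref{rel-Jac-Wei}) into (\ref{form-2}), and check that $(c/6,b/2)$ lies on the Weierstrass cubic. Your factored evaluation $4\prod_j(c/6-e_j)=4\zeta_1^2\zeta_2^2\zeta_3^2=(b/2)^2$ is in fact cleaner than the paper's explicit expansion of $g_2$ and $g_3$.

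For (\ref{phinova})--(\ref{philinha}) the routes diverge. The paper, having fixed $v$ by (\ref{pcW}), rewrites the first-order invariant (\ref{third}) as
\[
(\phi')^2=[\phi^2-3\wp(v)]^2+4\wp'(v)\,\phi+g_2-12\wp(v)^2
\]
and then simply cites pp.~103--104 of Akhiezer \cite{A1990} for the classical fact that (\ref{phinova})--(\ref{philinha}) solve this ODE. You instead collapse the right side of (\ref{phinova}) via the $\zeta$-addition formula to the explicit M\"obius form $2\zeta(v/2)-\zeta(v)-\wp'(v/2)/(\wp(x)-\wp(v/2))$ and match it against (\ref{lin-frac-trans}). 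This is a more hands-on route, and the identity $2\zeta(v/2)-\zeta(v)=-\wp''(v/2)/(2\wp'(v/2))$ (the $u\to v$ limit of the addition formula) does make the matching purely algebraic in $\wp(v/2),\wp'(v/2)$, as you claim.

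The one soft spot is the logical order in your last paragraph. Having already \emph{defined} $v$ by (\ref{pcW}), you need $\wp(v/2)=-\gamma_1$ as a \emph{consequence}; but the duplication formula gives $\wp(v)$ rationally in $\wp(v/2),\wp'(v/2)$, not the reverse, so going backwards means choosing among the roots of a quartic. Your final sentence then seems to silently switch to \emph{redefining} $v$ via $\wp(v/2)=-\gamma_1$, which leaves (\ref{pcW}) to be re-established. The clean fix is to commit to one direction: pick $w$ with $\wp(w)=-\gamma_1$ and the sign of $\wp'(w)$ matching $\beta_1-\alpha_1\gamma_1$, so that $\phi=\tilde\phi$ holds by construction, and \emph{then} verify (\ref{pcW}) for $v:=2w$ via duplication. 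That closes the argument self-containedly, which is what your approach buys over the paper's deferral to \cite{A1990}; the cost is carrying out the polynomial check you describe but do not perform.
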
	
	
\begin{proof}
First, we show that parameters $(b,c,d)$ of the quartic polynomial $Q$ in (\ref{third}) are related to parameters $(g_2,g_3)$ of the cubic polynomial in (\ref{ed}) with the correspondence:
\begin{equation}
\label{rel-par-Jac-Wei}
	\left\{ \begin{array}{l} 
	\displaystyle g_2 = \frac{c^2}{12} + 2 d, \\ 
		\displaystyle g_3 = \frac{c^3}{216} - \frac{cd}{3} -\frac{b^2}{4}.
 \end{array} \right.
\end{equation}
To do so, we use the relations:
\begin{equation}
\label{rel-par-e-zeta}
	\left\{ \begin{array}{l} 
\displaystyle e_1 = \frac{1}{3} (\zeta_1^2 + \zeta_2^2 - 2 \zeta_3^2), \\
\displaystyle e_2 = \frac{1}{3} (\zeta_1^2 - 2 \zeta_2^2 + \zeta_3^2), \\
\displaystyle e_3 = \frac{1}{3} (-2 \zeta_1^2 + \zeta_2^2 + \zeta_3^2),
 \end{array} \right.
\quad \Rightarrow \quad 
\left\{ \begin{array}{l} 
e_1 - e_2 = \zeta_2^2 - \zeta_3^2, \\
e_1 - e_3 = \zeta_1^2 - \zeta_3^2, \\
e_2 - e_3 = \zeta_1^2 - \zeta_2^2.
\end{array} \right.
\end{equation}
By substituting (\ref{rel-par-e-zeta}) into (\ref{rG2}) and using (\ref{rP}), 
we obtain the relations (\ref{rel-par-Jac-Wei}) as follows:
\begin{align*}
g_2 &= -4(e_1 e_2 + e_2 e_3 + e_1 e_3) \\ 
&= -4\left[\left(\frac{c}{6}-\zeta_3^2\right)\left(\frac{c}{6}-\zeta_2^2\right)+\left(\frac{c}{6}-\zeta_3^2\right)\left(\frac{c}{6}-\zeta_1^2\right)+\left(\frac{c}{6}-\zeta_2^2\right)\left(\frac{c}{6}-\zeta_1^2\right)\right] \\
&= -4\left[ \frac{c^2}{12}-\frac{c}{3}(\zeta_1^2+\zeta_2^2+\zeta_3^2)+ (\zeta_1^2\zeta_2^2+\zeta_1^2\zeta_3^2+\zeta_2^2\zeta_3^2)\right] \\
&= -4\left[\frac{c^2}{12}-\frac{c^2}{6}+\frac{c^2-8d}{16}\right]\\
&=\frac{c^2}{12}+2d
\end{align*}
and
\begin{align*}
g_3 &= 4 e_1 e_2 e_3 \\
&= 4\left( \frac{c}{6}-\zeta_1^2\right)\left( \frac{c}{6}-\zeta_2^2\right)\left(\frac{c}{6}-\zeta_3^2 \right)\\ 
&= 4\left( \frac{c^3}{216}-\frac{c^2}{36}(\zeta_1^2+\zeta_2^2+\zeta_3^2)+\frac{c}{6}(\zeta_1^2\zeta_2^2+\zeta_1^2\zeta_3^2+\zeta_2^2\zeta_3^2)-\zeta_1^2\zeta_2^2\zeta_3^2\right)\\ 
&= 4\left( \frac{c^3}{216}-\frac{c^3}{72}+\frac{c^3}{96}-\frac{cd}{12}-\frac{b^2}{16} \right) \\
&= \frac{c^3}{216}-\frac{cd}{3}-\frac{b^2}{4}.
\end{align*} 
Comparing (\ref{form-2}) with (\ref{rel-Jac-Wei}) and (\ref{rel-par-e-zeta}) yields the linear fractional transformation (\ref{lin-frac-trans}). 
In order to derive (\ref{phinova}) and (\ref{philinha}), we obtain from (\ref{rel-par-Jac-Wei}) that
\begin{align*}
\left(\frac{b}{2}\right)^2 = 4\left(\frac{c}{6}\right)^3 - g_2\left(\frac{c}{6}\right) - g_3.
\end{align*}
Hence, $\left(\frac{c}{6},\frac{b}{2} \right)$ is a point on the elliptic curve for $\wp(x)$ so that there exists $v \in \mathbb{C}$ in the rectangle $[-\omega,\omega] \times [-\omega',\omega']$ which parameterizes $(b,c)$ in (\ref{pcW}). Consequently, equation (\ref{third}) can be rewritten in the following form
\begin{equation*}
(\phi')^2=[\phi^2-3\wp(v)]^2+4{\wp}'(v)\phi+g_2-12[\wp(v)]^2.
\end{equation*}
which is satisfied by (\ref{phinova}) and (\ref{philinha}) due to computations on pp.103-104 in \cite{A1990}.
\end{proof}

\begin{remark}
By using formula 8.177.1 in \cite{GR}, we can rewrite (\ref{phinova}) in the equivalent form:
\begin{equation}
\label{form-3}
\phi(x) = \zeta\left(x+\frac{v}{2}\right)-\zeta\left(x-\frac{v}{2}\right)-\zeta(v),
\end{equation}
where $\zeta$ is the Weierstrass' zeta function. Furthermore, by using formula 8.166.2 in \cite{GR}, we integrate the product of $\phi$ in (\ref{phinova}) and $\phi'$ in (\ref{philinha}) to obtain 	
\begin{equation}
\label{form-3-squared}
\phi^2(x) = \wp\left(x + \frac{v}{2}\right) + \wp\left(x - \frac{v}{2}\right) + \wp(v).
\end{equation}
\end{remark}

\subsection{Solution form expressed by Jacobi's theta functions}

For the soluton form (\ref{form-2}) in Lemma \ref{lem-3}, it is convenient to use the variable $z := \nu x$. Since $\phi$ is periodic in $z = \nu x$ with the fundamental periods $2 K$ and $2i K'$, we are looking for poles and zeros in the fundamental rectangle $[-K,K] \times [-iK',iK']$. This rectangle in variable $z$ is equivalent to the rectangle $[-\omega,\omega] \times [-\omega',\omega']$ for the Weierstrass' elliptic function $\wp$ in variable $x$ due to (\ref{periods}) and (\ref{rel-par-e-zeta}). 

From the theory of elliptic functions, we adopt the following three propositions, where the numbers of zeros and poles are defined according to their multiplicity and subject to the periodic boundary conditions on the boundaries of the fundamental rectangles.

\begin{proposition}
	\label{prop-1}
	For every $c \in \mathbb{C}$, there exist only two solutions of the elliptic equation $\wp(x) = c$ in the rectangle $[-\omega,\omega] \times [-\omega',\omega']$ or equivalently, only two solutions of the elliptic equation $\sn^2(z) = c$ in the rectangle $[-K,K] \times [-iK',iK']$.
\end{proposition}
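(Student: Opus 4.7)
The plan is to invoke the classical counting theorem for elliptic functions. Recall that if $f$ is an elliptic function of order $n$ (meaning it has $n$ poles, counted with multiplicity, in a fundamental period parallelogram), then for every $c \in \mathbb{C} \cup \{\infty\}$ the equation $f(x) = c$ has exactly $n$ solutions in that parallelogram (counted with multiplicity). This follows from applying the argument principle to $f - c$ around the boundary of the parallelogram: the contributions from opposite sides cancel by periodicity, so the integral of $(f'-0)/(f-c)$ vanishes, yielding that the number of zeros equals the number of poles.

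For the Weierstrass statement, I would argue as follows. In the fundamental parallelogram $[-\omega,\omega] \times [-\omega',\omega']$, the function $\wp$ is holomorphic except for a single double pole at $x = 0$, so $\wp$ has order $2$. Hence, by the counting theorem, $\wp(x) = c$ admits exactly two solutions (with multiplicity) in this rectangle for every $c \in \mathbb{C}$. Note that when $c = e_j$ for $j \in \{1,2,3\}$, the two solutions coincide at a half-period where $\wp' = 0$, consistent with the double counting.

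For the Jacobi statement, I would note that $\sn^2(z,k)$ is periodic with periods $2K$ and $2iK'$ (since $\sn(z+2K) = -\sn(z)$ and $\sn(z+2iK') = \sn(z)$), so $[-K,K] \times [-iK',iK']$ is indeed a fundamental rectangle for $\sn^2$. In this rectangle, $\sn^2$ has a single pole at $z = iK'$ of order $2$, making it again an elliptic function of order $2$. The counting theorem yields the conclusion. Alternatively, the equivalence between the two statements follows directly from the identity (\ref{rel-Jac-Wei}), which reads $\wp(x) = e_3 + (e_1-e_3)/\sn^2(\sqrt{e_1-e_3}\,x,k)$; under the substitution $z = \sqrt{e_1-e_3}\,x$, the equation $\wp(x) = c$ becomes $\sn^2(z) = (e_1-e_3)/(c-e_3)$, and the rectangles match via the relations (\ref{periods}).

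The only mild subtlety to watch is the convention about the boundary of the fundamental rectangle: one must identify opposite sides to avoid double counting of solutions lying on the boundary. Once this is fixed, there is no real obstacle, as the result is a direct consequence of the fact that both $\wp$ and $\sn^2$ are elliptic of order exactly two on their respective fundamental rectangles.
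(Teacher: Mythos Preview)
Your proof is correct and follows essentially the same approach as the paper: both arguments reduce to the fact that $\wp$ and $\sn^2$ are elliptic functions of order two and then invoke the standard counting theorem for elliptic functions. The paper simply cites Akhiezer and Whittaker--Watson for these facts, whereas you spell out the argument-principle justification and the periodicity of $\sn^2$ explicitly, which is a welcome addition but not a different route.
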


\begin{proof}
Since the Weierstrass function $\wp$ is an elliptic function of the second order (pages 8 and 12 in \cite{A1990}),  the assertion follows from 20-13 on page 432 in \cite{W}.
\end{proof}

\begin{proposition}
	\label{prop-2}
	For every elliptic function, the number of zeros and poles in the rectangle $[-K,K] \times [-iK',iK']$ coincide. 
\end{proposition}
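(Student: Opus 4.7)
The plan is to deduce Proposition \ref{prop-2} from the argument principle applied to the fundamental rectangle $R = [-K,K] \times [-iK',iK']$. If $f$ is a (nonconstant) elliptic function with fundamental periods $2K$ and $2iK'$, then for a rectangle $R$ whose boundary $\partial R$ contains no zero or pole of $f$, the argument principle gives
\begin{equation*}
N - P = \frac{1}{2\pi i} \oint_{\partial R} \frac{f'(z)}{f(z)}\, dz,
\end{equation*}
where $N$ and $P$ denote the number of zeros and poles of $f$ in $R$, counted with multiplicity. The goal is to show that this contour integral vanishes.

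The key step is to exploit double periodicity. Since $f(z + 2K) = f(z)$ and $f(z + 2iK') = f(z)$, the logarithmic derivative $f'/f$ is also doubly periodic with the same periods. The boundary $\partial R$ splits into four sides, which pair up as opposites related by a translation by $2K$ or $2iK'$. Traversing $\partial R$ in the positive sense, the contributions of each pair of opposite sides cancel because they are parameterized in opposite directions and carry equal integrand values by periodicity. Hence the full boundary integral equals zero, and so $N = P$.

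The only subtlety is the assumption that no zeros or poles lie on $\partial R$. This is handled by a standard shift argument: because the zeros and poles of $f$ form a discrete set, one may replace $R$ by its translate $R + z_0$ for a suitably small $z_0 \in \mathbb{C}$ so that $\partial(R + z_0)$ avoids all zeros and poles. By periodicity, the shifted rectangle contains exactly the same number of zeros and poles as $R$ (once we identify points modulo the period lattice), and the computation above applies verbatim to $\partial(R + z_0)$.

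I do not anticipate a serious obstacle here; the only care needed is the boundary-avoidance shift, which is routine. The statement is a classical corollary of the residue theorem for doubly periodic meromorphic functions (cf.\ Chapter~20 of \cite{W} or the corresponding sections of \cite{A1990,Lawden}), and the proof above makes explicit the translation to the fundamental rectangle $[-K,K] \times [-iK',iK']$ used throughout the rest of the paper.
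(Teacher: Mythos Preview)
Your proof is correct and is precisely the classical argument-principle proof of this standard fact. The paper itself does not give an independent proof but simply cites \S 20$\cdot$13 of Whittaker--Watson \cite{W}; your write-up spells out exactly the argument found there, so the approaches coincide.
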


\begin{proof}
	The assertion follows by 20-13 on page 432 in \cite{W}.
\end{proof}

\begin{proposition}
	\label{prop-3}
Let $f(x) : \mathbb{C} \to \mathbb{C}$ be an elliptic function with two fundamental periods $2\omega, 2\omega' \in \mathbb{C}$ and with $N$ zeros $\{ x'_1, x'_2, \dots, x'_N \}$ and $N$ poles $\{ x_1, x_2, \dots, x_N \}$ inside $[-\omega,\omega] \times [-\omega',\omega']$. There exists a constant $C \in \mathbb{C}$ such that 
\begin{equation}
\label{factorzation-f}
f(x) = C \frac{H(x - x_1') H(x-x_2') \dots H(x-x'_N)}{H(x - x_1) H(x-x_2) \dots H(x-x_N)},
\end{equation}
where $H(x)$ is Jacobi's theta function given by (\ref{Jacobi-theta}).
\end{proposition}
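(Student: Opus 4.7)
The plan is to show that the right-hand side of \eqref{factorzation-f} has exactly the same zeros and poles as $f$ with matching multiplicities, to verify that it is itself an elliptic function with the same fundamental periods $2\omega = 2K$ and $2\omega' = 2iK'$, and then to conclude by Liouville's theorem applied to the quotient.

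First, I would introduce the candidate
\begin{equation*}
g(x) := \frac{H(x-x_1') H(x-x_2') \cdots H(x-x_N')}{H(x-x_1) H(x-x_2) \cdots H(x-x_N)}
\end{equation*}
and analyze its zeros and poles. Since $H$ has simple zeros precisely at the lattice points $2mK + 2niK'$, each factor $H(x-x_j')$ in the numerator has simple zeros congruent to $x_j'$ modulo the period lattice, and similarly each denominator factor $H(x-x_j)$ contributes a simple pole congruent to $x_j$. Repeated entries in the lists account for higher-order zeros and poles, so inside the fundamental rectangle $[-\omega,\omega]\times[-\omega',\omega']$ the function $g$ has precisely the same divisor as $f$.

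Second, I would check quasi-periodicity. Using the identities $H(y+2K) = -H(y)$ and $H(y+2iK') = -e^{-i\pi(y+iK')/K} H(y)$, one computes $g(x+2K) = g(x)$ since $N$ signs cancel between numerator and denominator. For the other period,
\begin{equation*}
g(x+2iK') = \exp\!\left( \tfrac{i\pi}{K} \sum_{j=1}^{N}(x_j - x_j') \right) g(x).
\end{equation*}
For $g$ to be elliptic with the same periods as $f$, one needs $\sum_j (x_j' - x_j) \in 2K\,\mathbb{Z}$. This is exactly the content of Abel's theorem for elliptic functions applied to $f$, which guarantees that $\sum_j x_j' - \sum_j x_j$ lies in the period lattice $2K\,\mathbb{Z} + 2iK'\,\mathbb{Z}$. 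If the $2iK'$-component of that sum is nonzero, I would replace one zero $x_k'$ by the shifted representative $x_k' + 2miK'$ (for suitable $m\in\mathbb{Z}$); this leaves its location unchanged modulo the period lattice and only introduces a nonzero constant prefactor, which is harmless because the statement allows an arbitrary constant $C$. After this adjustment, $g(x+2iK') = g(x)$, so $g$ is genuinely elliptic with periods $2K$ and $2iK'$.

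Finally, I would form the ratio $h(x) := f(x)/g(x)$. Both $f$ and $g$ are elliptic with the same fundamental periods and the same divisor, so $h$ is an elliptic function with no poles anywhere in $\mathbb{C}$; by Liouville's theorem for elliptic functions, $h$ is constant, giving $f(x) = C\,g(x)$ for some $C\in\mathbb{C}$ and proving \eqref{factorzation-f}. The main obstacle is the quasi-periodicity bookkeeping: the theta function $H$ is not itself elliptic, so one must carefully invoke Abel's relation on the divisor of $f$ to arrange that the exponential prefactor under translation by $2iK'$ collapses to $1$, and only then can Liouville's theorem be applied.
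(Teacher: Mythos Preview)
The paper does not actually prove this proposition; its ``proof'' is the one-line reference ``See Sections 14 and 19 in \cite{A1990}.'' Your sketch is precisely the classical argument found in Akhiezer (or Whittaker--Watson): match the divisor, use Abel's relation to control quasi-periodicity, and finish with Liouville's theorem for elliptic functions. So in substance you and the paper agree.

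One small slip worth fixing: when you shift a representative $x_k' \mapsto x_k' + 2miK'$, the factor $H(x-x_k'-2miK')$ does \emph{not} differ from $H(x-x_k')$ by a constant. The transformation law (cf.\ the paper's own identity $H(z+2iK') = -e^{-\pi K'/K} e^{-i\pi z/K} H(z)$) produces an $x$-dependent exponential $e^{\pm i m\pi x/K}$. The argument still works, but for a different reason than you state: this exponential is $2K$-periodic, and under the $2iK'$-shift it contributes exactly the factor needed to cancel the residual multiplier $e^{i\pi(\sum x_j - \sum x_j')/K}$, so the adjusted product \emph{is} doubly periodic and Liouville applies. You should replace ``only introduces a nonzero constant prefactor'' with a sentence to this effect.
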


\begin{proof}
	See Sections 14 and 19 in \cite{A1990}.
\end{proof}

By Lemma \ref{lem-4}, the profile $\phi$ is an elliptic function, see  (\ref{lin-frac-trans}), (\ref{phinova}), and (\ref{form-3}). The following lemma specifies the number and locations of zeros and poles of $\phi$ in the fundamental rectangle. To avoid confusion, we write explicitly the 
elliptic modulus $k \in (0,1)$ and use $k' = \sqrt{1-k^2}$ in Jacobi's elliptic functions.

\begin{lemma}
	\label{lem-5}
Assume that $0 < \zeta_3 < \zeta_2 < \zeta_1$. There exist exactly two simple poles of $\phi(x)$ for $z = \nu x$ in $[-K,K] \times [-iK',iK']$ at $\pm (iK' + \alpha)$ with $\alpha \in (0,K)$ given by 
\begin{equation}
\label{root-1}
\sn(\alpha,k) = \sqrt{\frac{\zeta_1 - \zeta_3}{\zeta_1 + \zeta_2}}
\end{equation}
and two simple zeros of $\phi$ in $[-K,K] \times [-iK',iK']$ at $\pm \beta$, 
where $\beta \in (0,K)$ is real for $\zeta_1 > \zeta_2 + \zeta_3$ 
given by 
\begin{equation}
\label{root-2}
\sn(\beta,k) = \sqrt{\frac{(\zeta_1 + \zeta_3)(\zeta_1 - \zeta_2 - \zeta_3)}{
	(\zeta_1 - \zeta_2) (\zeta_1 + \zeta_2 + \zeta_3)}}, 
\end{equation}
$\beta \in i (0,K')$ is purely imaginary for $\zeta_1 < \zeta_2 + \zeta_3$ given by 
	\begin{equation}
\label{root-3}
\sn(-i\beta,k') = \sqrt{\frac{(\zeta_1 + \zeta_3)(\zeta_2 + \zeta_3 - \zeta_1)}{
	(\zeta_2 + \zeta_3) (\zeta_1 - \zeta_2 + \zeta_3)}},
\end{equation}
and $\beta = 0$ for $\zeta_1 = \zeta_2 + \zeta_3$ (in which case, there is only one double zero). 
\end{lemma}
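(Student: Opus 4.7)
The plan is to work directly from the explicit form (\ref{form-2}) in Lemma \ref{lem-3}. In the variable $z = \nu x$, the profile $\phi$ is a rational function of $\sn^2(z,k)$, so its poles in $[-K,K]\times[-iK',iK']$ solve $\sn^2(z,k) = (\zeta_1+\zeta_3)/(\zeta_1-\zeta_2)$, while setting $\phi = 0$ and clearing denominators yields the zero equation $\sn^2(z,k) = R$ with $R := (\zeta_1+\zeta_3)(\zeta_1-\zeta_2-\zeta_3)/[(\zeta_1-\zeta_2)(\zeta_1+\zeta_2+\zeta_3)]$. By Proposition \ref{prop-1} each such $\sn^2$-equation has exactly two solutions in the fundamental rectangle, and by Proposition \ref{prop-2} the total multiplicities of zeros and poles must coincide.

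For the poles, the value $(\zeta_1+\zeta_3)/(\zeta_1-\zeta_2) > 1$ (since $\zeta_3 > 0$ and $\zeta_1 > \zeta_2$), so no real solution lies in $[-K,K]$. I would make the ansatz $z = iK' + \alpha$ with $\alpha \in (0,K)$ and use the imaginary-argument identity $\sn(iK'+u,k) = 1/[k\,\sn(u,k)]$ to rewrite the equation as $\sn^2(\alpha,k) = (\zeta_1-\zeta_2)/[k^2(\zeta_1+\zeta_3)]$. Substituting $k^2 = (\zeta_1^2-\zeta_2^2)/(\zeta_1^2-\zeta_3^2)$ from (\ref{parameters-nu-k}) and cancelling the common factors reduces this to $\sn^2(\alpha,k) = (\zeta_1-\zeta_3)/(\zeta_1+\zeta_2) \in (0,1)$, which delivers the unique $\alpha$ of (\ref{root-1}). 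The second pole at $-(iK'+\alpha)$ arises from the evenness of $\sn^2$, and both poles are simple because $(d/dz)\sn^2(z,k) = 2\sn(z,k)\cn(z,k)\dn(z,k)$ is nonzero at $z = iK'+\alpha$, which under the same transformation reduces to $\sn(\alpha,k)\cn(\alpha,k)\dn(\alpha,k) \ne 0$ on $(0,K)$.

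For the zeros I would split into three cases according to the sign of $\zeta_1-\zeta_2-\zeta_3$. When $\zeta_1 > \zeta_2+\zeta_3$, one has $R > 0$; the identity
\[
(\zeta_1-\zeta_2)(\zeta_1+\zeta_2+\zeta_3) - (\zeta_1+\zeta_3)(\zeta_1-\zeta_2-\zeta_3) = (\zeta_2+\zeta_3)(\zeta_1-\zeta_2+\zeta_3)
\]
forces $R < 1$, so there exists a unique real $\beta \in (0,K)$ satisfying (\ref{root-2}), with the second zero at $-\beta$. When $\zeta_1 < \zeta_2+\zeta_3$, we have $R < 0$, so I would set $\beta = iy$ with $y \in (0,K')$ and apply $\sn(iy,k) = i\sn(y,k')/\cn(y,k')$ to convert the equation to $\sn^2(y,k') = -R/(1-R)$; the same algebraic identity simplifies the right-hand side to $(\zeta_1+\zeta_3)(\zeta_2+\zeta_3-\zeta_1)/[(\zeta_2+\zeta_3)(\zeta_1-\zeta_2+\zeta_3)]$, matching (\ref{root-3}), which lies in $(0,1)$ and thus has a unique $y \in (0,K')$. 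When $\zeta_1 = \zeta_2+\zeta_3$ we have $R = 0$, and the only solution is $z = 0$, which must be a double zero to balance the two simple poles by Proposition \ref{prop-2}; this is consistent with the alternative form (\ref{wave-deg}) in Theorem \ref{theorem-wave}.

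The main technical obstacle is the bookkeeping of the algebraic identities needed to simplify the right-hand sides of the $\sn^2$-equations to the clean square roots in (\ref{root-1})--(\ref{root-3}) and to verify in each case that these right-hand sides lie in $(0,1)$, ensuring unique existence of $\alpha$, $\beta$, or $y$ in the claimed intervals. Once these identities are in place, simplicity of the zeros in the non-degenerate regimes is automatic from $\sn(\beta,k) \in (0,1)$, which keeps $\cn(\beta,k)\dn(\beta,k)$ nonzero, and the double-zero claim in the degenerate regime follows from the pole--zero multiplicity count.
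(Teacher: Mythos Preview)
Your proposal is correct and follows essentially the same route as the paper: both start from the explicit form (\ref{form-2}), locate the poles via $\sn^2(z,k)=(\zeta_1+\zeta_3)/(\zeta_1-\zeta_2)$ with the shift $z=iK'+\alpha$ and the identity $\sn(z+iK',k)=1/[k\,\sn(z,k)]$, and then handle the zeros by the same three-case split on the sign of $\zeta_1-\zeta_2-\zeta_3$, using $\sn(iy,k)=i\sn(y,k')/\cn(y,k')$ in the imaginary case. Your explicit check of simplicity via the nonvanishing of $(d/dz)\sn^2(z,k)$ is a small addition the paper leaves implicit, but otherwise the arguments coincide.
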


\begin{proof}
It follows from (\ref{form-2}) that the poles of $\phi$ are obtained from roots of the following equation:
\begin{equation}
\label{eq-1}
\sn^2(z,k) = \frac{\zeta_1 + \zeta_3}{\zeta_1 - \zeta_2} > 1.
\end{equation}
By using the formula 
$$
\sn(z+iK',k) = \frac{1}{k \sn(z,k)},
$$
we can rewrite (\ref{eq-1}) for $z = iK' + \alpha$:
\begin{equation*}
\sn^2(\alpha,k) = \frac{1}{k^2} \frac{\zeta_1-\zeta_2}{\zeta_1 + \zeta_3} = \frac{\zeta_1 - \zeta_3}{\zeta_1 + \zeta_2}  \in (0,1),
\end{equation*}
By Proposition \ref{prop-1}, there exist only two solutions of this equation in $[-K,K] \times [-iK',iK']$ and since $\sn^2(\alpha,k) \in (0,1)$, the roots are located 
symmetrically at $\pm \alpha$, where $\alpha \in (0,K)$ is uniquely obtained from (\ref{root-1}). Due to $2iK'$ periodicity of $\sn^2(z,k)$, the two roots can be equivalently placed  at $\pm (iK' + \alpha)$.

It follows from (\ref{form-2}) that the zeros of $\phi$ are obtained from roots of the following equation:
\begin{equation}
\label{eq-2}
\sn^2(z,k) = \frac{(\zeta_1 + \zeta_3)(\zeta_1 - \zeta_2 - \zeta_3)}{
	(\zeta_1 - \zeta_2) (\zeta_1 + \zeta_2 + \zeta_3)}.
\end{equation}
If $\zeta_1 = \zeta_2 + \zeta_3$, then the only solution of (\ref{eq-2}) is a double zero at $0$. If $\zeta \neq \zeta_2 + \zeta_3$, two solutions of (\ref{eq-2}) correspond to two simple zeros.
\begin{itemize}
	\item If $\zeta_1 > \zeta_2 + \zeta_3$, then
	\begin{equation*}
\frac{(\zeta_1 + \zeta_3)(\zeta_1 - \zeta_2 - \zeta_3)}{
		(\zeta_1 - \zeta_2) (\zeta_1 + \zeta_2 + \zeta_3)} \in (0,1)
	\end{equation*}
	since  
	$$
	(\zeta_1 + \zeta_3)(\zeta_1 - \zeta_2 - \zeta_3) <
	(\zeta_1 - \zeta_2) (\zeta_1 + \zeta_2 + \zeta_3)
	$$
	is satisfied due to 
	$$
	(\zeta_2 + \zeta_3) (\zeta_1 - \zeta_2 + \zeta_3) > 0.
	$$
As a result, the roots of (\ref{eq-2}) are real and located symmetrically at $\pm \beta$, where $\beta \in (0,K)$ is uniquely obtained from (\ref{root-2}).
	
	\item If $\zeta_1 < \zeta_2 + \zeta_3$, then 
	\begin{equation*}
\frac{(\zeta_1 + \zeta_3)(\zeta_1 - \zeta_2 - \zeta_3)}{
	(\zeta_1 - \zeta_2) (\zeta_1 + \zeta_2 + \zeta_3)} < 0.
\end{equation*}	
By using the formula
	$$
	\sn(iz,k) = \frac{i \sn(z,k')}{\cn(z,k')},
	$$
we rewrite (\ref{eq-2}) in the equivalent form for $z = i z'$:
	\begin{equation*}
	\sn^2(z',k') = \frac{(\zeta_1 + \zeta_3)(\zeta_2 + \zeta_3 - \zeta_1)}{
		(\zeta_2 + \zeta_3) (\zeta_1 - \zeta_2 + \zeta_3)} \in (0,1),
	\end{equation*}
where the right-hand side belongs to $(0,1)$ since 
	$$
	(\zeta_1 + \zeta_3)(\zeta_2 + \zeta_3 - \zeta_1) < 
	(\zeta_2 + \zeta_3) (\zeta_1 - \zeta_2 + \zeta_3)
	$$
	is satisfied due to 
	$$
	(\zeta_1 - \zeta_2) (\zeta_1 + \zeta_2 + \zeta_3) > 0.
	$$
	As a result, the roots of (\ref{eq-2}) are purely imaginary and located symmetrically at $\pm \beta$, where $\beta \in i (0,K')$ is uniquely obtained from (\ref{root-3}).
\end{itemize}
These computations complete the proof of the assertions. 
\end{proof}

Figure \ref{fig-zeros} shows the location of poles and zeros of $\phi(x)$ 
for $z = \nu x$ in $[-K,K] \times [-iK',iK']$.

\begin{figure}[htb!]
	\includegraphics[width=7.5cm,height=6cm]{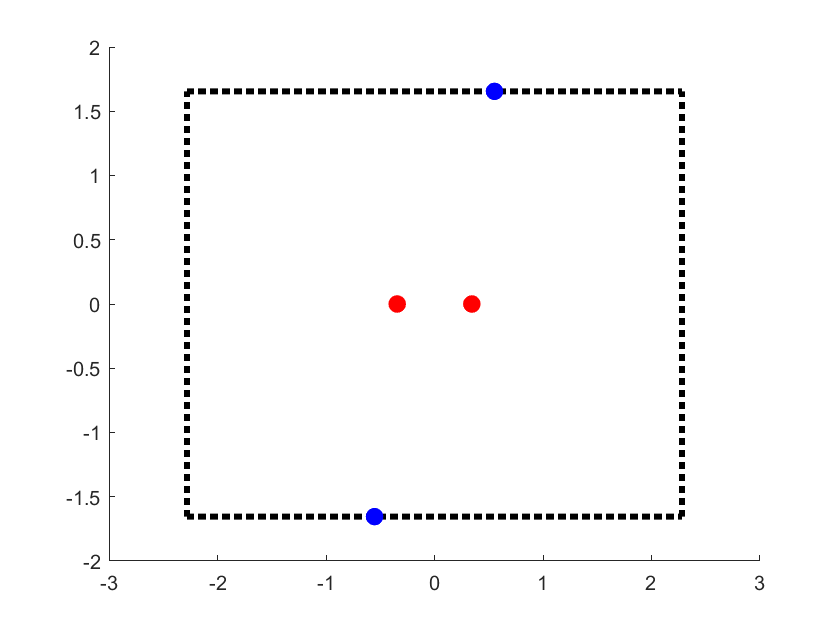}
	\includegraphics[width=7.5cm,height=6cm]{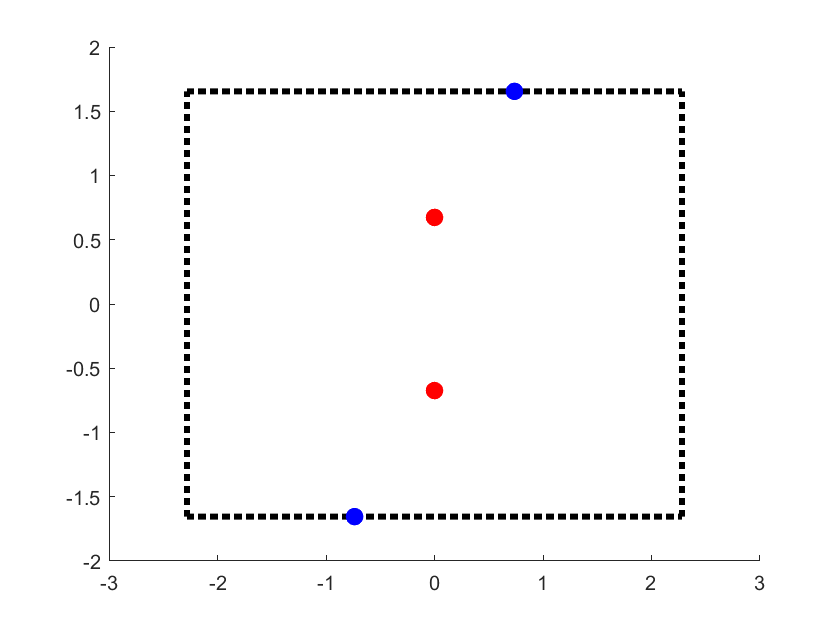}
	\caption{The rectangle $[-K,K] \times [-iK',iK']$ in the $z$-plane (dotted black) with zeros of $\phi$ (red dots) and poles of $\phi$ (blue dots). The zeros of $\phi$ are located on the real axis for $\zeta_1 > \zeta_2 + \zeta_3$ (left) and on the imaginary axis for $\zeta_1 < \zeta_2 + \zeta_3$ (right) whereas the poles of $\phi$ are located at $\pm (i K' + \alpha)$ with $\alpha \in (0,K)$ given by (\ref{root-1}).}
	\label{fig-zeros}
\end{figure}

\begin{lemma}
	\label{lem-v-alpha-connection}
The point $v \in [-\omega,\omega] \times [-\omega',\omega']$ in Lemma \ref{lem-4} is related to the value of $\alpha$ in Lemma \ref{lem-5} by the following correspondence:
\begin{equation}
\label{v-alpha-correspondence}
\frac{v}{2} = -\frac{iK' + \alpha}{\sqrt{e_1-e_3}}.
\end{equation}
\end{lemma}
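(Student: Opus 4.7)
The plan is to match the two locations of simple poles of the elliptic profile $\phi$ identified via two different representations, namely the Weierstrass form (\ref{phinova}) in Lemma \ref{lem-4} and the theta-function counting in Lemma \ref{lem-5}, and then fix the sign using the parameterization (\ref{pcW}).

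First, I would determine the poles of $\phi$ from the Weierstrass representation (\ref{phinova}). Setting $y = x - v/2$ and expanding near $y = 0$, where $\wp(y) = y^{-2} + O(y^2)$ and $\wp'(y) = -2y^{-3} + O(y)$ while $\wp(v+y)$ and $\wp'(v+y)$ remain regular, the numerator of (\ref{phinova}) behaves as $-2y^{-3} + \wp'(v) + O(y)$ and the denominator as $y^{-2} - \wp(v) + O(y)$. The leading orders cancel in the quotient and yield $\phi(x) \sim -y^{-1}$, so $x = v/2$ is a simple pole; by the $\pm v$ symmetry of (\ref{phinova}) the same holds at $x = -v/2$. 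Since by Proposition \ref{prop-2} the number of poles must equal the two zeros counted in Lemma \ref{lem-5}, these exhaust the poles of $\phi$ in the fundamental rectangle $[-\omega,\omega]\times[-\omega',\omega']$.

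Next, I would translate the poles of Lemma \ref{lem-5} from the scaled variable $z=\nu x$ into the $x$ variable. Using (\ref{parameters-nu-k}) and (\ref{rel-par-e-zeta}) one has $\nu = \sqrt{\zeta_1^2-\zeta_3^2} = \sqrt{e_1-e_3}$, and by (\ref{periods}) the fundamental rectangles of $\wp(x)$ and of $\sn^2(z)$ match exactly under $x=z/\nu$. Hence $\phi$ has simple poles at $x = \pm (iK'+\alpha)/\sqrt{e_1-e_3}$ in the $x$ rectangle, and comparison with the pair $\{\pm v/2\}$ from the previous step gives
\begin{equation*}
\frac{v}{2} \;=\; \pm\,\frac{iK'+\alpha}{\sqrt{e_1-e_3}} \pmod{2\omega,2\omega'}.
\end{equation*}

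Finally, to pin down the sign I would use the second equation in (\ref{pcW}), namely $b/2 = \wp'(v)$. Because $\wp'$ is odd, the two lattice classes $\pm v$ are distinguished by the sign of $\wp'(v)$. In the regime $0<\zeta_3<\zeta_2<\zeta_1$ one has $b = 4\zeta_1\zeta_2\zeta_3 > 0$, so the correct $v$ is the one with $\wp'(v)>0$. Evaluating $\wp'(v)$ at the imaginary-half-period-shifted argument using the standard translation identities relating $\wp$ and $\wp'$ to Jacobi functions (following (\ref{rel-Jac-Wei}) and the computations on pp.~103--104 of \cite{A1990}) singles out the minus sign in (\ref{v-alpha-correspondence}). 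I expect this sign verification to be the main obstacle: the pole matching is immediate from the local Laurent expansion above, but confirming $\wp'(v)>0$ at the specific point $v = -2(iK'+\alpha)/\sqrt{e_1-e_3}$ requires a careful tracking of signs through the Weierstrass--Jacobi dictionary.
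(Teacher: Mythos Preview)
Your pole-matching strategy is the same as the paper's, and the Laurent expansion you give already produces the residue $-1$ at $x=v/2$ (and $+1$ at $x=-v/2$). The difference lies in how the sign is fixed. You propose to invoke $\wp'(v)=b/2>0$ from (\ref{pcW}) and then evaluate $\wp'$ at the candidate complex point, which you correctly flag as the main obstacle. The paper instead stays with residues: it expands the rational Jacobi form (\ref{form-2}) near $\nu x=iK'+\alpha$ using the shift identities (\ref{shift-by-iK}) and the explicit values (\ref{elliptic-alpha}), obtaining $\lim_{\nu x\to iK'+\alpha}(\nu x-iK'-\alpha)\phi(x)=\nu$, i.e.\ residue $+1$ in the variable $x$. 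Matching this against your $\mp 1$ at $\pm v/2$ forces $-v/2=(iK'+\alpha)/\nu$ directly.

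Your route via $\wp'(v)$ is valid and can be completed: differentiating (\ref{rel-Jac-Wei}) gives $\wp'(x)=-2\nu^3\cn(\nu x)\dn(\nu x)/\sn^3(\nu x)$, and after the $2iK'$ shift and the double-angle values (\ref{double-1})--(\ref{double-2}) one finds $\wp'\bigl(-2(iK'+\alpha)/\nu\bigr)=2\zeta_1\zeta_2\zeta_3>0$, as required. But since you already hold the residues from your Laurent step, the paper's residue comparison is the shorter finish; no ``careful tracking of signs'' through $\wp'$ is needed.
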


\begin{proof}
Since $\wp(x) = x^{-2} + \mathcal{O}(1)$ as $x \to 0$, it follows from (\ref{philinha}) that $x = \pm \frac{v}{2}$ are simple poles of $\phi(x)$ such that 
\begin{align}
\label{singularity-1}
\lim_{x \to \pm \frac{v}{2}} \left( x \mp \frac{v}{2} \right) \phi(x) = \mp 1.
\end{align}
Expanding (\ref{form-2}) in $z$ near $z = i K' + \alpha$ yields
\begin{align}
\lim_{\nu x \to i K' + \alpha} ( \nu x - i K' - \alpha) \phi(x) 
&= -\frac{(\zeta_1 + \zeta_3) (\zeta_2 + \zeta_3)}{(\zeta_1 - \zeta_2) \sn(iK'+\alpha) \cn(iK' + \alpha) \dn(iK' + \alpha)} \notag \\
&= \frac{(\zeta_1 + \zeta_3) (\zeta_2 + \zeta_3) k^2 \sn^3(\alpha)}{(\zeta_1 - \zeta_2) \cn(\alpha) \dn(\alpha)},
\label{singularity-3}
\end{align}
where we have used translation formulas for Jacobi's elliptic functions 
\begin{equation}
\label{shift-by-iK}
\sn(z+iK') = \frac{1}{k \sn(z)}, \quad 
\cn(z+iK') = \frac{-i \dn(z)}{k \sn(z)}, \quad 
\dn(z+iK') = \frac{-i \cn(z)}{\sn(z)}.
\end{equation}
The limit (\ref{singularity-3}) is computed by using  (\ref{parameters-nu-k-again}) and  (\ref{elliptic-alpha}),
\begin{align}
\label{singularity-2}
\lim_{\nu x \to i K' + \alpha} ( \nu x - i K' - \alpha) \phi(x) = \nu.
\end{align}
Since $\nu = \sqrt{e_1-e_3}$, comparing (\ref{singularity-1}) and (\ref{singularity-2}) implies (\ref{v-alpha-correspondence}). 
\end{proof}

Next, we incorporate the poles and zeros of the elliptic function $\phi$ in Lemma \ref{lem-5} and obtain the solution form (\ref{gen-theta}) in Theorem \ref{theorem-wave}. This is done by using Propositions \ref{prop-2} and \ref{prop-3} with the factorization formula (\ref{factorzation-f}) for $N = 2$.

\begin{lemma}
	\label{lem-6}
Assume $0 < \zeta_3 < \zeta_2 < \zeta_1$ and $\zeta_1 \neq \zeta_2 + \zeta_3$.
Let $\alpha \in (0,K)$ and $\beta \in (0,K) \cup i (0,K')$ be defined as in Lemma \ref{lem-5}. Then, the periodic solution with the profile $\phi$ in Lemma \ref{lem-4} is given by 
\begin{equation}
\label{gen-theta-rewrite}
\phi(x) = (\zeta_1 - \zeta_2 - \zeta_3)  \frac{H(\nu x- \beta) H(\nu x+ \beta) \Theta^2(\alpha)}{ \Theta(\nu x- \alpha) \Theta(\nu x + \alpha) H^2(\beta)}.
\end{equation}
\end{lemma}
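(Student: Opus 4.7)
The plan is to apply Liouville's theorem on the torus to the quotient of $\phi$ by a product of Jacobi theta functions chosen to reproduce exactly the divisor of $\phi$. In the variable $z = \nu x$, Lemma \ref{lem-4} shows that $\phi$ is elliptic with fundamental periods $2K$ and $2iK'$, and by Lemma \ref{lem-5} its only singularities in $[-K,K] \times [-iK',iK']$ are two simple zeros at $z = \pm \beta$ and two simple poles at $z = \pm \alpha + iK'$. Since the zeros of $H$ form the lattice $\{2mK + 2niK'\}$ and those of $\Theta$ form the shifted lattice $\{2mK + (2n+1)iK'\}$, the auxiliary function
\[
\tilde\phi(x) := \frac{H(\nu x - \beta)\,H(\nu x + \beta)}{\Theta(\nu x - \alpha)\,\Theta(\nu x + \alpha)}
\]
exhibits exactly the same simple zeros and simple poles as $\phi$ inside the fundamental rectangle.

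The next step is to verify that $\tilde\phi$ is genuinely doubly periodic in $z$. Under $z \mapsto z + 2K$ each $H$ factor picks up a $-1$ from $\theta_1(\cdot + \pi) = -\theta_1$, while each $\Theta$ is invariant from $\theta_4(\cdot + \pi) = \theta_4$, so the ratio is unchanged. Under $z \mapsto z + 2iK'$ both $H$ and $\Theta$ transform via the common quasi-periodic multiplier $-e^{\pi K'/K}\,e^{-i\pi(\cdot)/K}$; because the arguments of the two $H$ factors sum to $2\nu x$ and so do those of the two $\Theta$ factors, the induced exponential prefactors in numerator and denominator coincide and cancel. Thus $\phi/\tilde\phi$ is an elliptic function holomorphic and nonvanishing on the fundamental domain, so by Liouville's theorem $\phi = C\,\tilde\phi$ for some constant $C \in \mathbb{C}$.

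To pin down $C$, I evaluate at $x = 0$. On one side, Lemma \ref{lem-1} together with the inversion \eqref{relation-inverse} gives $\phi(0) = u_3 = -(\zeta_1 - \zeta_2 - \zeta_3)$; on the other, the oddness of $H$ and the evenness of $\Theta$ yield $\tilde\phi(0) = -H^2(\beta)/\Theta^2(\alpha)$, so
\[
C = (\zeta_1 - \zeta_2 - \zeta_3)\,\frac{\Theta^2(\alpha)}{H^2(\beta)},
\]
which is precisely the prefactor appearing in \eqref{gen-theta-rewrite}. The identity works uniformly in both cases $\beta \in (0,K)$ real and $\beta \in i(0,K')$ purely imaginary; in the latter case $H(\beta)$ is purely imaginary so $H^2(\beta) < 0$, which is exactly what is needed to reconcile with $\phi(0) = u_3 > 0$ when $\zeta_1 < \zeta_2 + \zeta_3$. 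The delicate point is the cancellation of the quasi-periodic exponentials under $z \mapsto z + 2iK'$: it crucially exploits the $\pm$-symmetric placement of both the zeros and the poles given in Lemma \ref{lem-5}, since without this symmetry the arguments would not sum to $2\nu x$ in both numerator and denominator and the ratio would only be quasi-periodic rather than periodic.
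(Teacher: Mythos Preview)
Your proof is correct and follows essentially the same route as the paper: both arguments use the divisor of $\phi$ from Lemma~\ref{lem-5} to write $\phi$ as a constant times the theta quotient, then fix the constant by evaluating at $x=0$ with $\phi(0)=u_3=\zeta_2+\zeta_3-\zeta_1$. The only cosmetic difference is that the paper invokes Proposition~\ref{prop-3} (factorization in $H$'s) and then converts the denominator to $\Theta$'s via \eqref{H-Theta-connection}, whereas you write the mixed $H/\Theta$ candidate directly and verify its double periodicity by hand.
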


\begin{proof}
By Proposition \ref{prop-3} with $N = 2$, the elliptic function $\phi$ is 
factorized as a quotient of the products of Jacobi's theta function $H$ centered at two zeros and two poles given by Lemma \ref{lem-5}:
\begin{equation*}
\phi(x) = C \frac{H(\nu x- \beta) H(\nu x+ \beta)}{ H(\nu x- iK' -\alpha) H(\nu x + iK' + \alpha)},
\end{equation*}
where $C$ is a specific constant to be determined uniquely. Since 
\begin{equation}
\label{H-Theta-connection}
H(z + i K') = i e^{\frac{\pi K'}{4 K}} e^{-\frac{i \pi z}{2K}} \Theta(z),
\end{equation}
we obtain 
$$
H(\nu x- iK' - \alpha) H(\nu x + iK' + \alpha) = e^{\frac{\pi K'}{2 K}} 
e^{-\frac{i \pi \alpha}{K}} \Theta(\nu x - \alpha) \Theta(\nu x + \alpha),
$$
which yields
\begin{equation*}
\phi(x) = \tilde{C} \frac{H(\nu x- \beta) H(\nu x+ \beta)}{ \Theta(\nu x- \alpha) \Theta(\nu x + \alpha)},
\end{equation*}
with $\tilde{C} = C e^{-\frac{\pi K'}{2K}} e^{\frac{i \pi \alpha}{K}}$. 
Since $\phi(0) = u_3 = \zeta_2 + \zeta_3 - \zeta_1$, we obtain 
the unique expression for $\tilde{C}$ by
\begin{equation*}
\tilde{C} = (\zeta_1 - \zeta_2 - \zeta_3) \frac{\Theta^2(\alpha)}{H^2(\beta)},
\end{equation*}
since $H$ is odd and $\Theta$ is even. This yields the analytical representation (\ref{gen-theta-rewrite}). 
\end{proof} 

\begin{remark}
For $\zeta_1 < \zeta_2 + \zeta_3$, we note that $\beta \in i (0,K')$ but 
$H(\nu x - \beta) H(\nu x + \beta)$ is real for real $x$. By using the squared relation between the Jacobi theta functions,
\begin{equation*}
H(x+y) H(x-y) \Theta^2(0) = H^2(x) \Theta^2(y) - \Theta^2(x) H^2(y),
\end{equation*}
we can rewrite the solution form (\ref{gen-theta-rewrite}) for $\zeta_1 < \zeta_2 + \zeta_3$ in the equivalent form:
\begin{equation*}
\phi(x) = (\zeta_1 - \zeta_2 - \zeta_3) \; \frac{\Theta^2(\alpha)}{H^2(\beta)} \; 
\frac{H^2(\nu x) \Theta^2(\beta) - \Theta^2(\nu x) H^2(\beta)}{\Theta^2(0) \Theta(\nu x- \alpha) \Theta(\nu x + \alpha)}.
\end{equation*} 
Using 
$$
H(\beta) = \sqrt{k} \sn(\beta) \Theta(\beta) \quad \mbox{\rm and} \quad 
\sn(i\gamma,k) = \frac{i \sn(\gamma,k')}{\cn(\gamma,k')},
$$ 
with $\beta = i \gamma$ and $\gamma \in (0,K')$, we can rewrite the solution form as 
\begin{equation}
\label{form-6}
\phi(x) = (\zeta_2 + \zeta_3 - \zeta_1) \frac{\Theta^2(\alpha) \Theta^2(\nu x)}{\Theta^2(0) \Theta(\nu x- \alpha) \Theta(\nu x + \alpha)} \frac{\sn^2(\nu x,k) \cn^2(\gamma,k') + \sn^2(\gamma,k')}{\sn^2(\gamma,k')},
\end{equation}
which only involve the real-valued elliptic functions. 
\end{remark}

\begin{remark}
	If $\zeta_1 = \zeta_2 + \zeta_3$, then $\beta = 0$ and the solution form (\ref{gen-theta-rewrite}) is undetermined. Since $H(\beta) = \sqrt{k} \sn(\beta) \Theta(\beta)$, we use 
	(\ref{root-2}) and obtain (\ref{wave-deg}) in the limit $\beta \to 0$.
	The same solution form (\ref{wave-deg}) follows from (\ref{root-3}) with $\beta = i \gamma$ and (\ref{form-6}) in the limit $\gamma \to 0$.
\end{remark}

\begin{remark}
It follows from (\ref{form-3}) by using formula 8.193.2 in \cite{GR} that 
\begin{align}
\phi(x) &=  \nu \left[\frac{H'(\nu (x+\frac{v}{2}))}{H(\nu(x+\frac{v}{2}))} -\frac{H'(\nu (x-\frac{v}{2}))}{H(\nu (x-\frac{v}{2}))}-\frac{H'(\nu v)}{H(\nu v)}\right] \notag \\
&= \nu \left[ \frac{\Theta'(\nu x - \alpha)}{\Theta(\nu x - \alpha)} 
- \frac{\Theta'(\nu x + \alpha)}{\Theta(\nu x + \alpha)} 
+ \frac{H'(2\alpha)}{H(2\alpha)} \right],
\label{form-4}
\end{align}
where we have used (\ref{v-alpha-correspondence}), 
(\ref{H-Theta-connection}), and 
$$
H(z + 2 i K') = -e^{-\frac{\pi K'}{K}} e^{-\frac{i \pi z}{K}} H(z).
$$
The solution form (\ref{form-4}) is useful in the proof of Theorem \ref{theorem-breather}.
\end{remark}

\begin{example}
	For $k = 1$, the periodic solution with the elliptic profile of Theorem \ref{theorem-wave} transforms to the soliton solution with the hyperbolic profile of Example \ref{ex-hyperbolic-theta}. In this case, $\zeta_2 = \zeta_3$ and solutions of (\ref{root-1}) and (\ref{root-2}) are found from 
	$$
	\cosh^2(\alpha) = \frac{\zeta_1 + \zeta_2}{2\zeta_2}, \quad \cosh^2(\beta) = \frac{(\zeta_1 - \zeta_2) (\zeta_1 + 2 \zeta_2)}{2 \zeta_1 \zeta_2}.
	$$
Both solution forms (\ref{form-2}) and (\ref{gen-theta-rewrite}) are equivalent and reduce to each other as in
	\begin{align*}
\phi(x) &= \frac{4 \zeta_2 (\zeta_1 + \zeta_2) \cosh^2(\nu x)}{\cosh^2(\nu x) (\zeta_1 + \zeta_2) - \sinh^2(\nu x) (\zeta_1 - \zeta_2)} - \zeta_1 - 2 \zeta_2 \\
&= \zeta_1 - \frac{2(\zeta_1^2 - \zeta_2^2)}{\cosh^2(\nu x) (\zeta_1 + \zeta_2) - \sinh^2(\nu x) (\zeta_1 - \zeta_2)} \\
&= \zeta_1 - \frac{2 \nu^2}{\zeta_1 + \zeta_2 \cosh(2 \nu x)},
\end{align*}
and
	\begin{align*}
	\phi(x) &= (\zeta_1 - 2 \zeta_2) \frac{\cosh^2(\alpha) \sinh(\nu x - \beta) \sinh(\nu x + \beta)}{\sinh^2(\beta) \cosh(\nu x - \alpha) \cosh(\nu x + \alpha)} \\
	&= \frac{\sinh^2(\nu x) (\zeta_1 - \zeta_2) (\zeta_1 + 2 \zeta_2) - \cosh^2(\nu x) (\zeta_1 + \zeta_2) (\zeta_1 - 2 \zeta_2)}{\cosh^2(\nu x) (\zeta_1 + \zeta_2) - \sinh^2(\nu x) (\zeta_1 - \zeta_2)} \\
	&= \frac{\zeta_1 \zeta_2 \cosh(2 \nu x) - \zeta_1^2 + 2 \zeta_2^2}{ \zeta_1 + \zeta_2 \cosh(2 \nu x)} \\
	&= \zeta_1 - \frac{2 \nu^2}{\zeta_1 + \zeta_2 \cosh(2 \nu x)},
	\end{align*}
	where $\nu = \sqrt{\zeta_1^2 - \zeta_2^2}$.
	\label{ex-hyperbolic}
\end{example}

The proof of Theorem \ref{theorem-wave} is accomplished with the results of 
Lemmas \ref{lem-1}, \ref{lem-3}, \ref{lem-5}, and \ref{lem-6}.

\section{Kink breather}
\label{sec-4}

Let $\psi \in C^1(\mathbb{R},\mathbb{C}^2)$ be a solution of the Lax system (\ref{spectral}) and (\ref{time}). It follows from (\ref{char-eq}) that  $\mu^2 + P(\zeta) = 0$ with $P(\zeta)$ being the characteristic polynomial 
in (\ref{char-poly}). As in (\ref{factor-P}), we factorize $P(\zeta)$ by $\{\pm \zeta_1, \pm \zeta_2, \pm \zeta_3\}$ satisfying $\zeta_3 < \zeta_2 < \zeta_1$. One solution $\psi = (p,q)^T$ of the Lax system (\ref{spectral}) and (\ref{time}) is defined by 
\begin{equation}
\label{mu}
\mu = i \sqrt{P(\zeta)} = i \sqrt{(\zeta^2 - \zeta_1^2) (\zeta^2 - \zeta_2^2) (\zeta^2 - \zeta_3^2)}.
\end{equation}
Another solution $\psi = (p^*,q^*)^T$ of the Lax system (\ref{spectral}) and (\ref{time}) is defined by $\mu = -i \sqrt{P(\zeta)}$. 

It follows from (\ref{time}) that the quotient $\rho = q/p$ for the solution $\psi = (p,q)^T$ satisfies  
\begin{equation}
\label{rho-basic}
\rho = -\frac{4 i \zeta^3 + i \zeta (2 \phi^2 - c) - 4\mu}{4 \zeta^2 \phi - 2 i \zeta \phi' - b} = \frac{4 \zeta^2 \phi + 2 i \zeta \phi' - b}{4 i \zeta^3 + i \zeta (2 \phi^2 - c) + 4\mu},
\end{equation}
where we have used the second-order equation (\ref{second}) for the profile $\phi$. By using  $b = 4 \zeta_1 \zeta_2 \zeta_3$, $c = 2 (\zeta_1^2 + \zeta_2^2 + \zeta_3^2)$, and $\mu = i \sqrt{P(\zeta)}$, we rewrite the two quotients in the equivalent form:
\begin{align}
\label{rho1}
\rho &= -i \frac{\zeta^3 + \frac{1}{2} \zeta (\phi^2 - \zeta_1^2 - \zeta_2^2 - \zeta_3^2) - \sqrt{P(\zeta)}}{\zeta^2 \phi - \frac{i}{2} \zeta \phi' - \zeta_1 \zeta_2 \zeta_3} \\
\label{rho2}
&= -i \frac{\zeta^2 \phi + \frac{i}{2} \zeta \phi' - \zeta_1 \zeta_2 \zeta_3}{\zeta^3 + \frac{1}{2} \zeta (\phi^2 - \zeta_1^2 - \zeta_2^2 - \zeta_3^2) + \sqrt{P(\zeta)}}.
\end{align}
If $\rho$ is defined by either (\ref{rho1}) or (\ref{rho2}), the first component of the eigenfunction $\phi = (p,q)^T$ can be found from the first-order  equation 
\begin{equation}
\label{p-eq}
\partial_x p = (i \zeta + \phi \rho) p,
\end{equation}
which follows from (\ref{spectral}). 
For $\zeta = 0$, the two solutions $\psi = (p,q)^T$ and $\psi = (p^*,q^*)^T$ are found explicitly by using the representation (\ref{form-4}) for the elliptic profile $\phi$. The following lemma  recovers expressions  (\ref{p-g-gen}) and (\ref{p-g-gen-second}) which are written in original variables $(x,t)$. 

\begin{lemma}
\label{lem-kink-1}
Assume $0 < \zeta_3 < \zeta_2 < \zeta_1$ and consider the periodic solution of Theorem \ref{theorem-wave} with the elliptic profile $\phi$. The two linearly independent solutions $\psi = (p_0,q_0)^T$ and $\psi = (p_0^*,q_0^*)^T$ of the 
Lax system (\ref{spectral}) and (\ref{time}) with $\zeta = 0$ are given by 
\begin{equation}
\label{eigenfunction-zero-2}
\mu = -\zeta_1 \zeta_2 \zeta_3 : \quad \left( \begin{array}{c} p_0 \\ q_0 \end{array} \right)  = e^{s_0 x} \frac{\Theta(\nu x - \alpha)}{\Theta(\nu x + \alpha)} \left( \begin{array}{c} 1 \\ 1 \end{array} \right)
\end{equation}
and 
\begin{equation}
\label{eigenfunction-zero-1}
\mu = \zeta_1 \zeta_2 \zeta_3 : \quad \left( \begin{array}{c} p_0^* \\ q_0^* \end{array} \right) = e^{-s_0x} \frac{\Theta(\nu x + \alpha)}{\Theta(\nu x - \alpha)}  \left( \begin{array}{c} 1 \\ -1 \end{array} \right), 
\end{equation}
where $s_0 = \frac{\nu H'(2\alpha)}{H(2 \alpha)}$.
\end{lemma}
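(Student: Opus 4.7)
The plan is to exploit the symmetry of the Lax system at $\zeta=0$ to decouple the spatial equation. Setting $\zeta=0$ in (\ref{spectral}) gives
\begin{equation*}
\partial_x p = \phi q, \qquad \partial_x q = \phi p,
\end{equation*}
so that $\partial_x(p\pm q) = \pm \phi (p\pm q)$. Hence two linearly independent solutions are produced by the ansatz $p = q$ and $p = -q$, reducing the problem to the scalar ODE $\partial_x p = \pm \phi p$ with solution $p = \exp\left(\pm \int \phi(x)\, dx\right)$.

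The key step is to integrate $\phi$ in closed form. For this I would use representation (\ref{form-4}), which states
\begin{equation*}
\phi(x) = \nu\left[\frac{\Theta'(\nu x-\alpha)}{\Theta(\nu x-\alpha)} - \frac{\Theta'(\nu x+\alpha)}{\Theta(\nu x+\alpha)} + \frac{H'(2\alpha)}{H(2\alpha)}\right].
\end{equation*}
Since the first two brackets are logarithmic derivatives and the last is a constant, antidifferentiation gives
\begin{equation*}
\int \phi(x)\, dx = \log\frac{\Theta(\nu x-\alpha)}{\Theta(\nu x+\alpha)} + s_0 x, \qquad s_0 = \frac{\nu H'(2\alpha)}{H(2\alpha)},
\end{equation*}
and exponentiation yields the claimed $(p_0,q_0)^T$ in (\ref{eigenfunction-zero-2}) and, with opposite sign, the claimed $(p_0^*,q_0^*)^T$ in (\ref{eigenfunction-zero-1}).

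It remains to confirm the time equation (\ref{time}) at $\zeta=0$ and identify $\mu$. With $\zeta=0$ the matrix in (\ref{time}) is purely off-diagonal with entry $2\phi^3-\phi''-c\phi$, which equals $-b = -4\zeta_1\zeta_2\zeta_3$ by virtue of the second-order equation (\ref{second}). Thus the algebraic system collapses to $4\mu p = -b\, q$ and $4\mu q = -b\, p$. Substituting $q=p$ gives $\mu = -\zeta_1\zeta_2\zeta_3$, which matches (\ref{mu}) at $\zeta=0$ since $i\sqrt{P(0)} = i\sqrt{-\zeta_1^2\zeta_2^2\zeta_3^2} = -\zeta_1\zeta_2\zeta_3$, whereas $q=-p$ gives $\mu = \zeta_1\zeta_2\zeta_3$, corresponding to the opposite branch. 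No serious obstacle is expected; the only subtle point is recognizing that (\ref{form-4}) already displays the antiderivative of $\phi$ as the logarithmic derivative of a quotient of theta functions plus a linear term, which makes the explicit exponentiation immediate.
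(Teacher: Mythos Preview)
Your proposal is correct and follows essentially the same route as the paper. The only cosmetic difference is how $q/p=\pm1$ is obtained: you read it off from the symmetric decoupling of the spatial system at $\zeta=0$, whereas the paper extracts $\rho=\pm1$ from the algebraic quotient (\ref{rho-basic}); both then integrate $\phi$ via the representation (\ref{form-4}) to get the exponential factor $e^{\pm s_0 x}$ and the theta quotient.
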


\begin{proof}
	Since $\zeta = 0$, the two solutions of $\mu^2 + P(\zeta) = 0$ are 
	$\mu = \zeta_1 \zeta_2 \zeta_3$ and $\mu = -\zeta_1 \zeta_2 \zeta_3$.
If $\mu = -\zeta_1 \zeta_2 \zeta_3$, then $\rho = 1$ from (\ref{rho-basic}) 
since $b = 4 \zeta_1 \zeta_2 \zeta_3$. By using (\ref{p-eq}) with $\zeta = 0$ and $\rho = 1$ and computing the integral with (\ref{form-4}), we obtain 
	\begin{align*}
p_0(x) = e^{\int \phi dx} = e^{s_0 x} \frac{\Theta(\nu x - \alpha)}{\Theta(\nu x + \alpha)}, \quad s_0 = \frac{\nu H'(2\alpha)}{H(\alpha)},
\end{align*}	
which yields (\ref{eigenfunction-zero-2}). Similarly, if $\mu = \zeta_1 \zeta_2 \zeta_3$, then $\rho = -1$ from (\ref{rho-basic}) so that by using (\ref{p-eq}) with $\zeta = 0$ and $\rho = -1$ and (\ref{form-4}), we obtain 
	\begin{align*}
	p_0^*(x) = e^{-\int \phi dx} = e^{-s_0 x} \frac{\Theta(\nu x + \alpha)}{\Theta(\nu x - \alpha)}, 
	\end{align*}
which yields (\ref{eigenfunction-zero-1}). 
\end{proof}

In original variables $(x,t)$ of the mKdV equation (\ref{mkdv}), we should introduce two coordinates. The traveling periodic wave $u(x,t) = \phi(x+ct)$ has the coordinate $\xi = x + ct$, whereas the exponential factor of the eigenfunctions $\varphi(x,t) = \psi(x+ct) e^{4 \mu t}$ has 
the coordinate 
\begin{equation*}
\eta = s_0 (x+ct) - 4 \zeta_1 \zeta_2 \zeta_3 t = s_0(x + c_bt)
\end{equation*}
with $c_b = c - \frac{4 \zeta_1 \zeta_2 \zeta_3}{s_0}$ and $c = 2 (\zeta_1^2 + \zeta_2^2 + \zeta_3^2)$.  The following lemma gives the expression for $s_0$, which is used in (\ref{speed-breather}) and (\ref{localization}).

\begin{lemma}
	\label{lem-kink-2}
Let $\alpha \in (0,K)$ be given in Lemma \ref{lem-5} for $0 < \zeta_3 < \zeta_2 < \zeta_1$. We have 
\begin{equation}
\label{s-expression}
s_0 = \frac{\nu H'(2 \alpha)}{H(2\alpha)} =  \zeta_2 + \zeta_3 - \zeta_1 + \frac{2 \nu \Theta'(\alpha)}{\Theta(\alpha)}.
\end{equation}
\end{lemma}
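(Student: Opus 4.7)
The first equality is the definition of $s_0$ from Lemma \ref{lem-kink-1}, so the whole content lies in the second equality. The plan is to expand $\frac{H'(2\alpha)}{H(2\alpha)}$ by means of the factorization $H(x)=\sqrt{k}\,\sn(x)\,\Theta(x)$ \cite[(2.1.1)]{Lawden}, then invoke the addition formula for Jacobi's zeta function to reduce $Z(2\alpha)$ to $Z(\alpha)$, and finally collect the algebraic pieces using the dictionary (\ref{transformation-nu})--(\ref{double-2}) and (\ref{root-1}).

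The first step is elementary logarithmic differentiation: writing $\ln H(x)=\tfrac12\ln k+\ln\sn(x)+\ln\Theta(x)$ gives
\begin{equation*}
\frac{H'(2\alpha)}{H(2\alpha)}=\frac{\cn(2\alpha)\dn(2\alpha)}{\sn(2\alpha)}+Z(2\alpha).
\end{equation*}
I would then insert the values $\sn(2\alpha)=\nu/\zeta_1$, $\cn(2\alpha)=\zeta_3/\zeta_1$, $\dn(2\alpha)=\zeta_2/\zeta_1$ from (\ref{double-1})--(\ref{double-2}) to obtain $\nu\cdot\frac{\cn(2\alpha)\dn(2\alpha)}{\sn(2\alpha)}=\zeta_2\zeta_3/\zeta_1$, so that the required identity reduces to
\begin{equation*}
\nu\bigl[Z(2\alpha)-2Z(\alpha)\bigr]=\zeta_2+\zeta_3-\zeta_1-\frac{\zeta_2\zeta_3}{\zeta_1}.
\end{equation*}

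The second step is to apply the standard addition formula for the Jacobi zeta function $Z(u+v)=Z(u)+Z(v)-k^2\sn(u)\sn(v)\sn(u+v)$ with $u=v=\alpha$, yielding
\begin{equation*}
Z(2\alpha)-2Z(\alpha)=-k^2\sn^2(\alpha)\,\sn(2\alpha).
\end{equation*}
Substituting $\sn(2\alpha)=\nu/\zeta_1$, the formula $k^2=(\zeta_1^2-\zeta_2^2)/\nu^2$ from (\ref{parameters-nu-k-again}), and $\sn^2(\alpha)=(\zeta_1-\zeta_3)/(\zeta_1+\zeta_2)$ from (\ref{root-1}), the left-hand side simplifies to $-(\zeta_1-\zeta_2)(\zeta_1-\zeta_3)/\zeta_1$. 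For the right-hand side, the algebraic identity $\zeta_1(\zeta_2+\zeta_3)-\zeta_1^2-\zeta_2\zeta_3=-(\zeta_1-\zeta_2)(\zeta_1-\zeta_3)$ makes the two sides coincide and completes the proof.

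The only non-routine ingredient is the addition formula for $Z$; everything else is straightforward substitution. I expect no genuine obstacle beyond keeping track of signs and a small factorization, so the main thing to double-check is the sign convention in the duplication identity for $Z$ (which can be verified directly by differentiating $\Theta(u+v)\Theta(u-v)$ once and setting $v=u$, giving the same formula via (\ref{Jacobi-theta})).
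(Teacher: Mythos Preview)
Your proof is correct and follows essentially the same route as the paper: both use the factorization $H(x)=\sqrt{k}\,\sn(x)\,\Theta(x)$ to split off the $\cn(2\alpha)\dn(2\alpha)/\sn(2\alpha)$ term, then apply the addition formula for Jacobi's zeta function with $u=v=\alpha$, and finally substitute the values (\ref{double-1})--(\ref{double-2}) and (\ref{root-1}). The only cosmetic difference is that the paper combines the two algebraic pieces $\frac{\cn(2\alpha)\dn(2\alpha)}{\sn(2\alpha)}-k^2\sn^2(\alpha)\sn(2\alpha)$ into a single bracket before substituting, whereas you handle them separately and check the resulting identity $\zeta_1(\zeta_2+\zeta_3)-\zeta_1^2-\zeta_2\zeta_3=-(\zeta_1-\zeta_2)(\zeta_1-\zeta_3)$ at the end.
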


\begin{proof}
	It follows from $H(x) = \sqrt{k} \sn(x) \Theta(x)$ that 
	\begin{align*}
	\frac{H'(2\alpha)}{H(2\alpha)} = \frac{\Theta'(2\alpha)}{\Theta(2\alpha)} + 
	\frac{\cn(2\alpha) \dn(2\alpha)}{\sn(2\alpha)}.
	\end{align*}
The addition formulas \cite[(3.6.2)]{Lawden}
	\begin{equation}
	\label{Z-addition}
	Z(u\pm v) = Z(u) \pm Z(v) \mp k^2 \sn(u) \sn(v) \sn(u \pm v),
	\end{equation}
imply
	\begin{align*}
	\frac{\Theta'(2\alpha)}{\Theta(2\alpha)} = 2  \frac{\Theta'(\alpha)}{\Theta(\alpha)} - k^2 \sn^2(\alpha) \sn(2\alpha).
	\end{align*}
With the help of (\ref{double-1}) and (\ref{double-2}), we obtain 
	\begin{align*}
	s_0 &= \frac{\nu H'(2 \alpha)}{H(2\alpha)} \\
	&=  \frac{2\nu \Theta'(\alpha)}{\Theta(\alpha)} 
	+\nu \left[ \frac{\cn(2\alpha) \dn(2\alpha)}{\sn(2\alpha)} - k^2 \sn^2(\alpha) \sn(2\alpha) \right] \\
	&= \frac{2\nu \Theta'(\alpha)}{\Theta(\alpha)} + \frac{1}{\zeta_1} \left[ \zeta_2 \zeta_3 - (\zeta_1 - \zeta_2)(\zeta_1 - \zeta_3) \right] \\
	&= \frac{2\nu \Theta'(\alpha)}{\Theta(\alpha)} + \zeta_2 + \zeta_3 - \zeta_1,
	\end{align*}
	which yields (\ref{s-expression}).
\end{proof}

\begin{remark}
The Jacobi's zeta function $Z(x) := \frac{\Theta'(x)}{\Theta(x)}$ is related to the incomplete elliptic integral of the second kind $E(x,k) := \int_0^x \dn^2(x) dx$. Indeed, it follows from Proposition 5 in \cite{HMP23} that 
	$$
	Z'(x) = 1 - k^2 \sn^2(x) - \frac{E(k)}{K(k)} = \dn^2(x) - \frac{E(k)}{K(k)}.
	$$ 
Hence we have $Z(x) = E(x,k) - \frac{E(k)}{K(k)} x$.
\end{remark}

The following lemma gives derivation of the kink breather (\ref{kink-breather}) by using the Darboux transformation (\ref{DT}).

\begin{lemma}
	\label{lem-kink-3}
	Assume $0 < \zeta_3 < \zeta_2< \zeta_1$ and consider the periodic solution of Theorem \ref{theorem-wave} with the elliptic profile $\phi$. The bounded kink breather solution of the mKdV equation (\ref{mkdv}) is given by 
	\begin{equation}
	\label{kink-breather-again}
	u(x,t) = \frac{4 \phi(\xi) \Theta^2(\nu \xi + \alpha) + e^{2(\eta + \eta_0)} \Theta^2(\nu \xi - \alpha) (2 \phi(\xi) \phi'(\xi) - \phi''(\xi) - b)}{4 \Theta^2(\nu \xi + \alpha) + e^{2(\eta + \eta_0)} \Theta^2(\nu \xi - \alpha)  (c + 2 \phi'(\xi) - 2 \phi(\xi)^2)},
	\end{equation}
	where $\eta_0 \in \mathbb{R}$ is the arbitrary translational parameter. 
\end{lemma}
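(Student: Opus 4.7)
The plan is to derive (\ref{kink-breather-again}) as the singular limit of the one-fold Darboux transformation (\ref{DT}) as $\zeta \to 0$. Formula (\ref{DT}) is trivial at $\zeta = 0$, and the two zero-mode solutions of Lemma \ref{lem-kink-1} satisfy $q_0 = p_0$ and $q_0^\ast = -p_0^\ast$ respectively, so the denominator $p^2 - q^2$ also vanishes in the limit; a first-order expansion in $\zeta$ is therefore required to extract a nontrivial transformation.

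First, I would expand a general eigenfunction of (\ref{spectral}) near $\zeta = 0$ as
\begin{equation*}
\begin{pmatrix} p \\ q \end{pmatrix} = \begin{pmatrix} p_0 \\ p_0 \end{pmatrix} + \zeta \left[ \begin{pmatrix} p_1 \\ q_1 \end{pmatrix} + D \begin{pmatrix} p_0^\ast \\ -p_0^\ast \end{pmatrix} \right] + O(\zeta^2),
\end{equation*}
where $D \in \mathbb{C}$ is a free parameter and $(p_1,q_1)$ satisfies the inhomogeneous system $\partial_x p_1 = \phi q_1 + i p_0$, $\partial_x q_1 = \phi p_1 - i p_0$ obtained at order $\zeta$ from (\ref{spectral}). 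Subtraction gives $\partial_x(p_1 - q_1) + \phi(p_1 - q_1) = 2 i p_0$, whose integrating factor $p_0$ yields $p_1 - q_1 = 2 i p_0^\ast \int p_0^2 \, dx + C_2 p_0^\ast$ after using $p_0 p_0^\ast = 1$ (the companion equation for $p_1 + q_1$ is homogeneous and is absorbed into overall normalization). Assembling $p + q = 2 p_0 + O(\zeta)$, $pq = p_0^2 + O(\zeta)$, and $p - q = \zeta p_0^\ast (E + 2 i \int p_0^2 \, dx) + O(\zeta^2)$ with $E := C_2 + 2 D$, substitution into (\ref{DT}) yields
\begin{equation*}
\lim_{\zeta \to 0} \frac{4 i \zeta p q}{p^2 - q^2} = \frac{2 i \, p_0^2}{E + 2 i \int p_0^2 \, dx},
\end{equation*}
and reality of $u$ forces $E$ to be purely imaginary.

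The crucial algebraic identity closes the integral in elementary form. Setting $X := c + 2\phi' - 2\phi^2$ and differentiating with the aid of (\ref{second}) yields $X' + 2\phi X = 2 b$; multiplying by $p_0^2$ and using $\partial_x p_0^2 = 2 \phi p_0^2$ gives $\partial_x(p_0^2 X) = 2 b p_0^2$, so that
\begin{equation*}
\int p_0^2 \, dx = \frac{p_0^2 (c + 2 \phi' - 2 \phi^2)}{2 b} + C_0.
\end{equation*}
Substituting this expression and the exponential form $p_0^2 = e^{2 \eta}\Theta^2(\nu \xi - \alpha)/\Theta^2(\nu \xi + \alpha)$, then clearing the factor $\Theta^2(\nu\xi + \alpha)$, reduces $\hat u$ to the rational form in (\ref{kink-breather-again}): the combined real constant in the denominator is identified as $4 e^{-2 \eta_0}$, which absorbs both the purely imaginary $E$ and the real $C_0$ into the single translational parameter $\eta_0 \in \mathbb{R}$. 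The identity $\phi X - 2 b = 2\phi\phi' - \phi'' - b$, which again follows from (\ref{second}), then converts the numerator into the form required by (\ref{kink-breather-again}).

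I expect the main obstacle to be the careful bookkeeping of the three arbitrary constants $C_2$, $D$, $C_0$: after the reality constraint, only one real combination survives as the physical parameter $\eta_0$, while the others are redundant with scaling and choice of zero-mode basis. A secondary point is the boundedness of $u(x,t)$ on $\mathbb{R} \times \mathbb{R}$, which reduces to showing $X > 0$ pointwise on the orbit: any zero of $X$ would satisfy $X' = 2 b > 0$ there, making $X$ strictly increasing at its zero, which is inconsistent with the periodicity of $X$ inherited from $\phi$. The asymptotics (\ref{limits-breather}) follow by taking $\eta \to \mp\infty$ directly in (\ref{kink-breather-again}); the limit $\eta \to +\infty$ produces $(2\phi\phi' - \phi'' - b)/(c + 2\phi' - 2\phi^2)$, which coincides with $-\phi(\xi - 2\nu^{-1}\alpha)$ via a theta-function shift identity applied to the representation (\ref{gen-theta}).
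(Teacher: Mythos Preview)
Your derivation of the kink-breather formula via the $\zeta\to 0$ limit of the Darboux transformation is essentially the same as the paper's: both expand the eigenfunctions to first order, reduce the new solution to $\hat u=\phi-2bp_0^2/(4C+p_0^2X)$ with $X=c+2\phi'-2\phi^2$, and close the integral $\int p_0^2\,dx$ through a first-order linear relation for $X$. The paper organizes this by writing $p=p_0+i\zeta p_1$, $q=p_0-i\zeta p_1$ and solving the coupled $\xi$- and $t$-equations for $p_0p_1$, arriving at $p_0p_1=\tfrac{1}{2b}\bigl[Xp_0^2+4C\bigr]$; your identity $X'+2\phi X=2b$ and hence $\partial_x(p_0^2X)=2bp_0^2$ is the same computation phrased in terms of the second-order equation (\ref{second}) alone. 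The bookkeeping of the constants $C_2,D,C_0$ is slightly more explicit in your version but collapses to the same single real parameter.

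The one substantive difference is the boundedness step. The paper evaluates $X$ explicitly via the Weierstrass representation (\ref{expression-01}) and the correspondence (\ref{v-alpha-correspondence}) to obtain $X=4\zeta_1^2\cn^2(\nu\xi-\alpha)+4\zeta_2^2\sn^2(\nu\xi-\alpha)>0$. Your ODE argument---at any zero of $X$ one has $X'=2b>0$, which is incompatible with periodicity---is more elementary and avoids the elliptic-function machinery. Note however that it only shows $X$ is sign-definite; you still need to fix the sign by checking one value (e.g.\ at the minimum of $\phi$, where $\phi'=0$ and $X=c-2u_3^2=4(\zeta_1\zeta_2+\zeta_1\zeta_3-\zeta_2\zeta_3)>0$) before concluding that $C>0$ suffices for a nonvanishing denominator. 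With that small addition, your argument is complete and arguably cleaner than the paper's.
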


\begin{proof}
The Darboux transformation (\ref{DT}) returns the identity as $\zeta \to 0$ unless $p^2 \to q^2$ as $\zeta \to 0$. Without loss of generality, we take $(p_0,q_0)$ in the form (\ref{p-g-gen}) and expand solutions of the Lax system of linear equations (\ref{LS}) in powers of $i \zeta$:
\begin{equation*}
p = p_0 + i \zeta p_1 + \mathcal{O}(\zeta^2), \qquad 
q = p_0 - i \zeta p_1 + \mathcal{O}(\zeta^2).
\end{equation*}
By using the second-order equation (\ref{second}), we obtain recursively in powers of $i \zeta$:
\begin{align}
\mathcal{O}(1) &: \quad \partial_{\xi} p_0 = \phi(\xi) p_0, &\quad 
\partial_t p_0 = -b p_0, 
\label{order-1} \\
\mathcal{O}(i\zeta) &: \quad
\partial_{\xi} p_1 = -\phi(\xi) p_1 + p_0, &\quad 
\partial_t p_1 = b p_1 + (2\phi(\xi)^2 - 2 \phi'(\xi) - c) p_0.
\label{order-2} 
\end{align}
where $\xi = x + ct$. The solution of the system (\ref{order-1}) at $\mathcal{O}(1)$ agrees with (\ref{p-g-gen}) and yields
\begin{equation}
\label{p-0}
p_0 = e^{\eta} \frac{\Theta(\nu \xi - \alpha)}{\Theta(\nu \xi + \alpha)},
\end{equation}
where $\eta = s_0 \xi - bt = s_0 (x+c_b t)$ with $c_b = c - \frac{b}{s_0}$. Since $p^2 - q^2 = 4 i \zeta p_0 p_1 + \mathcal{O}(\zeta^2)$, the new solution follows from (\ref{DT}) in the limit $\zeta \to 0$ as 
\begin{equation}
\label{kink-breather-0}
\hat{u} = \phi - \frac{p_0}{p_1}.
\end{equation}
By using the systems (\ref{order-1}) and (\ref{order-2}), we derive 
\begin{equation}
\label{system-p0p1}
\partial_{\xi} (p_0 p_1) = p_0^2, \qquad \partial_t (p_0 p_1) = (2 \phi(\xi)^2 - 2 \phi'(\xi) - c) p_0^2.
\end{equation}
Separation of variables $\xi$ and $t$ in the linear system (\ref{system-p0p1}) yields the exact solution
\begin{equation}
\label{p0p1}
p_0 p_1 = \frac{1}{2b} \left[ (c + 2 \phi' - 2 \phi^2) p_0^2 + 4 C \right],
\end{equation}
where $C$ is the constant of integration. Substituting (\ref{p0p1}) into (\ref{kink-breather-0}) and bringing it to the common denominator yields 
\begin{equation}
\label{kink-breather-new}
\hat{u} = \phi - \frac{2b p_0^2}{4C +p_0^2 (c + 2 \phi' - 2 \phi^2)}
= \frac{4C \phi + p_0^2 (2 \phi \phi' - \phi'' - b)}{4C + p_0^2 (c + 2 \phi' - 2 \phi^2)},
\end{equation}
where the second-order equation (\ref{second}) has been used. 

Next we prove that the new solution (\ref{kink-breather-new}) is bounded for $(x,t) \in \R \times \R$ if $C \geq 0$. 
	It follows from (\ref{pcW}), (\ref{philinha}), and (\ref{form-3-squared}) that 
	\begin{align}
	c + 2 \phi'(\xi) - 2 \phi(\xi)^2 =  4 \wp(v) - 4 \wp\left( \xi + \frac{v}{2} \right).
	\label{expression-01}
	\end{align}
By using (\ref{rel-Jac-Wei}), (\ref{rel-par-e-zeta}),  (\ref{v-alpha-correspondence}), and (\ref{shift-by-iK}), we obtain 
	\begin{align*}
\wp\left( \xi + \frac{v}{2} \right) &= e_3 + \frac{e_1 - e_3}{\sn^2(\nu x + \frac{\nu v}{2})} \\
&= \frac{1}{3} (-2\zeta_1^2 + \zeta_2^2 + \zeta_3^2) + \frac{\zeta_1^2 - \zeta_3^2}{\sn^2(\nu x - iK' - \alpha)} \\
&= \frac{1}{3} (-2\zeta_1^2 + \zeta_2^2 + \zeta_3^2) + (\zeta_1^2 - \zeta_2^2) \sn^2(\nu x - \alpha),
\end{align*}
which yields with the help of (\ref{pcW}) that 
	\begin{align}
	c + 2 \phi'(\xi) - 2 \phi(\xi)^2 	&= 4 \zeta_1^2 - 4(\zeta_1^2 - \zeta_2^2) \sn^2(\nu \xi - \alpha) 
	\notag \\
	&= 4 \zeta_1^2 \cn^2(\nu \xi - \alpha) + 4 \zeta_2^2 \sn^2(\nu \xi - \alpha), \label{expression-1}
	\end{align}
Since the expression in (\ref{expression-1}) is strictly positive, 
the denominator in (\ref{kink-breather-new}) is bounded away from zero if $C \geq 0$. Subtituting (\ref{p-0}) and $C = e^{-2 \eta_0} \geq 0$ with arbitrary $\eta_0 \in \R$ in (\ref{kink-breather-new}) yields the expression (\ref{kink-breather-again}), which is bounded for every $(x,t) \in \R \times \R$.
\end{proof}

The following lemma gives the phase shifts of the kink breathers in the limits (\ref{limits-breather}). 

\begin{lemma}
	\label{lem-kink-4}
It follows for the solution (\ref{kink-breather-again}) that 
	\begin{equation}
\label{limits-breather-again}
u(x,t) \to \left\{ \begin{array}{ll} \phi(\xi) \quad & \mbox{\rm as} \;\; \eta \to -\infty, \\
-\phi(\xi - 2 \nu^{-1} \alpha) \quad & \mbox{\rm as} \;\; \eta \to +\infty. 
\end{array} \right.
\end{equation}	
\end{lemma}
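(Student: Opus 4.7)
The plan is to compute each of the two limits in (\ref{limits-breather-again}) directly from the closed-form expression (\ref{kink-breather-again}), reducing the nontrivial case to an addition identity for Weierstrass' elliptic functions. The case $\eta\to-\infty$ is routine: $e^{2(\eta+\eta_0)}\to 0$, and since $\Theta$ has no real zeros, the ratio converges pointwise to $4\phi(\xi)\Theta^2(\nu\xi+\alpha)/[4\Theta^2(\nu\xi+\alpha)]=\phi(\xi)$.

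For $\eta\to+\infty$, the exponential prefactor dominates both the numerator and the denominator of (\ref{kink-breather-again}). The denominator in the limit contains the factor $c+2\phi'(\xi)-2\phi(\xi)^2$, which is strictly positive by (\ref{expression-1}), so the ratio converges to
\begin{equation*}
\frac{2\phi(\xi)\phi'(\xi)-\phi''(\xi)-b}{c+2\phi'(\xi)-2\phi(\xi)^2}.
\end{equation*}
The task then becomes to show that this rational expression in $\phi,\phi',\phi''$ equals $-\phi(\xi-2\nu^{-1}\alpha)$.

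I would simplify this ratio using the Weierstrass representation of Lemma \ref{lem-4}. Differentiating (\ref{form-3-squared}) yields $2\phi\phi'=\wp'(\xi+v/2)+\wp'(\xi-v/2)$, and differentiating (\ref{philinha}) yields $\phi''=\wp'(\xi-v/2)-\wp'(\xi+v/2)$; combined with $b=2\wp'(v)$ from (\ref{pcW}), the numerator collapses to $2\wp'(\xi+v/2)-2\wp'(v)$. The denominator has already been evaluated in (\ref{expression-01}) as $4\wp(v)-4\wp(\xi+v/2)$, so the whole ratio reduces to $-\tfrac{1}{2}[\wp'(\xi+v/2)-\wp'(v)]/[\wp(\xi+v/2)-\wp(v)]$.

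The closing step is to apply the classical addition formula $\zeta(a+b)-\zeta(a)-\zeta(b)=\tfrac{1}{2}[\wp'(a)-\wp'(b)]/[\wp(a)-\wp(b)]$ for the Weierstrass zeta function with $a=\xi+v/2$ and $b=v$. This rewrites the ratio as $\zeta(\xi+v/2)+\zeta(v)-\zeta(\xi+3v/2)$, which by (\ref{form-3}) is exactly $-\phi(\xi+v)$. Finally, Lemma \ref{lem-v-alpha-connection} gives $v=-2\nu^{-1}\alpha-2\omega'$, and since $2\omega'$ is a period of $\phi$ inherited from its Weierstrass construction, $\phi(\xi+v)=\phi(\xi-2\nu^{-1}\alpha)$. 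The main obstacle I anticipate is only this last bookkeeping step: carefully identifying $\xi+v$ with $\xi-2\nu^{-1}\alpha$ modulo $2\omega'$, which however follows immediately from the fact that $\phi$ is built entirely out of $\wp$ and $\zeta$ in (\ref{phinova})--(\ref{form-3}).
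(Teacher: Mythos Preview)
Your proposal is correct and follows essentially the same route as the paper: both compute the $\eta\to\pm\infty$ limits directly, reduce the nontrivial one to the ratio $-\tfrac{1}{2}[\wp'(\xi+v/2)-\wp'(v)]/[\wp(\xi+v/2)-\wp(v)]$ via (\ref{philinha}), (\ref{form-3-squared}), (\ref{pcW}), and (\ref{expression-01}), identify this as $-\phi(\xi+v)$, and then use (\ref{v-alpha-correspondence}) together with $2\omega'$-periodicity to finish. The only difference is in how the identification with $-\phi(\xi+v)$ is made: the paper rewrites (\ref{phinova}) using the Lawden identity $\frac{\wp'(u)-\wp'(v)}{\wp(u)-\wp(v)}=\frac{\wp'(v)+\wp'(u+v)}{\wp(v)-\wp(u+v)}$ applied twice, whereas you invoke the $\zeta$-addition formula once and read off (\ref{form-3}) directly---a slightly more economical closing step, but not a substantively different argument.
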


\begin{proof}
	It is clear that 
\begin{equation*}
\lim_{\eta \to -\infty} u(x,t) = \phi(\xi) \quad 
\mbox{\rm and} \quad 
\lim_{\eta \to +\infty} u(x,t) = \frac{2 \phi(\xi) \phi'(\xi) - \phi''(\xi) - b}{c + 2 \phi'(\xi) - 2 \phi(\xi)^2},
\end{equation*}	
hence the first limit in (\ref{limits-breather-again}) is confirmed. 
To confirm the second limit in (\ref{limits-breather-again}), 
we obtain from (\ref{pcW}), (\ref{philinha}), and (\ref{form-3-squared}) that 
	\begin{align}
	2 \phi(\xi) \phi'(\xi) - \phi''(\xi) - b =  \frac{d}{d \xi} (\phi(\xi)^2 - \phi'(\xi)) - b = 2 \wp'\left( \xi + \frac{v}{2} \right) - 2 \wp'(v).
		\label{expression-02}
	\end{align}
Combining (\ref{expression-01}) and (\ref{expression-02}) yields
\begin{align*}
	\frac{2\phi(\xi)\phi' (\xi)-\phi''(\xi)-b}{c+2\phi'(\xi)-2\phi(\xi)^2} &= -\frac{1}{2}  \frac{\wp'(\xi+\frac{v}{2}) -\wp'(v)}{\wp(\xi+\frac{v}{2}) -\wp(v)}  = - \phi(\xi+v),
\end{align*}
where we have used (\ref{phinova}) rewritten as 
$$
\phi(\xi) =\frac{1}{2}\frac{\wp'(\xi-\frac{v}{2}) + \wp' (\xi+\frac{v}{2})}{\wp(\xi-\frac{v}{2})-\wp (\xi+\frac{v}{2})} = \frac{1}{2}\frac{\wp' (\xi-\frac{v}{2})-\wp' (v)}{\wp(\xi-\frac{v}{2})-\wp(v)}.
$$
The second equality can be proven based on Exercise 15 in \cite[p.183]{Lawden}, 
$$
\frac{\wp'(u)-\wp'(v)}{\wp(u)-\wp(v)} = \frac{\wp'(v) + \wp'(u+v)}{\wp(v) - \wp(u+v)}, \qquad \forall u,v \in \mathbb{C}.
$$
By using this relation twice together with the even parity of $\wp(x)$, we obtain 
\begin{align*}
\frac{\wp' (\xi-\frac{v}{2})-\wp' (v)}{\wp(\xi-\frac{v}{2})-\wp(v)} &= \frac{\wp' (v) + \wp' (\xi+\frac{v}{2})}{\wp(v)-\wp(\xi+\frac{v}{2})} \\
&= \frac{\wp' (v)- \wp' (-\xi-\frac{v}{2})}{\wp(v)-\wp(-\xi-\frac{v}{2})} \\
&= \frac{\wp' (-\xi-\frac{v}{2}) + \wp' (-\xi+\frac{v}{2})}{\wp(-\xi-\frac{v}{2})-\wp(-\xi+\frac{v}{2})} \\
&= \frac{\wp' (\xi-\frac{v}{2}) + \wp' (\xi+\frac{v}{2})}{\wp(\xi-\frac{v}{2})-\wp(\xi+\frac{v}{2})}
\end{align*}
In view of (\ref{periods}) and (\ref{v-alpha-correspondence}), we get 
$$
\phi(\xi + v) = \phi(\xi - 2 i \nu^{-1} K' - 2 \nu^{-1} \alpha) = \phi(\xi - 2 \nu^{-1} \alpha),
$$
which confirms the second limit in (\ref{limits-breather-again}).
\end{proof}

The proof of Theorem \ref{theorem-breather} is accomplished with the results of 
Lemmas \ref{lem-kink-1}, \ref{lem-kink-2}, \ref{lem-kink-3}, and \ref{lem-kink-4}.

\section{Summary and further discussions}
\label{sec-5}

We have characterized the elliptic profile of the traveling wave solutions of the defocusing mKdV equation by using the elliptic function theory. The representation given by Theorem \ref{theorem-wave} is based on the structure of zeros and poles of the elliptic profile and leads to a simple two-parameter form of Corollary \ref{cor-wave} which incorporates the scaling transformation. 
Based on the new representation of the elliptic profile and the explicit solutions for eigenfunctions of the Lax system for $\zeta = 0$, we have constructed a new solution for the defocusing mKdV equation in Theorem \ref{theorem-breather}, which corresponds to the kink breather propagating over the traveling peiodic wave. When the elliptic profile degenerates into the 
hyperbolic profile, the new solution recovers the two-soliton solution of Corollary \ref{cor-two-solitons}. Overall, our work solves the open problem posed in \cite{MP24} 
on the analytical characterization of the kink breather in the defocusing mKdV equation. 

For a general construction of breathers on the elliptic profile of the traveling wave solution, one needs to obtain the explicit solutions for eigenfunctions of the Lax system with $\zeta \neq 0$. This has been done for the snoidal profile which corresponds to the Riemann's theta function of genus one, see Appendix \ref{app-A}. However, eigenfunctions of the Lax system for the general elliptic profile which corresponds to the Riemann's theta function of genus two have not been characterized uniquely, as we explain below. 

Let $u(x,t) = \phi(x+ct)$ be defined by the elliptic profile $\phi$ of Theorem \ref{theorem-wave} for $0 < \zeta_3 < \zeta_2 < \zeta_1$. Let $\varphi(x,t) = \psi(x+ct) e^{4 \mu t}$ be the eigenfunction of the Lax 
system (\ref{LS}) with the spectral parameter $\zeta \neq 0$. Then, we claim that the eigenfunction $\psi = (p,q)^T$ for $\mu = -i \sqrt{P(\zeta)}$ is given by 
\begin{equation}
\label{p-gen-1}
p(x) = e^{s x} 
\frac{H(\nu x-z_1^*) H(\nu x-z_2^*)}{\Theta(\nu x-\alpha) \Theta(\nu x+ \alpha) \Theta(\alpha + z_1^*) \Theta(\alpha + z_2^*)} e^{\frac{i\pi}{2K} (z_1^* + z_2^*)} 
\end{equation}
and
\begin{equation}
\label{q-gen-1}
q(x) = e^{s x} 
\frac{H(\nu x + z_1) H(\nu x + z_2)}{\Theta(\nu x-\alpha) \Theta(\nu x+ \alpha) \Theta(\alpha - z_1) \Theta(\alpha - z_2)} e^{-\frac{i\pi}{2K} (z_1 + z_2)},
\end{equation}
whereas the eigenfunction $\psi = (p^*,q^*)^T$ for $\mu = i \sqrt{P(\zeta)}$ is given by 
\begin{equation}
\label{p-gen}
p^*(x) = e^{- s x} \frac{H(\nu x-z_1) H(\nu x-z_2)}{\Theta(\nu x-\alpha) \Theta(\nu x+ \alpha) \Theta(\alpha - z_1) \Theta(\alpha - z_2)} e^{-\frac{i\pi}{2K} (z_1 + z_2)} 
\end{equation}
and
\begin{equation}
\label{q-gen}
q^*(x) = -e^{- s x} 
\frac{H(\nu x + z_1^*) H(\nu x + z_2^*)}{\Theta(\nu x-\alpha) \Theta(\nu x+ \alpha) \Theta(\alpha + z_1^*) \Theta(\alpha + z_2^*)} e^{\frac{i\pi}{2K} (z_1^* + z_2^*)}.
\end{equation}
Parameters $\{ \pm z_1, \pm z_2 \}$ and $\{ \pm z_1^*, \pm z_2^* \}$ are the only roots for $z = \nu x$ in $[-K,K] \times [-iK',iK']$ of equations 
\begin{equation}
\label{square-root-1}
\zeta \left[ \phi(x)^2 + 2 \zeta^2 - \zeta_1^2 - \zeta_2^2 - \zeta_3^2 \right] + 2 \sqrt{P(\zeta)} = 0
\end{equation}
and
\begin{equation}
\label{square-root-2}
\zeta \left[ \phi(x)^2 + 2 \zeta^2 - \zeta_1^2 - \zeta_2^2 - \zeta_3^2 \right] - 2 \sqrt{P(\zeta)} = 0,
\end{equation}
respectively, such that $\{ z_1, z_2, z_1^*, z_2^* \}$ are the only roots for 
$z = \nu x$ in $[-K,K] \times [-iK',iK']$ of equation
\begin{equation}
\label{zero-denominator}
\zeta^2 \phi(x) - \frac{i}{2} \zeta \phi'(x) - \zeta_1 \zeta_2 \zeta_3 = 0
\end{equation}
and $\{ -z_1, -z_2, -z_1^*, -z_2^* \}$ are the only roots for 
$z = \nu x$ in $[-K,K] \times [-iK',iK']$ of equation
\begin{equation}
\label{zero-denominator-neg}
\zeta^2 \phi(x) + \frac{i}{2} \zeta \phi'(x) - \zeta_1 \zeta_2 \zeta_3 = 0,
\end{equation}
respectively. We note that (\ref{square-root-2}) and (\ref{zero-denominator}) give zeros of the numerator and denominator of the quotient (\ref{rho1}), respectively, whereas (\ref{square-root-1}) and (\ref{zero-denominator-neg}) give zeros of the denominator and numerator of the quotient (\ref{rho2}), respectively. The roots $\{ z_1, z_2, z_1^*, z_2^* \}$ satisfy the completeness relation
\begin{equation}
\label{root-completeness}
z_1 + z_2 + z_1^* + z_2^* = 0 \; \mbox{\rm mod } (2K, 2iK'),
\end{equation}
whereas the expression for $s$ is given by 
\begin{align}
s = -i \frac{\zeta^4 - \zeta^2 (\zeta_2 + \zeta_3 - \zeta_1)^2 
	+ \zeta \sqrt{P(\zeta)} + \zeta_1 \zeta_2 \zeta_3 (\zeta_2 + \zeta_3 - \zeta_1)}{\zeta^3 + \zeta (\zeta_2 \zeta_3 - \zeta_1 \zeta_2 - \zeta_1 \zeta_3) + \sqrt{P(\zeta)}} 
	- \frac{\nu H'(z_1)}{H(z_1)} - \frac{\nu H'(z_2)}{H(z_2)}.
	\label{s-final}
\end{align}
To complete the characterization of the eigenfunctions (\ref{p-gen-1})--(\ref{q-gen-1}) and (\ref{p-gen})--(\ref{q-gen}), one needs to 
find uniquely expressions for $\{z_1,z_2,z_1^*,z_2^*\}$ in terms of $\zeta \in \mathbb{R}$. We checked numerically in the gaps of the Lax spectrum for $\zeta \in (\zeta_2,\zeta_1)$ and $\zeta \in (0,\zeta_3)$, see Figure \ref{fig-1} (right), that the roots satify the symmetry $z_1 = -\bar{z}^*_1$ and $z_2 = -\bar{z}^*_2$ for these values of $\zeta$. With the account of the completeness relation (\ref{root-completeness}) and the symmetry $z_1 = -\bar{z}^*_1$ and $z_2 = -\bar{z}^*_2$, there are still three real parameters in the roots $\{z_1,z_2,z_1^*,z_2^*\}$, which must be uniquely defined in terms of the only spectral parameter $\zeta \in \mathbb{R}$.

The next example shows that the general construction of the eigenfunctions reduces as $\zeta \to 0$ to the explicit solutions obtained in Section \ref{sec-4}. 

\begin{example}
	\label{example-zero-zeta}
	If $\zeta \to 0$, it follows from (\ref{square-root-1}), (\ref{square-root-2}), (\ref{zero-denominator}), and (\ref{zero-denominator-neg}) that both pairs of the roots $\{z_1,z_2\}$ and $\{ z_1^*,z_2^*\}$ are defined from the poles of $\phi(x)$ for $z = \nu x$ at $\pm (iK' + \alpha)$ and $\pm (iK' - \alpha)$ as follows:
	$$
	z_1 = i K' - \alpha, \quad	z_2 = -i K' - \alpha, \quad  
	z_1^* = iK' + \alpha, \quad z_2^* = -iK' + \alpha.
	$$
The four roots satisfy the completeness relation (\ref{root-completeness}) 
and the symmetry $z_1 = -\bar{z}^*_1$ and $z_2 = -\bar{z}^*_2$.
It follows from (\ref{p-gen-1})--(\ref{q-gen-1}) by using (\ref{H-Theta-connection}) that  
$$
p(x) = q(x) = C e^{s_0 x} \frac{\Theta(\nu x - \alpha)}{\Theta(\nu x + \alpha)},
$$
with the numerical constant 
$$
C = \frac{e^{\frac{\pi K'}{2K} + \frac{i \pi \alpha}{K}}}{\Theta(2\alpha+ i K') \Theta (2\alpha - iK')}.
$$
The constant $s_0$ is defined from  (\ref{s-final}) as follows:
\begin{align*}
s_0 &= \zeta_2 + \zeta_3 - \zeta_1 - \frac{\nu H'(iK' - \alpha)}{H(iK' -\alpha)} - \frac{\nu H'(-iK' - \alpha)}{H(-iK' -\alpha)} \\
&= \zeta_2 + \zeta_3 - \zeta_1 + \frac{2\nu \Theta'(\alpha)}{\Theta(\alpha)}.
\end{align*}
This construction agrees with (\ref{p-g-gen}) and (\ref{localization}). Similarly, we obtain from (\ref{p-gen})--(\ref{q-gen}) that 
	$$
p(x) = -q(x) = C e^{-s_0 x} \frac{\Theta(\nu x + \alpha)}{\Theta(\nu x - \alpha)},
	$$
which recovers (\ref{p-g-gen-second}) with the same $s_0$ and the same constant $C$.
\end{example}

We conclude from Example \ref{example-zero-zeta} that the characterization of eigenfunctions of the Lax system for the general traveling wave with the elliptic profile $\phi$ is recovered correctly in 
the limit $\zeta \to 0$. However, it needs to be completed with the 
explicit mapping $\zeta \mapsto z_1,z_2,z_1^*,z_2^*$ for $\zeta \neq 0$ before the expressions (\ref{p-gen-1})--(\ref{q-gen-1}) and (\ref{p-gen})--(\ref{q-gen}) can be used for the construction of general breathers on the general traveling wave. This unique characterization is an open question for further studies.

\appendix

\section{Derivation of the explicit eigenfunctions for the snoidal wave} 
\label{app-A}

Let $u(x,t) = \phi(x+ct)$ with $\phi(x) = k \sn(x)$ and $c = 1 + k^2$ be the 
periodic solution of the mKdV equation (\ref{mkdv}) in Example \ref{ex-snoidal}. 
Let $\varphi(x,t) = \psi(x+ct) e^{4 \mu t}$ be the eigenfunction of the Lax 
system (\ref{LS}). It follows from (\ref{spectral}), (\ref{time}), (\ref{char-eq}), (\ref{char-poly}), and (\ref{factor-P}) with $\zeta_3 = 0$ that 
$\psi(x)$ is the eigenfunction of the spectral problem 
\begin{equation}
\frac{d}{dx} \psi = \left( \begin{matrix} i \zeta & k \sn(x) \\ k \sn(x) & -i \zeta \end{matrix} \right) \psi
\label{spectral-app}
\end{equation}
and $\mu = \pm i \sqrt{P(\zeta)}$, where $P(\zeta) = \zeta^2 (\zeta^2 - \zeta_1^2) (\zeta^2 - \zeta_2^2)$. Eigenfunctions of the spectral problem 
(\ref{spectral-app}) are available in the explicit form, see \cite{B,Takahashi}, compared to the open problem posed in Section \ref{sec-5}. The following proposition reviews details of the derivation of the explicit 
solution of the spectral problem (\ref{spectral-app}). 
We give it for completeness as the explicit expression has been 
used in \cite{MP24} without verification of its validity. 

\begin{proposition}
	\label{prop-app}
	Let $\zeta_3 = 0$, $\zeta_2 = \frac{1}{2} (1-k)$, and $\zeta_1 = \frac{1}{2} (1+k)$, where $k \in (0,1)$ is the elliptic modulus of the snoidal solution $\phi(x) = k \sn(x)$ in Example \ref{ex-snoidal}. Define 
	$z \in [-K,K] \times [-iK',iK']$ from the spectral parameter $\zeta \in \mathbb{R}$ by using the characteristic relation 
	\begin{equation}
	\label{zeta-definition}
	\zeta(z) = \frac{1}{2}\dn(z)\dn(iK' - z). 
	\end{equation}
One solution $\psi = (p,q)^T$ of the spectral problem (\ref{spectral-app}) 
with $\mu = i \sqrt{P(\zeta)}$ is given by 
\begin{equation}
\label{p-q}
p(x) = e^{s(z) x}  \frac{H(x-z)}{\Theta(x) \Theta(z)}, \quad 
q(x,t) = e^{s(z) x}  \frac{\Theta(x-z)}{\Theta(x) H(z)},
\end{equation}
where 
\begin{align}
\label{s}
s(z) &= \frac{\Theta'(z)}{2 \Theta(z)} - \frac{\Theta'(iK'-z)}{2 \Theta(iK' - z)} - \frac{i \pi}{4K}, \\
\label{omega}
\mu(z) &= \frac{i}{8} \dn(z) \dn(iK' - z) \left[ \frac{1}{\sn^2(z)} - \frac{1}{\sn^2(iK'-z)} \right],
\end{align}
Another solution $\psi = (p^*,q^*)^T$ of the spectral problem (\ref{spectral-app}) with $\mu = -i \sqrt{P(\zeta)}$  is obtained from (\ref{p-q}) by replacing $z$ by $z' := iK' - z$. 
\end{proposition}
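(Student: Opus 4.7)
The plan is to verify the ansatz (\ref{p-q}) directly in the spectral problem (\ref{spectral-app}) by computing logarithmic derivatives, determining $\zeta(z)$ and $s(z)$ from behavior at a single reference point, and then upgrading pointwise identities to global identities via an elliptic-function argument.

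First, logarithmic differentiation of (\ref{p-q}) yields
\begin{equation*}
\frac{p'(x)}{p(x)} = s(z) + \frac{H'(x-z)}{H(x-z)} - \frac{\Theta'(x)}{\Theta(x)}, \quad
\frac{q'(x)}{q(x)} = s(z) + \frac{\Theta'(x-z)}{\Theta(x-z)} - \frac{\Theta'(x)}{\Theta(x)},
\end{equation*}
so the spectral system (\ref{spectral-app}) reduces to the scalar pair $p'/p - i\zeta = k\sn(x)\,q/p$ and $q'/q + i\zeta = k\sn(x)\,p/q$. Evaluating at $x=0$ where $\sn(0)=0$, and using that $\Theta$ is even with $\Theta'(0)=0$ while $H$ is odd, I obtain the two algebraic conditions
\begin{equation*}
s(z) - \frac{H'(z)}{H(z)} = i\zeta, \qquad s(z) - \frac{\Theta'(z)}{\Theta(z)} = -i\zeta.
\end{equation*}
Adding and subtracting gives closed-form expressions for $s(z)$ and $\zeta(z)$. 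Combined with $H(z) = \sqrt{k}\sn(z)\Theta(z)$, the subtraction yields $\zeta = \tfrac{i}{2}\cn(z)\dn(z)/\sn(z)$, equivalent to (\ref{zeta-definition}) via $\dn(iK'-z) = i\cn(z)/\sn(z)$. Logarithmic differentiation of the shift relation (\ref{H-Theta-connection}) gives $H'(z)/H(z) = -\Theta'(iK'-z)/\Theta(iK'-z) - i\pi/(2K)$, which converts the addition $s(z) = \tfrac{1}{2}[\Theta'(z)/\Theta(z) + H'(z)/H(z)]$ into the stated form (\ref{s}).

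To propagate validity from $x=0$ to all $x \in \mathbb{R}$, I take the product of the two scalar relations to obtain the identity
\begin{equation*}
\left[\frac{p'(x)}{p(x)} - i\zeta\right]\!\left[\frac{q'(x)}{q(x)} + i\zeta\right] = k^2 \sn^2(x),
\end{equation*}
in which both sides are elliptic in $x$ with periods $2K$ and $2iK'$. The poles of $p'/p - i\zeta$ and $q'/q + i\zeta$ arise at the zeros of $H(x-z)$, $\Theta(x-z)$ and $\Theta(x)$, with residues that can be read directly from the ansatz; matching pole orders, residues and the single computed value at $x=0$ then forces the identity throughout the fundamental rectangle by Liouville's theorem. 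Finally, the expression (\ref{omega}) for $\mu(z)$ is obtained by substituting (\ref{zeta-definition}) into $\mu^2 + P(\zeta) = 0$ using $\zeta_1^2 - \zeta_2^2 = k$ and $\zeta_1^2 + \zeta_2^2 = (1+k^2)/2$ and simplifying with Jacobi identities, while the second solution $(p^*,q^*)$ follows from the symmetry $z \mapsto iK' - z$ under which (\ref{zeta-definition}) is invariant but (\ref{omega}) flips sign.

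The main obstacle is the propagation step: the elliptic-function bookkeeping is delicate because three distinct theta quotients enter, each contributing its own pole structure, and verifying agreement of residues requires several applications of the standard addition theorems for $H$ and $\Theta$. A potentially cleaner alternative is to eliminate $q$ from (\ref{spectral-app}), obtaining a second-order scalar Lam\'e-type equation for $p$ alone, and then to verify the Hermite-style ansatz for $p$ against this single equation using classical Hermite-Lam\'e formulas; the companion $q$ is then recovered algebraically from the first equation and automatically takes the claimed form.
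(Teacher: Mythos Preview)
Your direct-verification approach differs from the paper's, which \emph{derives} (\ref{p-q}) rather than checking it: the paper reads the quotient $\rho=q/p$ off the algebraic system (\ref{time}) with $\mu$ already fixed, locates the zeros and poles of $\rho$ in the fundamental rectangle, rebuilds $\rho$ via the theta-function factorization (\ref{factorzation-f}), and then integrates $p'=(i\zeta+\phi\rho)p$. That final integration reduces to the addition formula (\ref{Z-addition}) for Jacobi's zeta function, which is precisely the identity your Liouville argument would also need, so the two routes ultimately converge on the same core computation.

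Your outline has two genuine gaps, however. First, the \emph{product} identity $\bigl(p'/p-i\zeta\bigr)\bigl(q'/q+i\zeta\bigr)=k^2\sn^2(x)$ is only one scalar relation, while (\ref{spectral-app}) imposes two independent ones; knowing the product equals $k^2\sn^2$ does not force the individual factors to equal $k\sn(x)\,q/p$ and $k\sn(x)\,p/q$. The repair is to run Liouville on each scalar equation separately: for $p'/p-i\zeta=k\sn(x)\,q/p$, both sides are elliptic of order two with simple poles only at $x\equiv z$ and $x\equiv iK'$, the residues match ($+1$ and $-1$ respectively, using $H(z)=\sqrt{k}\sn(z)\Theta(z)$), and the difference vanishes at $x=0$, hence identically. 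This is exactly the paper's verification of (\ref{p-parameter}) via (\ref{Z-addition}), so the product shortcut buys nothing. Second, you never tie the ansatz to the specific branch $\mu=i\sqrt{P(\zeta)}$: the relation $\mu^2+P(\zeta)=0$ determines $\mu$ only up to sign, and distinguishing the two solutions requires checking the algebraic constraint (\ref{time}), for instance by comparing $q/p$ against (\ref{rho-basic}) at one point. The paper avoids this because it \emph{starts} from (\ref{rho-basic}) with the sign of $\mu$ built in.
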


\begin{proof}
Recall that the quotient $\rho = p/q$ for the solution $\psi = (p,q)^T$ 
with $\mu = i \sqrt{P(\zeta)}$ in (\ref{mu}) is defined by either (\ref{rho1}) or (\ref{rho2}). Substituting $\zeta_3 = 0$ and canceling one power of $\zeta$ in  (\ref{rho1}) yields
\begin{equation}
\label{rho-snoidal}
\rho = -i \frac{\zeta^2 + \frac{1}{2} (\phi^2 - \zeta_1^2 - \zeta_2^2) - \sqrt{(\zeta^2 - \zeta_1^2)(\zeta^2-\zeta_2^2)}}{\zeta \phi - \frac{i}{2} \phi'}.
\end{equation}
By Proposition \ref{prop-1}, there exist exactly two zeros in $[-K,K] \times [-iK',iK']$ of the numerator in (\ref{rho-snoidal}) since $\phi(x) = k \sn(x)$. Similarly, there exist exactly two zeros in $[-K,K] \times [-iK',iK']$ of the denominator in (\ref{rho-snoidal}) rewritten as 
\begin{align}
\label{root-denom}
\frac{\phi'(x)}{\phi(x)} = \frac{\cn(x) \dn(x)}{\sn(x)} = -2i \zeta \in \mathbb{C}.
\end{align}
This also follows from Proposition \ref{prop-1} due to the fundamental relations for elliptic functions (\ref{fund-elliptic}). For a given $\zeta \in \mathbb{R}$, we denote one root of (\ref{root-denom}) by $z \in  [-K,K] \times [-iK',iK']$. With the help of (\ref{shift-by-iK}), this allows us to parameterize 
\begin{equation}
\label{zeta-param-1}
\zeta(z) = \frac{i}{2} \frac{\cn(z) \dn(z)}{\sn(z)} = \frac{1}{2} \dn(z) \dn(iK' - z),
\end{equation}
which recovers (\ref{zeta-definition}). Due to the symmetry, the other root of (\ref{root-denom}) is $iK'-z \in  [-K,K] \times [-iK',iK']$. Thus, $z, iK'-z$ are two roots of the denominator in (\ref{rho-snoidal}) in  $[-K,K] \times [-iK',iK']$.

To get roots of the numerator in (\ref{rho-snoidal}), we verify that 
\begin{align*}
(\zeta^2 - \zeta_1^2) (\zeta^2 - \zeta_2^2) &= \zeta^4 - \frac{1}{2} (1+k^2) \zeta^2 + \frac{1}{16} (1-k^2)^2 \\
&= \frac{\cn^4(z) \dn^4(z) + 2 (1+k^2) \sn^2(z) \cn^2(z) \dn^2(z) + (1-k^2)^2 \sn^4(z)}{16 \sn^4(z)} \\
&= \frac{(1 - k^2 \sn^4(z))^2}{16 \sn^4(z)}.
\end{align*}
This yields with the help of (\ref{shift-by-iK}) that 
\begin{align*}
\mu(z) &= i \zeta \sqrt{(\zeta^2 - \zeta_1^2) (\zeta^2 - \zeta_2^2)} \\
&= \frac{i}{8} \dn(z) \dn(iK' - z) \left[ \frac{1}{\sn^2(z)} - \frac{1}{\sn^2(iK'-z)} \right],
\end{align*}
which recovers (\ref{omega}). Moreover, we obtain 
\begin{align*}
& \quad \zeta^2 + \frac{1}{2} \left[ \phi(iK' \pm z)^2 - \zeta_1^2 - \zeta_2^2 \right] - \sqrt{(\zeta^2 - \zeta_1^2)(\zeta^2-\zeta_2^2)} \\
&= -\frac{\cn^2(z) \dn^2(z) - 1 + (1+k^2) \sn^2(z) - k^2 \sn^4(z)}{4 \sn^2(z)} \\
&= 0.
\end{align*}
Hence, $\pm (i K' - z) \in [-K,K] \times [-iK',iK']$ are two zeros of the numerator in (\ref{rho-snoidal}). 

Poles of both numerator and denominator in (\ref{rho-snoidal}) coincide with the poles of $\phi^2(x)$ and $\phi'(x)$, which are double poles at the 
same location $iK'$. Since both the numerator and the denominator 
are elliptic functions, have only two zeros in $[-K,K] \times [-iK',iK']$, and their double poles coincide at $i K'$, Proposition \ref{prop-3} implies that 
\begin{equation}
\label{rho-appendix}
\rho(x) = C_1 e^{-\frac{i \pi x}{2K}} \frac{H(x - iK' - z) \cancel{H(x - iK' + z)}}{H(x-z) \; \cancel{H(x-iK' + z)}} = C_2 \frac{\Theta(x-z)}{H(x-z)}, 
\end{equation}
where we have used (\ref{H-Theta-connection}) and introduced constants 
$C_1$ and 
$C_2 = (-i) C_1 e^{\frac{\pi K'}{4K}} e^{\frac{-i \pi z}{2K}}$.
The exponential factor $e^{-\frac{i \pi x}{2K}}$ in (\ref{rho-appendix}) is included due to the anti-periodicity of $\phi(x) = k \sn(x)$ with respect to the period $2K$ of the elliptic function $\phi^2$. The constant $C_2$ can be uniquely obtained 
from $x = 0$ as 
\begin{align*}
C_2 &= -\frac{H(z)}{\Theta(z)} \rho(0) \\
&= -\frac{2 H(z)}{k \Theta(z)} \left( \zeta^2 - \frac{1}{4} (1+k^2) - \frac{1 - k^2 \sn^2(z)}{4 \sn^2(z)} \right) \\
&= \frac{H(z)}{k \sn^2(z) \Theta(z)} \\
&= \frac{\Theta(z)}{H(z)},
\end{align*}
where we have used $H(z) = \sqrt{k} \sn(z) \Theta(z)$. 
This recovers the quotient in (\ref{p-q}) as 
\begin{equation}
\label{rho-paramet}
\rho(x) = \frac{ \Theta(z) \Theta(x-z)}{H(z) H(x-z)}.
\end{equation}

It remains to integrate (\ref{p-eq}) for $p(x)$ in order to verify the expression (\ref{s}) for $s(z)$ and the representation (\ref{p-q}) for $p$. 
We rewrite (\ref{p-eq}) with (\ref{zeta-param-1}) and (\ref{rho-paramet}) as
\begin{equation}
\label{p-parameter}
p'(x) = \left( -\frac{\cn(z) \dn(z)}{2 \sn(z)} + \frac{\sn(x)}{\sn(z) \sn(x-z)} \right) p(x),
\end{equation}
where we have used $H(x) = \sqrt{k} \sn(x) \Theta(x)$. Since 
$$
\Theta(z + iK') = i e^{\frac{\pi K'}{4K}} e^{-\frac{i \pi z}{2K}} H(z),
$$
we obtain 
\begin{align*}
\frac{\Theta'(z)}{2 \Theta(z)} - \frac{\Theta'(iK - z)}{2 \Theta(iK' - z)} - \frac{i \pi}{4K} = \frac{\Theta'(z)}{2 \Theta(z)} + \frac{H'(z)}{2 H(z)} = -\frac{\cn(z) \dn(z)}{2 \sn(z)} + \frac{H'(z)}{H(z)}.
\end{align*}
Hence (\ref{p-parameter}) is satisfied with 
$$
p(x) = e^{s(z) x} \frac{H(x-z)}{\Theta(x) \Theta(z)},
$$
where $s(z)$ is given by (\ref{s}) if and only if the following identity is true:
\begin{equation*}
\frac{H'(x-z)}{H(x-z)} - \frac{\Theta'(x)}{\Theta(x)} = \frac{\sn(x)}{\sn(z) \sn(x-z)} - \frac{H'(z)}{H(z)}.
\end{equation*}
This identity can be rewritten with the help of $H(x) = \sqrt{k} \sn(x) \Theta(x)$ in the equivalent form:
\begin{align*}
\frac{\Theta'(x-z)}{\Theta(x-z)} - \frac{\Theta'(x)}{\Theta(x)} + \frac{\Theta'(z)}{\Theta(z)} &= \frac{\sn(x)}{\sn(z) \sn(x-z)} - \frac{\cn(x-z) \dn(x-z)}{\sn(x-z)} - \frac{\cn(z) \dn(z)}{\sn(z)} \\
&= \frac{\sn(x) - \sn(x-z) \cn(z) \dn(z) - \sn(z) \cn(x-z) \dn(x-z)}{\sn(z) \sn(x-z)} \\
&= k^2 \sn(x) \sn(z) \sn(x-z),
\end{align*}
where we have used (\ref{sn-addition}). This relation 
is satisfied due to the addition formulas (\ref{Z-addition}).
The proof of (\ref{zeta-definition}), (\ref{p-q}), (\ref{s}), and (\ref{omega}) is complete. The other solution for $\mu = -i \sqrt{P(\zeta)}$ is obtained from  (\ref{zeta-definition}), (\ref{p-q}), (\ref{s}), and (\ref{omega}) by replacing $z$ with $z' := i K' - z$, which is another root of (\ref{root-denom}) in  $[-K,K] \times [-iK',iK']$.
\end{proof}

\begin{remark}
	The characteristic equation (\ref{zeta-definition}) is analyzed in \cite[Proposition 1]{MP24} with precise behavior of $z \in [-K,K] \times [-iK',iK']$ for $\zeta \in \R$ shown in Figure 5 in \cite{MP24}. In particular, if $\zeta \in (\zeta_2,\zeta_1)$ is in the bandgap of the Lax spectrum, see Figure \ref{fig-3} (left), then $z - \frac{i K'}{2} \in (0,K)$.
\end{remark}

{\bf Acknowledgment:} The first author is supported by the S\~ao Paulo Research Foundation (FAPESP), Brazil, Process Number 2022/14833-3. The second author 
is supported by the NSERC Discovery grant. This study was conducted during the research visit of L. K. Arruda at McMaster University (2023-2024) and during the stay of D. E. Pelinovsky at Newcastle, UK in the Isaac Newton Institute Programme ``Emergent phenomena in nonlinear dispersive waves" (July-August, 2024).

\end{document}